\newcommand{\cmark}{\ding{51}}%
\newcommand{\xmark}{\ding{55}}%
\newcommand{\db}{\textsf{DB}}
\newcommand{\resp}{\textsf{rsp}}
\newcommand{\responses}{\mathcal{R}}
\newcommand{\queries}{\mathcal{Q}}
\newcommand{\query}{q}
\newcommand{\rid}{\textsf{id}}
\newcommand{\rids}{\mathcal{I}}
\newcommand{\values}{\mathcal{V}}
\newcommand{\V}{\mathcal{V}}
\newcommand{\selector}{\textsf{Selector}}
\newcommand{\transl}{\textsf{Translator}}
\newcommand{\solver}{\textsf{Solver}}
\newcommand{\multiset}{M}
\newcommand{\support}{\textsf{T}}
\newcommand{\rs}[1]{\texttt{RS}{\scaleto{#1}{5pt}}}
\DeclareMathOperator*{\argmax}{arg\,max}
\DeclarePairedDelimiter\ceil{\lceil}{\rceil}
\newcommand{\qdist}{QD}
\newcommand{\rdist}{RD}
\newcommand{\lsetup}{\mathcal{L}_{Setup}}
\newcommand{\lquery}{\mathcal{L}_{Query}}
\newcommand{\ridp}{\textsf{rid}}
\newcommand{\qeq}{\textsf{qeq}}
\newcommand{\trlen}{\textsf{trlen}}
\newcommand{\edb}{\textsf{EDB}}
\newcommand{\Trpdr}{\mathsf{Trpdr}}
\newcommand{\Search}{\mathsf{Search}}
\newcommand{\Setup}{\mathsf{Setup}}
\newcommand{\sk}{\mathsf{sk}}
\newtheorem{remark}{Remark}
\newcommand{\framework}{\textsf{LAMa}\ }
\begin{document}

\title{How Query Distribution Knowledge Breaks Multidimensional Encrypted Range Queries, With Guarantees}

\author{Daniel Blackley}
\authornote{Equal contribution, alphabetical ordering.}
\affiliation{
 \institution{George Mason University}
 \city{Fairfax, VA, USA}
 \country{}
}
\email{dblackle@gmu.edu}

\author{Nathaniel Moyer}
\authornotemark[1]
\affiliation{
 \institution{George Mason University}
 \city{Fairfax, VA, USA}
 \country{}
}
\email{nmoyer5@gmu.edu}

\author{Charalampos Papamanthou}
\affiliation{
  \institution{Yale University}
   \city{New Haven, CT, USA}
    \country{}
}
\email{charalampos.papamanthou@yale.edu}

\author{Evgenios M. Kornaropoulos}
\affiliation{
 \institution{George Mason University}
 \city{Fairfax, VA, USA}
 \country{}
}
\email{evgenios@gmu.edu}

\renewcommand{\shortauthors}{Blackley, Moyer, Papamanthou, Kornaropoulos}

\begin{abstract}
In this work, we show how knowledge of the query distribution, combined with access-pattern leakage, is sufficient to break multi-dimensional encrypted range queries, with provable guarantees. Prior attacks either recover only data topology without concrete coordinates for plaintexts (and as a result require post-hoc transformations), or assume adversarial control over database content; a strong and unrealistic threat model. 
Given knowledge of the query distribution, we revisit \emph{frequency matching}, one of the earliest cryptanalytic ideas in this area, and push it to its limits in the multi-dimensional regime through \framework (\underline{L}eakage-\underline{A}buse via \underline{Ma}tching). \framework is a three-component framework that reconstructs plaintext coordinates in arbitrary dimensions without post-hoc transformations or data injection/poisoning. We complement \framework with the first rigorous guarantees for multi-dimensional frequency-matching cryptanalysis, covering its query complexity, optimal parameterization, and worst-case reconstruction quality. Experiments on real-world data show that \framework consistently outperforms the state of the art.
\end{abstract}

\begin{CCSXML}
<ccs2012>
   <concept>
<concept_id>10002978.10002979.10002981.10011745</concept_id>
       <concept_desc>Security and privacy~Cryptanalysis and other attacks</concept_desc>
       <concept_significance>500</concept_significance>
   </concept>
</ccs2012>
\end{CCSXML}

\ccsdesc[500]{Security and privacy~Cryptanalysis and other attacks}

\keywords{Encrypted Databases, Attacks, Range Queries}

\settopmatter{printacmref=false, printccs=false, printfolios=false}
\renewcommand\footnotetextcopyrightpermission[1]{}  

\maketitle

\pagestyle{empty}

\section{Introduction}

Searchable Encryption~\cite{DBLP:conf/sp/SongWP00} is a cryptographic primitive that enables efficient search over encrypted data by revealing information about the pattern of querying/accessing, known as a \emph{leakage profile}. 
The first SE scheme was introduced by Curtmola \emph{et al.}~\cite{DBLP:conf/ccs/CurtmolaGKO06}. Since then, the community has produced research covering topics such as dynamic schemes~\cite{DBLP:journals/corr/abs-2201-05006,DBLP:conf/ccs/KamaraPR12,DBLP:conf/fc/KamaraP13,DBLP:conf/ndss/CashJJJKRS14}, geometric queries~\cite{Demertzis:2016:PPR:2882903.2882911,DBLP:conf/esorics/FaberJKNRS15,DBLP:journals/popets/BoldyrevaT21,DBLP:journals/tods/DemertzisPPDGP18}, locality-aware schemes~\cite{Asharov:2016:SSE:2897518.2897562,Demertzis:2017:FSE:3035918.3064057,DBLP:conf/eurocrypt/CashT14,10.1007/978-3-319-96884-1_13}, leakage suppression~\cite{DBLP:conf/eurocrypt/GeorgeKM21,DBLP:conf/crypto/KamaraMO18,DBLP:journals/popets/Ando22}, and quantifying the privacy of SE constructions~\cite{DBLP:conf/ccs/KornaropoulosMP22,DBLP:journals/popets/BoldyrevaGW24}. 
Recently, there has been a surge in leakage attacks that  reconstruct the plaintext of databases~\cite{cryptoeprint:2019:395,DBLP:conf/ccs/GrubbsLMP18,10.1109/SP.2019.00030,10.1145/3319535.3363210,state-of-the-uniform,k-nn-attack,fmacrst-fdrtd-20,DBLP:conf/ccs/KellarisKNO16,cryptoeprint:2019:1224,cryptoeprint:2019:1175,DBLP:conf/uss/ZhangKP16,response-hiding,oya2020hiding,cryptoeprint:2021:1035,DBLP:conf/ccs/MarkatouFTS21,DBLP:journals/popets/KamaraKMDPT24}. 

    \begin{figure}[t!]
        \subfloat[Original Plaintext]{%
            \includegraphics[width=.48\linewidth,trim=4cm 3cm 1cm 2cm,clip]{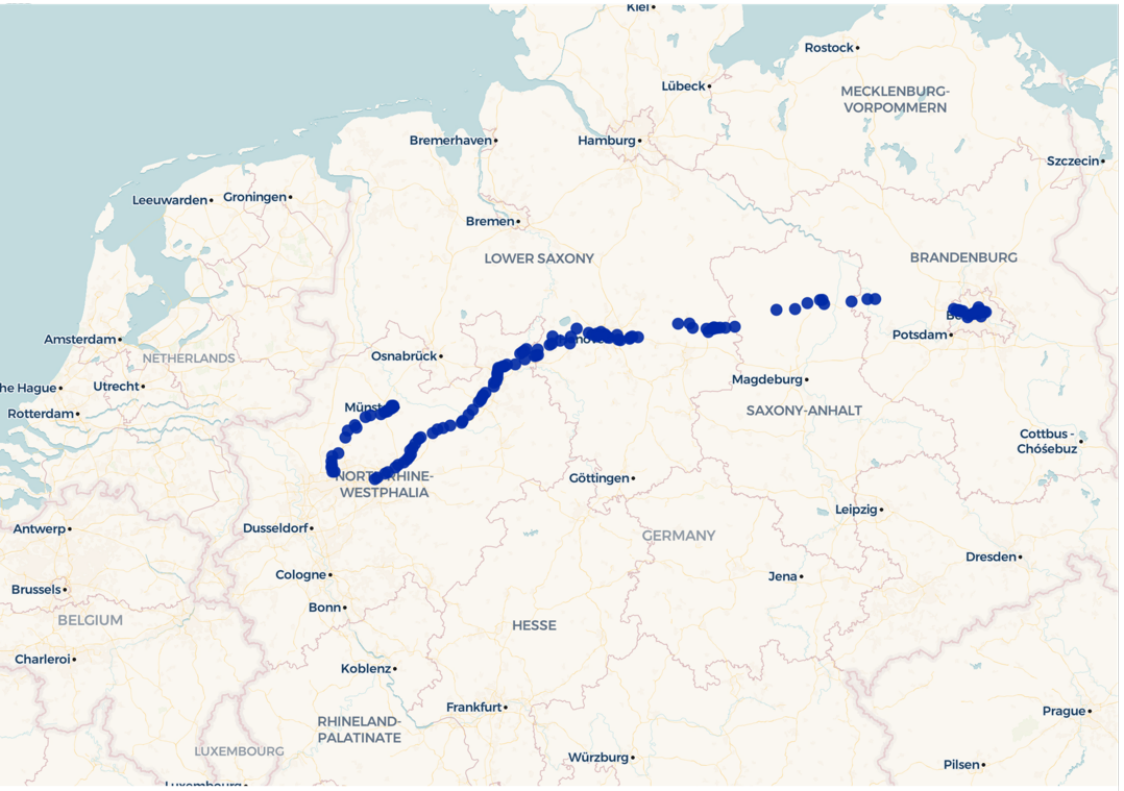}%
            \label{subfig:a}%
        }\hfill
        \subfloat[\texttt{Even Less}~\cite{even_less}, MSE=2813]{%
            \includegraphics[width=.48\linewidth,trim=4cm 3cm 1cm 2cm,clip]{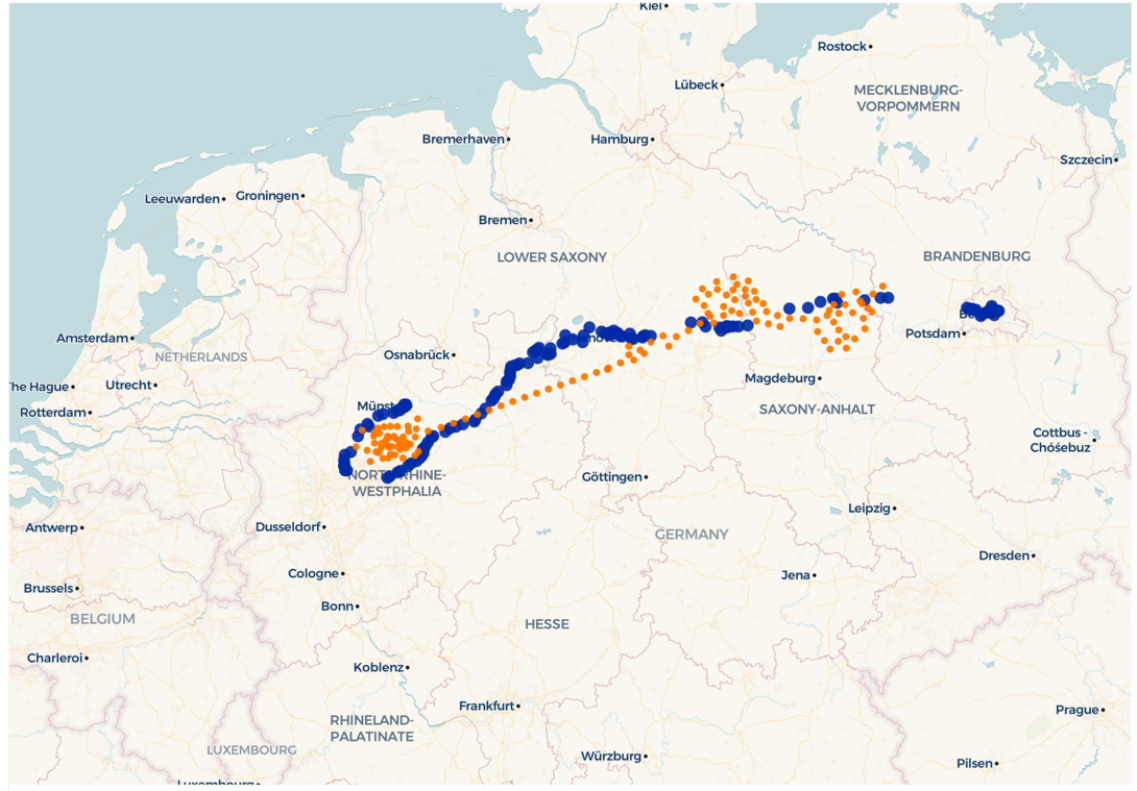}%
            \label{subfig:b}%
        }\\
        \subfloat[\texttt{Remin}~\cite{remin}, MSE=1892]{%
            \includegraphics[width=.48\linewidth,trim=4cm 3cm 1cm 2cm,clip]{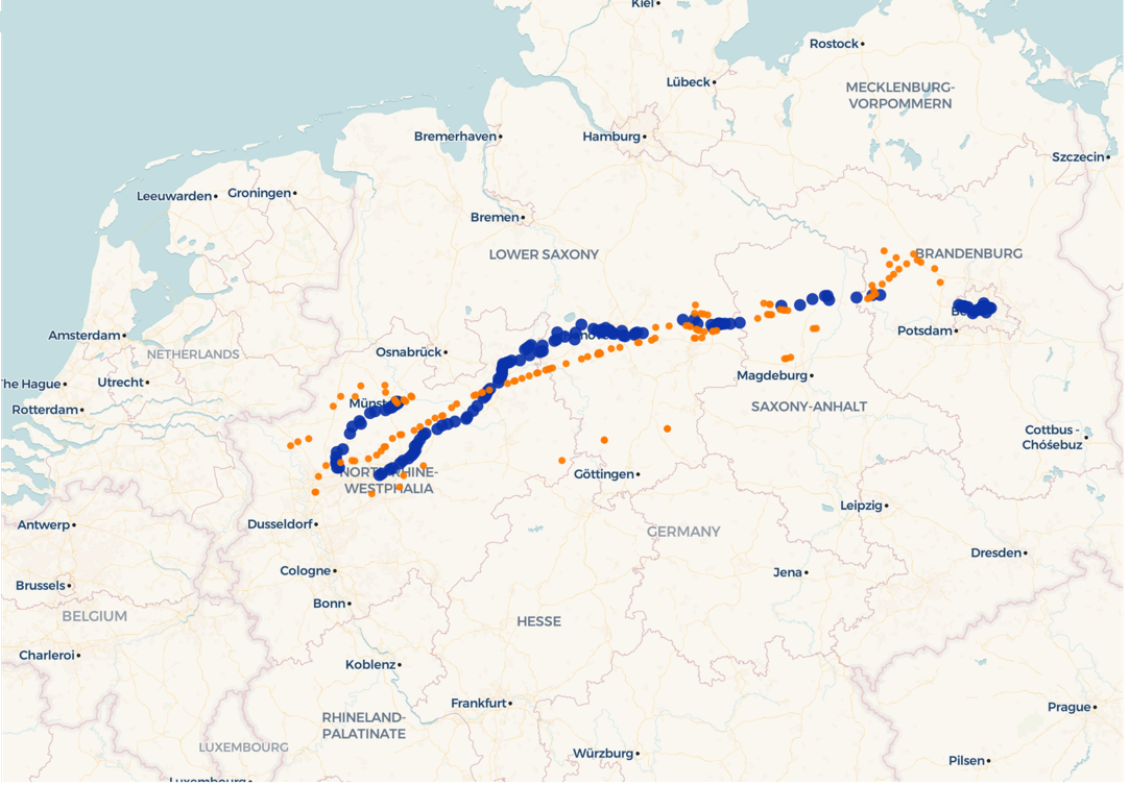}%
            \label{subfig:c}%
        }\hfill
        \subfloat[This Work, MSE=0]{%
            \includegraphics[width=.48\linewidth,trim=4cm 3cm 1cm 2cm,clip]{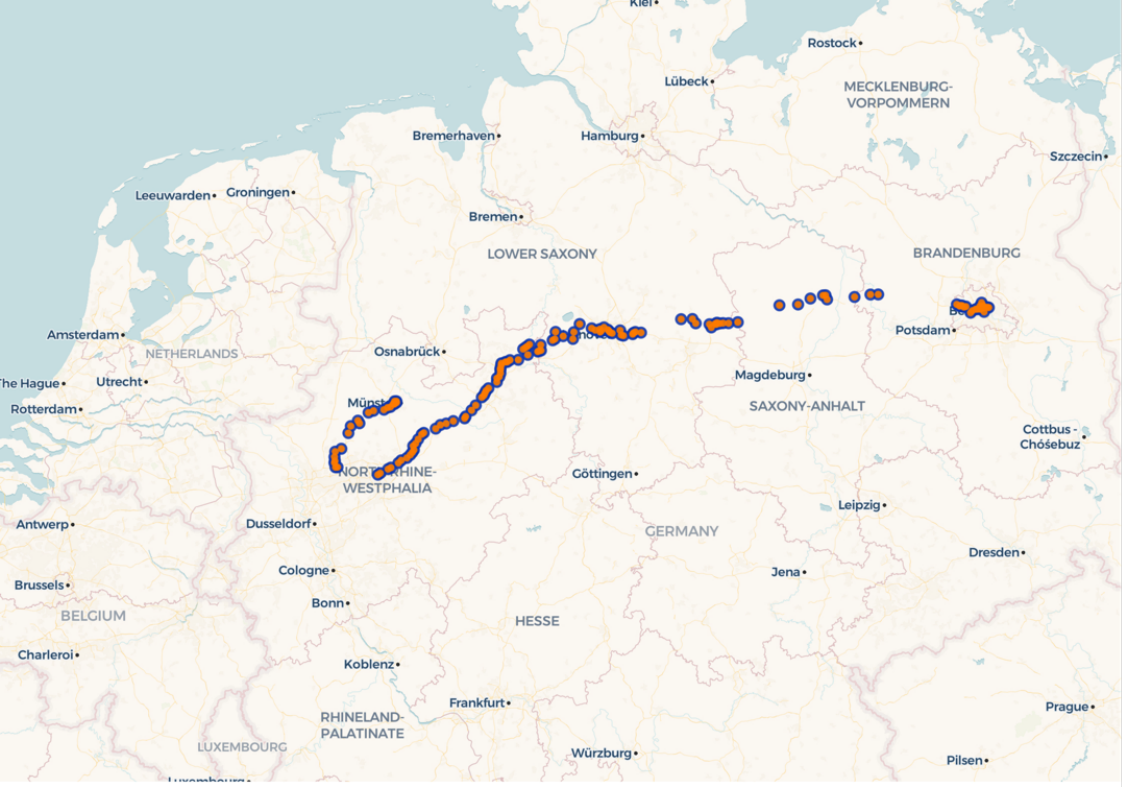}%
            \label{subfig:d}%
        }
        \caption{Comparison of state-of-the-art multidimensional leakage attacks~\cite{even_less,remin} and \framework on \texttt{Spitz} dataset. Plaintext values in blue, reconstructed in orange, under each attack's most favorable leakage. Query distribution is Gaussian.}
        \label{fig:SptizIntro}
    \end{figure}

\begin{table*}[t]
\centering
\caption{Comparison of leakage attacks on encrypted range queries. Formal guarantees focus on high-dimensional settings.}
\label{tab:comparison}
\resizebox{\textwidth}{!}{%
\begin{tabular}{cc l c cc cc cc c}
\toprule
& & & & \multicolumn{2}{c}{\textbf{Leakage}} & \multicolumn{2}{c}{\textbf{Assumptions}} & \multicolumn{2}{c}{\textbf{Formal Guarantees}} & \\
\cmidrule(lr){5-6} \cmidrule(lr){7-8} \cmidrule(lr){9-10}
& \textbf{Attacks} & \textbf{Query Type} & \textbf{All Databases} & \textbf{Search Pat.} & \textbf{Access Pat.} & \textbf{Query Dist.\ Known} & \textbf{Injected Plaintext} & \textbf{Observed Input} & \textbf{Output} & \textbf{Output Format} \\
\midrule
\midrule
\multirow{4}{*}{\rotatebox[origin=c]{90}{\textbf{1-D}}}
& Kellaris \emph{et al.}~\cite{DBLP:conf/ccs/KellarisKNO16}             & 1D Range & \cmark  & \xmark & \cmark & \cmark (Uniform)   & \xmark & --- & --- & Coordinates \\
& Lacharit\'e \emph{et al.}~\cite{DBLP:conf/sp/LachariteMP18}         & 1D Range & \xmark (Dense)   & \xmark & \cmark & \xmark  & \xmark & --- & --- & Coordinates \\
& Grubbs \emph{et al.}~\cite{10.1109/SP.2019.00030}                 & 1D Range & \cmark  & \xmark & \cmark & \cmark (Uniform)   & \xmark & --- & --- & Coordinates \\
& Kornaropoulos \emph{et al.}~\cite{state-of-the-uniform}   & 1D Range & \cmark  & \cmark & \cmark & \xmark  & \xmark & --- & --- & Coordinates \\
\midrule
\midrule
\multirow{3}{*}{\rotatebox[origin=c]{90}{\textbf{2-D}}}
& Falzon \emph{et al.}~\cite{fmacrst-fdrtd-20}                 & 2D Range & \cmark   & \cmark & \cmark & \xmark  & \xmark & --- & --- & Coordinates \\
& Falzon \emph{et al.}~\cite{fmacrst-fdrtd-20}                 & 2D Range & \cmark   & \xmark & \cmark & \cmark (Any)  & \xmark & --- & --- & Coordinates\\
& Markatou \emph{et al.}~\cite{10.1145/3460120.3484552}           & 2D Range & \cmark   & \cmark & \cmark & \xmark  & \xmark & --- & --- & Coordinates \\
\midrule
\midrule
\multirow{4}{*}{\rotatebox[origin=c]{90}{\textbf{High-D}}}
& Markatou \emph{et al.}~\cite{even_less} -- \texttt{Even Less}           & $d$D Range & \xmark (Dense) & \xmark & \cmark & \xmark  & \xmark & \xmark (Uniform QD) & \xmark & Topology \\
& Li \emph{et al.}~\cite{remin} -- \texttt{Remin}                               & $d$D Range & \cmark & \cmark & \cmark  & \xmark & \xmark & \xmark  & \xmark & Topology \\
& Li \emph{et al.}~\cite{remin}   -- \texttt{Remin+}                             & $d$D Range & \cmark & \cmark & \cmark  & \xmark & \cmark & \xmark  & \xmark & Coordinates \\
\cdashline{2-11}
& \textbf{This Work}  -- \framework                              & $d$D Range & \cmark & \xmark & \cmark  & \cmark (Any) & \xmark & \cmark & \cmark & Coordinates \\
\bottomrule
\bottomrule
\end{tabular}%
}
\end{table*}

\textbf{A History of Attacks on Encrypted Ranges.}
Leakage-abuse attacks on encrypted range queries have advanced significantly over the past 10 years~\cite{DBLP:conf/ccs/KellarisKNO16,DBLP:conf/sp/LachariteMP18,10.1109/SP.2019.00030,state-of-the-uniform,fmacrst-fdrtd-20,10.1145/3460120.3484552,even_less,remin}; Table~\ref{tab:comparison} provides a comprehensive overview. These attacks draw on a diverse set of techniques, spanning probabilistic analysis, statistical learning theory, computational geometry, algorithm design, and graph drawing. The evolution of this area can be traced through three distinct phases: the \emph{one-dimensional} era~\cite{DBLP:conf/ccs/KellarisKNO16,DBLP:conf/sp/LachariteMP18,10.1109/SP.2019.00030,state-of-the-uniform,response-hiding,cryptoeprint:2019:395}, which established the foundational attack paradigms; the \emph{two-dimensional} era~\cite{fmacrst-fdrtd-20,10.1145/3460120.3484552}, which revealed new geometric challenges; and the most realistic and demanding frontier, the \emph{multi-dimensional} era~\cite{DBLP:journals/popets/MarkatouFET23,even_less,remin}, where the curse of dimensionality makes reconstruction harder.

In this work, we tackle the most demanding setting of multi-dimensional encrypted range queries by revisiting one of the very first cryptanalytic approaches (used in~\cite{DBLP:conf/ccs/KellarisKNO16}, which started this research field), namely, \emph{knowledge of the query distribution}. 
Our attack is conceptually simple (compared to the state of the art~\cite{even_less,remin}) and, as demonstrated in Figure~\ref{fig:SptizIntro}, outperforms\footnote{For this comparison, we granted~\cite{even_less,remin} their most favorable observation scenario, namely access to all possible responses, while our attack's favorable scenario comprises accurate estimates of the retrieval frequency for each tuple of encrypted records.} the competition while providing rigorous guarantees (for the very first time) about (1) its desired number of observed queries, (2) its computation, and (3) the worst-case quality of its output.

\textbf{The First Attack with Query Distribution Knowledge.}
Several leakage-abuse attacks in the one- and two-dimensional case assume knowledge of the underlying query distribution~\cite{DBLP:conf/ccs/KellarisKNO16,10.1109/SP.2019.00030,fmacrst-fdrtd-20}. The core idea is to observe the retrieval rates of encrypted records and infer their plaintext values by comparing these rates to the expected retrieval frequencies induced by the known distribution.
Kellaris, Kollios, Nissim, and O'Neill introduced the first attack of this kind~\cite{DBLP:conf/ccs/KellarisKNO16}, establishing a paradigm that has influenced subsequent works in this area. Their adversarial strategy relies on \emph{frequency matching}: the attacker compares the empirically observed frequency with which each encrypted record is retrieved (derived from access-pattern leakage) against the theoretical probability that a given plaintext value would be retrieved (derived from the known query distribution). 
Inspired by this classical take, we push frequency matching to its limits in the multi-dimensional regime.

\textbf{Limitations of High-Dimensional Attacks.} There are several limitations with current high-dimensional attacks, i.e.,~\cite{even_less,remin}.

\emph{Output is a Topology, Not Reconstrcuted Coordinates.} 
Attacks \texttt{Even Less}~\cite{even_less} and \texttt{Remin}~\cite{remin} output a graph embedding as the main artifact of an attack. This graph embedding conveys only the topology (i.e., plaintext $r_1$ neighbors with plaintext $r_4$) without committing the reconstructed plaintexts to a concrete coordinate in the Euclidean space (see Column ``\emph{Output Format}'' in Table~\ref{tab:comparison}). 
It is worth noting that, despite the elegance of this graph-drawing approach, no prior leakage attack has focused on topology\footnote{The Approximate Order Reconstruction attack of~\cite{10.1109/SP.2019.00030} can be viewed as a form of topological reconstruction; however, being restricted to one dimension, the setting is considerably simpler than the multi-dimensional case.} as a reconstruction target in its own right; in part beccause topology has been regarded as \emph{an intermediate step} on the path to full coordinate reconstruction.

To bridge the gap between topological output and coordinate reconstruction, both~\cite{even_less,remin} apply post-attack transformations to the graph embedding, mapping it onto coordinates that can be directly compared against the original plaintext values. 
We stress that the attacker cannot perform these transformations in practice, since doing so requires access to the original plaintext coordinates, precisely the information the attack seeks to recover in the first place.
Specifically, the transformations are ($i$) a rotation, ($ii$) scaling, and ($iii$) a translation. 
To produce subfigures (b) and (c) in Figure~\ref{fig:SptizIntro}, we manually applied the combination of rotation, scaling, and translation that minimizes the Mean Squared Error (MSE) between the embedding's assigned coordinates and the original plaintext values. 

The issue with translating a topology to coordinates is that \emph{infinitely many coordinate assignments} are consistent with any given topology. 
That is, in a realistic scenario, for the attacker to output a reconstruction, they have to \emph{guess} a rotation, a scaling, and a translation, and hope that these guesses bring the resulting coordinates of the output embedding close enough to the original plaintext.

\emph{Unrealistic Assumptions.} Interestingly, Li et al.~\cite{remin} propose a variant called \texttt{Remin+} that sidesteps the transformation guessing problem described above, but this requires a much stronger assumption. 
Their approach requires the attacker to inject plaintext records with adversarially chosen values into the victim's database. 
These injected records serve as anchor points, allowing the attacker to pin the embedding to concrete coordinates without needing to guess the transformation. 
This is a significantly stronger assumption that limits the practical applicability of their approach.  

A subtle but critical issue with the injection technique proposed in~\cite{remin} is that the attacker must be able to identify \emph{which ciphertext} in the database corresponds to the injected plaintext; without this link, the anchor points cannot serve their purpose. 
Previous work on injection-based attacks for keyword search~\cite{DBLP:conf/uss/ZhangKP16} has argued that this identification is feasible in the context of keyword search through three mechanisms: (1) in a \emph{dynamic} searchable encryption scheme for keywords, the attacker can observe the (injected plaintext's)  ciphertext being created at the time of injection; (2) the attacker can force the victim to issue a keyword query that reveals which ciphertext corresponds to the injected record; and (3) the attacker can craft an injected file with a unique size, so that response volume alone reveals the anchor point's identity. 
However, \emph{none} of these arguments transfers to the setting of high-dimensional encrypted range queries. 
Mechanism~(1) fails because no efficient dynamic SE scheme exists for range queries in high dimensions. 
Mechanism~(2) is unrealistic, as it requires persuading the victim to issue a specific, adversarially chosen multi-dimensional range query. 
Mechanism~(3) does not apply because injected records are coordinate tuples with fixed size, not documents whose length the attacker can control.

The above discussion reveals a major gap in the literature: 

\begin{quote}
\emph{No existing cryptanalysis of high-dimensional encrypted range queries simultaneously produces coordinate-level reconstruction \underline{and} operates under a reasonable threat model, i.e., no chosen-plaintext injection into $\db$.}
\end{quote}

\noindent\textbf{Our Contributions.} We make the following contributions:

\begin{itemize}[leftmargin=*]
\item \textbf{Generalizing Frequency Matching with Guarantees.} Frequency matching has been used for one-dimensional range queries~\cite{DBLP:conf/ccs/KellarisKNO16,10.1109/SP.2019.00030}; we present its most general formulation as \framework (\underline{\bf L}eakage-\underline{\bf A}buse via \underline{\bf Ma}tching, see~\cref{sec:framework}), a three-component architecture for multi-dimensional databases in which the attacker observes access-pattern leakage and knows the query distribution (an assumption also used in prior cryptanalyses~\cite{DBLP:conf/ccs/KellarisKNO16,10.1109/SP.2019.00030,fmacrst-fdrtd-20}). We further establish rigorous guarantees (see~\cref{sec:guarantees}) on ($i$) the query complexity for an $(\epsilon,\delta)$-approximation of retrieval frequencies, ($ii$) an optimal parameterization of \framework that matches the information-theoretic reconstruction limit, and ($iii$) output quality reconstruction bounds relative to that limit---none of which are offered by any prior multi-dimensional attack on ranges~\cite{even_less,remin}.
\item \textbf{Query Distributions that Make Reconstruction Hard.} To understand the limits of this threat model, we study which query distributions maximize reconstruction uncertainty. 
We show that the gold standard, a distribution implying a reconstruction space spanning all possible databases, is unachievable. 
As a positive result, we construct a query distribution under which the reconstruction space contains every database that preserves the pairwise $L_1$ distances of $\db$, thereby providing the first purposefully structured and formally characterized reconstruction space.
\item \textbf{Evaluation.} We evaluate \framework on the same datasets, query distributions, and codebases as prior work~\cite{even_less,remin}, under both ideal and realistic observation scenarios. In the ideal case, \framework generates a very small set of solutions that contain the perfect reconstruction (MSE$=0$) across all datasets, while the state-of-the-art attacks~\cite{even_less,remin} generate tens of thousands of solutions that incur MSE orders of magnitude higher even under their most favorable leakage assumptions. In the realistic case, \framework's reconstruction error is smaller than~\cite{even_less,remin} and decreases monotonically with the number of observed queries, while the baselines show no such improvement.
\end{itemize}

\section{Background and Preliminaries}

\textbf{Notation.}
For any integer $y$, let $[y]$ denote the set $\{1,2,\ldots,y\}$. 
Let $[y]^k$ denote the $k$-fold Cartesian product, i.e., $\{1,2,\ldots,y\} \times  \ldots \times \{1,2,\ldots,y\}$ for $k$ sets. 
Let $a=(a_1,\ldots,a_k)$ and $b=(b_1,\ldots,b_k)$ be points in the $k$\emph{-dimensional plaintext domain} $\values=[N]^k$, and $a_i \leq b_i$ for all dimensions $i \in [k]$, then we say that $b$ \emph{dominates} $a$ (or equivalently, $a$ is dominated by $b$), denoted $a \preceq b$. 
To be consistent with previous works, we refer to points in $\values=[N]^k$ as values, even though they are $k$-dimensional vectors.
We define the \emph{distance} between two values $a,b \in \V$, denoted $dist(a,b)$, as the $L^1$ (or Manhattan distance):
    $dist(a,b)=\sum_{i=1}^k |a_i-b_i|.$

\textbf{Structured Encryption for Range Queries.}
Let $\values=[N]^k$ be the \emph{domain of values}, where $N$ and $k$ are positive integers. 
Let $\rids$ be the set of identifiers of the database used to uniquely identify encrypted records. 
A \emph{database} $\db=\{(\rid,v)\mid\rid \in \rids, v \in [N]^k\}$ is a collection of identifier-value pairs.
A dimension of $\values$ can be seen as a database attribute, e.g., ``\texttt{AGE}'' and each identifier as an encrypted medical file of a patient. 
We use \emph{record} and \emph{identifier} interchangeably, as each encrypted record has a unique identifier $\rid \in \rids$.
We denote the value $v$ of $\rid$ as $\db(\rid)$.

A \emph{structured encryption scheme for range queries (R-STE)} is a  primitive for encrypted search.
An R-STE scheme allows the client to encrypt and outsource $\db$ to a server and perform queries.
A range query $\query$ in  $\values$ can be seen as a \emph{hyperrectangle} in  $[N]^k$.
That is, a query $\query=[a,b]$ is defined by two vertices of the corresponding hyperrectangle, i.e., the vertex $a\in \values$ that is dominated by all other vertices of the hyperrectangle and the vertex $b\in \values$ that dominates all other vertices of the hyperrectangle. 
The universe of all queries is denoted as $\queries$.
We say that a query $\query=[a,b]$ \emph{covers} value $v$ if $a \preceq v \preceq b$.
In R-STE's query phase, the client issues an encrypted range query to the server.
The server responds with the identifiers whose values lie in the range specified by the query.
The set of returned identifiers is called a \emph{response}, denoted $\resp$, and the universe of responses $\responses$ is the power set of identifiers $\responses = \mathcal{P}(\rids)$.

\emph{Algorithms.} We define a static R-STE scheme consisting of the following algorithms: $\Setup$, which takes the security parameter $\lambda$ and $\db$ and outputs the secret key $\sk$ to the client and the encrypted database $\edb$ to the server; $\Trpdr$, which takes the secret key $\sk$ and the query $\query$ from the client and outputs a token (i.e., a trapdoor) for query $\query$ to the client; and $\Search$, which takes a token $t$ from the client and encrypted database $\edb$ from the server and outputs a set of identifiers $\resp \subseteq \rids$ to the client. 
More formally:

\begin{definition}
    Let $\Sigma=(\Setup,\Trpdr,\Search)$ be an R-STE scheme and let $\db$ be database over $\rids$ and $\values$.
    We say that $\Sigma$ is \emph{correct} if, for every $\query=(a,b)$ in $\queries$, after the execution of $(\sk,\edb)\leftarrow\Sigma.\Setup(\lambda,\db)$, $t\leftarrow\Sigma.\Trpdr(\sk,\query)$, and $\resp\leftarrow\Sigma.\Search(t,\edb)$, the following holds for $\resp$:
    $\resp = \{\rid \mid a \preceq \db(\rid) \preceq b \}$.
\end{definition}

For simplicity, we make one further modification: that every value $v \in \values$ is associated with at most one identifier $\rid \in \rids$. 
This is a standard simplification adopted for clarity of exposition. 
The extension to the general case, where multiple records may share the same value, is straightforward and well-understood; see the ``collocated-record identification procedure'' described in Algorithms~2 and~3 of~\cite{even_less}, which reduces the general case to the one-record-per-value setting as a preprocessing step.

\textbf{Leakage Profile.}
The information revealed to the server while running R-STE algorithms is defined as a set of functions over the plaintext data called \emph{leakage functions}.
Taken together, these functions make up the \emph{leakage profile} $\Lambda$ of a scheme, and are typically categorized as either \emph{setup leakage} $\lsetup$ or \emph{query leakage} $\lquery$, where $\Lambda=(\lsetup,\lquery)$.

Following the notation in~\cite{DBLP:conf/crypto/KamaraMO18}, we define three leakage functions relevant to our analysis.
The \emph{total response-length pattern} $\trlen$ takes $\db$ and outputs the total number of identifiers in all responses returned for queries $\query \in \queries$.
The \emph{response-identity pattern} $\ridp(\query)$ (often called the \emph{access pattern}) reveals, for each execution of $\Search$, the identifiers contained in the response.
The \emph{query-equality pattern} $\qeq$ (often called the \emph{search pattern}) takes an array of queries $[\query_1,\ldots,\query_M]$ and outputs an $M\times M$ binary matrix where $M[i,j]=1$ if $\query_i = \query_j$, and $M[i,j]=0$ otherwise.
A common leakage profile, both with respect to earlier constructions~\cite{DBLP:conf/esorics/FaberJKNRS15,Demertzis:2016:PPR:2882903.2882911} and cryptanalytic efforts~\cite{state-of-the-uniform}, for R-STE schemes is $\Lambda=\{\trlen,(\qeq,\ridp)\}$.

In this work, we focus on a less revealing leakage profile (i.e., a more challenging scenario for the attacker), $\Lambda=\{\trlen,\ridp\}$, to demonstrate the effectiveness of frequency matching even when the \emph{search pattern is suppressed}, as in ~\cite{DBLP:conf/crypto/KamaraMO18}.

\textbf{Query and Response Distributions.} 
Following Kellaris \emph{et al.}~\cite{DBLP:conf/ccs/KellarisKNO16}, we model client queries as i.i.d. samples from a distribution on the universe of queries $\queries$.
We call this distribution the \emph{query distribution}, denoted $\qdist$.
Formally, let $X$ be a random variable over $\queries$ that follows the distribution $\qdist$ and denotes a query issued by the client.
We denote the probability that $X = \query$ as $\Pr[X=\query] = \Pr_{\qdist}[\query]$.
We may omit the subscript $\qdist$ for brevity if it is clear from the context. 
Interestingly, by fixing a distribution on the universe of queries $\queries$, we fix a distribution on the universe of responses $\responses$ for a given $\db$, that we call the \emph{response distribution}, denoted $\rdist$.
Intuitively, the probability that a response $\resp$ will be returned is equal to the sum of probabilities of all queries that return exactly $\resp$. 
The response distribution is thus a function of both the database and the query distribution. 
Let $\queries(\resp,\db)$ denote the set of queries with response $\resp$ in database $\db$.  
Formally, let $Y$ be a random variable over the responses $\responses$ that follow $\rdist$.
We denote the probability that $Y=\resp$ as
\begin{equation}
\Pr[Y=\resp] = \Pr_{\rdist}[\resp]=\sum_{q \in \queries(\resp,\db)} \Pr_{q \sim \qdist}[q]
\label{eq:responses_sum}
\end{equation}
where again, we drop the subscript $\rdist$ when clear from context.
The last sum indicates that the probability that a response $\resp$ will be returned by a query sampled from $\qdist$ is the sum of the probabilities of all queries that return exactly $\resp$.

\textbf{Adversarial Goal.}
Let $\db$ be a database over the identifiers $\rids$ and domain $[N]^k$, and associated with query distribution $\qdist$, and let $\Sigma$ be an instance of the R-STE scheme defined above.
The adversary that we consider in this work is the server in $\Sigma$, who attempts to learn $\db$ by observing the access-pattern leakage.
As is common in the leakage cryptanalysis literature~\cite{DBLP:conf/ccs/KellarisKNO16,state-of-the-uniform,fmacrst-fdrtd-20}, we assume that the server knows the domain $[N]^k$ and universe of queries $\queries$.
Crucially, we assume that the adversary knows the query distribution $\qdist$. 
This assumption is consistent with the standard threat model adopted by prior attacks in one and two dimensions~\cite{DBLP:conf/ccs/KellarisKNO16,10.1109/SP.2019.00030,fmacrst-fdrtd-20}; we extend it here to the significantly more challenging multi-dimensional setting.

\section{LAM\MakeLowercase{a}: A FREQUENCY MATCHING ATTACK}

\label{sec:framework}

We introduce a framework called \underline{\bf L}eakage-\underline{\bf A}buse via \underline{\bf Ma}tching (\framework) for performing database reconstruction through frequency matching. At a high level, the attacker observes the \emph{empirical frequency} with which each subset of encrypted records is retrieved and attempts to \textbf{match it} against the \emph{retrieval probability} induced by each candidate plaintext assignment under the known query distribution $\qdist$. 
This attack strategy is conceptually simple and applies to an arbitrary number of dimensions. 
Yet despite being a folklore technique, it has never been studied or rigorously analyzed, leaving its effectiveness poorly understood.  
The absence of such a principled analysis is precisely what compelled prior work on multi-dimensional attacks~\cite{even_less,remin} to opt for more involved techniques, ultimately requiring stronger assumptions while, as our experiments demonstrate in~\cref{sec:evaluation}, giving worse reconstructions.

\textbf{On Frequency Matching.}
Since the adversary knows the query distribution $\qdist$, they can compute, for each plaintext value $v \in \values$, the probability that $v$ is covered by a query sampled from $\qdist$:
\begin{displaymath}
\Pr[v] = \sum_{q\, :\, q \text{ covers } v} \Pr_{q \sim \qdist}[q].
\end{displaymath}
A key observation is that this \emph{retrieval probability} can be \textbf{precomputed} entirely offline, without observing a single query, since it depends solely on the knowledge of $\qdist$. 
We now turn to the attacker's perspective after observing query leakage. Suppose the adversary observes a multiset $M$ of $|M|$ responses to queries sampled from $\qdist$. We define the \emph{empirical frequency} of a record identifier $\rid$ as the fraction of observed responses that contain $\rid$:
\begin{equation}
\label{eq:freq}
f(\rid) := \frac{|\{\resp \in M\, :\, \rid \in \resp\}|}{|M|}.
\end{equation}
In the limit, as $|M| \to \infty$, the empirical frequency $f(\rid)$ converges to the retrieval probability of the true plaintext value of $\rid$. 
Thus, for a fixed record $\rid$, the attacker can test all candidate values $v \in \values$ against the precomputed probabilities and identify those for which $f(\rid) \stackrel{?}{=} \Pr[v]$. 
Each match yields a candidate plaintext assignment $\db(\rid) = v$ that is consistent with the observed leakage.

In the equality test above, the attacker compares the empirical frequency of a \emph{single} record to the probability of a \emph{single} value.
However, the attacker can devise more \emph{fine-grained} comparisons by using \emph{set operations} over a subset of encrypted record retrievals.
E.g., the attacker can compare the frequency with which two encrypted records are simultaneously retrieved (as part of a response) to the probability of two values being simultaneously covered by a query: 
\begin{equation}
f(\rid \cap \rid') \stackrel{?}{=} \Pr[v \cap v'],
\label{eq:equality}
\end{equation}
which essentially tests whether $\rid=v$ \emph{and simultaneously} $\rid'=v'$ (in fact, this equality is true even when $\rid=v'$ and $\rid'=v$). 
However, a match between the empirical frequency and a precomputed probability does not (necessarily) uniquely identify the true plaintext values. Multiple distinct assignments of values may satisfy this subset\footnote{We abuse the term ``subset'' to refer to tests that study simultaneous retrieval of encrypted records as part of a response; a subset can also be seen as an intersection of probability events.} equality, and more equality tests are required to be resolved.


\textbf{Overview of \framework.}
\framework consists of the components $\selector$, $\transl$, and $\solver$, where each one has a  synergistic input and output. 
This modular architecture admits multiple instantiations. At a high level, the $\selector$ identifies a collection of subsets of encrypted records whose retrieval frequencies will serve as the basis for the attack (like Equation~(\ref{eq:equality})). 
The $\transl$ then performs two tasks: ($i$) for each subset chosen by the $\selector$, it identifies all candidate plaintext value assignments whose precomputed retrieval probability under $\qdist$ matches the observed empirical frequency, and ($ii$) it encodes these frequency-probability matching pairs as a constraint programming formula. 
The $\solver$ takes this formula, finds a satisfying assignment of values, and outputs a database that is consistent with all subset equality tests simultaneously.

\textbf{$\selector:$ Choosing Record Subsets.}
The $\selector$ component is responsible for determining which record retrieval subset expressions (e.g., $\rid_c, \rid_a\cap\rid_c, \rid_c\cap\rid_f\cap\rid_z,\ldots$) will be used for the attack.
$\selector$ has a large impact on the runtime and accuracy of the attack.
If the number of subsets it chooses is too few, there might be too many reconstructions that match the frequency of the chosen subsets.
If the number of subsets it chooses is too large, the runtime may increase dramatically.
\begin{figure}[h]
    \centering
    \includegraphics[scale=0.45]{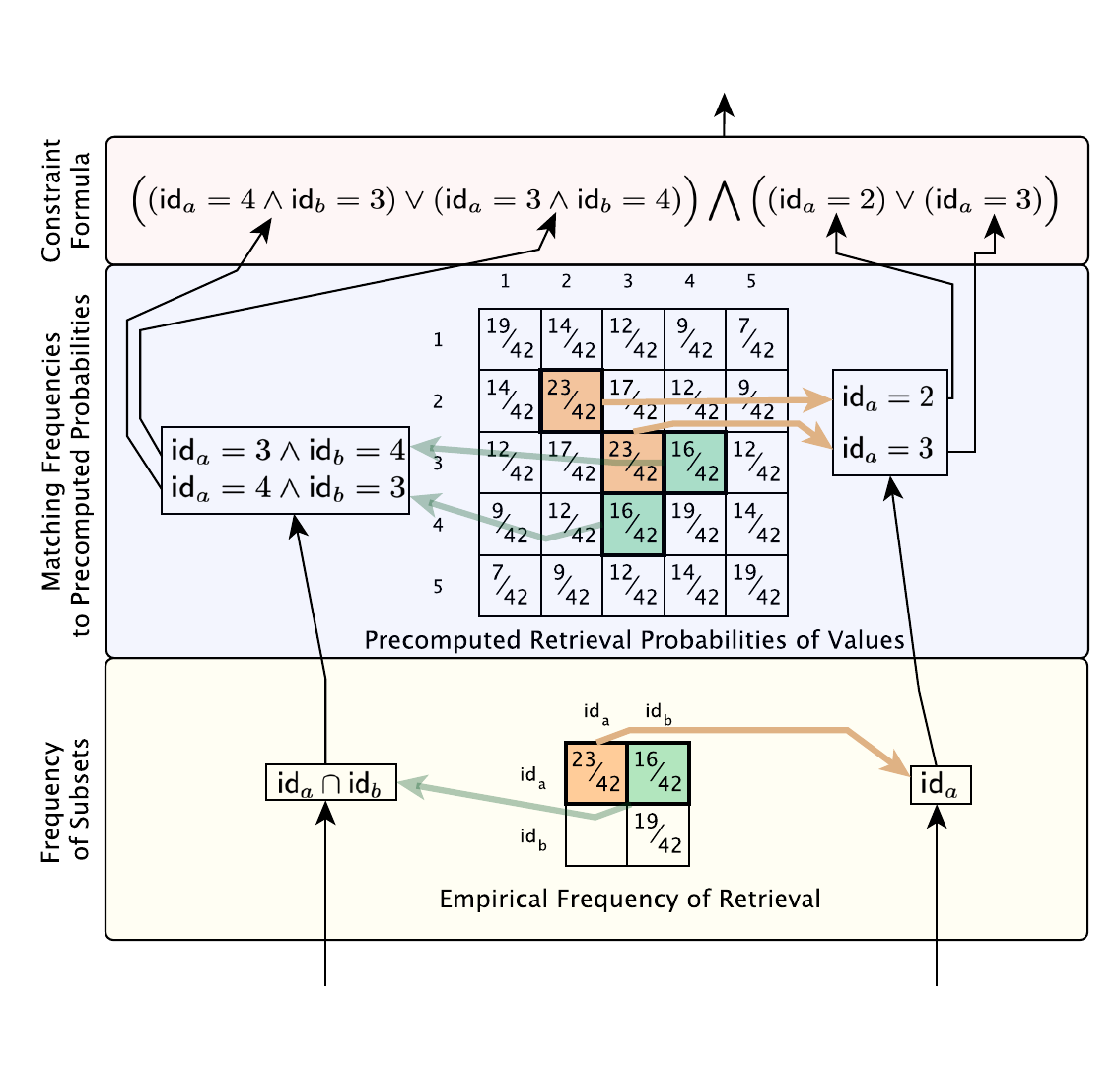}
    \caption{Internal view of $\transl$. Takes as input subsets to match $\{\rid_a\cap\rid_b,\rid_a\}$ from $\selector$. $\transl$ identifies frequency-matching events and generates the constraint programming formula passed to the $\solver$.
    }
    \label{fig:translator_flow}
\end{figure}

\textbf{$\transl$: One Constraint Formula from All Matchings.}
The $\transl$ component serves as the bridge between frequency matching and constraint programming, i.e., CP-SAT. 
Its primary role is to identify all plaintexts that match the empirical frequency. 
For any fixed encrypted record subset, there may be several plaintext assignments that satisfy this equality, meaning that multiple candidate plaintext assignments must all be considered. 
Once all plaintext assignments have been identified, $\transl$ converts them into a single logical formula $C$ that systematically constrains the space of possible plaintext-to-record assignments before handing $C$ off to the solver. 
The logical structure of $C$ is based on two observations about how matching subsets relate to one another. 
First, when multiple $C$ components share the same encrypted record subset (e.g., multiple candidate plaintexts for the same encrypted records $\rid_a$), the assignments they imply are mutually exclusive, since each record (treated as a variable of $C$) can hold at most one value; such alternatives are therefore disjoined, i.e., use of $\lor$. 
For example, in Figure~\ref{fig:translator_flow} the two alternative plaintext assignments $(\rid_a=4\land\rid_b=3)$ and $(\rid_a=3\land\rid_b=4)$ are ``glued'' with $\lor$.  
Second, since the true underlying database must be consistent with every subset expression simultaneously, the constraints produced by different subsets (e.g., subset $\rid_a\cap\rid_b$ and subset $\rid_a$) cannot be considered in isolation and are all required to hold simultaneously, i.e., use of $\land$. 
For example, in Figure~\ref{fig:translator_flow} the two subset expressions $(\rid_a=4\land\rid_b=3)\lor(\rid_a=3\land\rid_b=4)$ and $(\rid_a=2\lor\rid_a=3)$ are ``glued'' with $\land$. 


\textbf{$\solver:$ Reconstruction as Constraint Programming.}
$\solver$ takes the formula $C$ output by $\transl$ and solves the constraint programming problem to find an assignment of values to encrypted records that satisfies $C$.
Notice that any assignment that satisfies $C$ will have exactly one value per record.
Furthermore, any assignment that satisfies $C$ will satisfy all subsets output by $\selector$, and one of these possible assignments is guaranteed to be the true database $\db$, in the limit.
In our example, there is only one assignment that satisfies the constraints, that is $\rid_a=3$ and $\rid_b=4$. 
To recover multiple candidate assignments, $\solver$ can be run iteratively: each time a satisfying assignment is found, a new constraint is added to $C$ that explicitly excludes it, forcing $\solver$ to search for a different solution in the next iteration. This process can continue until no further satisfying assignments exist, at which point the full set of $C$-satisfying assignments has been enumerated. 
Since, in the limit, the true database $\db$ is guaranteed to satisfy $C$ by construction, it is certain to appear among them.

\textbf{Practical Challenges in \framework.} In what follows, we identify three fundamental challenges that must be addressed to establish \emph{rigorous guarantees} for a practical frequency-matching attack:
\begin{enumerate}
\item \textbf{Choice of Subsets to Match:} Which subsets of encrypted records should \framework target for matching? Since the number of possible subsets is exponential, exhaustive enumeration is infeasible, and an insightful selection criterion is needed.
\item \textbf{Query Complexity:} How many queries must the attacker observe before the \framework attack becomes effective?
\item \textbf{Output Quality:} What formal guarantees can be provided about the accuracy of the reconstruction?
\end{enumerate}

Regarding challenge (1), Figure~\ref{fig:translator_flow} illustrates the simple case of two subsets, $\rid_a\cap\rid_b$ and $\rid_a$, but in general, it is unclear how an attacker should select the right combination of subsets in order to properly constrain the space of reconstructions and avoid outputs that do not agree with the observed lekage; this is addressed in Section~\ref{sec:its_limit}. 
Regarding challenge (2), the equality in equation (\ref{eq:equality}) holds only asymptotically, making \emph{exact} matches hard to guarantee in practice. 
The first challenge \framework must address (see Section~\ref{sec:querycomplexity}) is therefore determining how many queries suffice for an ($\epsilon, \delta$)-approximation of the retrieval frequencies, relaxing exact equality to a within-$\epsilon$ match. 
Finally, challenge (3) calls for a principled technique that provides worst-case guarantees on the closeness of the reconstruction to the true plaintext; see Section~\ref{sec:output}.

\section{Provable Guarantees for LAM\MakeLowercase{a}}
\label{sec:guarantees}

\subsection{Choosing Subsets to Match, with Guarantees}
\label{sec:its_limit}

Before presenting a universal $\selector$ strategy with guarantees, it is useful to characterize the fundamental limits of \emph{any attack} under our threat model.
Although the attacker's goal is to recover the true database $\db$, perfect recovery is not always achievable. Under the leakage profile $\Lambda=\{\trlen,\ridp\}$, which excludes the search pattern $\qeq$, the only signal available for inference is the sequence of observed encrypted responses, drawn i.i.d. from the response distribution $\rdist$. 
When two distinct databases $\db$ and $\db'$ induce identical response distributions under a fixed query distribution $\qdist$, no attack can distinguish~\footnote{To see this, observe that given fixed $\db$ and $\qdist$, if there exists a $\db'$ that gives the same $\rdist$ as the one induced by ($\db,\qdist$); for every query sequence in $(\db,\qdist)$ there exists an equally probable query sequence in $(\db',\qdist)$ with identical responses.} between them using only the information available under our threat model. 
To formalize this inherent limit, we adapt the notion of a \emph{reconstruction space} (originally introduced by Kornaropoulos \emph{et al.}~\cite{k-nn-attack}) to our setting:
\begin{definition}[Reconstruction Space]
    Let $\db$ be a database, let $\qdist$ be a query distribution over $\db$, and let $\rdist$ denote the response distribution induced by $\qdist$ on $\db$. The \emph{reconstruction space} $\rs{\qdist}(\db)$ is the set of all databases whose response distribution under $\qdist$ equals $\rdist$.
    \label{def:RS}
\end{definition}
In this setting, the best any attack can hope to achieve is to output only databases within $\rs{\qdist}(\db)$. Recovering exactly the databases in $\rs{\qdist}(\db)$ is therefore the best any attack can achieve under this threat model, and we prove that \framework meets this information-theoretic limit. 
Our reconstruction space differs from~\cite{even_less} (no query distribution knowledge) and~\cite{k-nn-attack} (specific to $k$-NN queries).

\textbf{The Proposed $\selector$: $T_{2k}$.}
We present a $\selector$ strategy that simultaneously satisfies three properties: ($i$) it is \emph{optimal}, in the sense that, if a database $\db'$ satisfies the constraints by the proposed $\selector$, then $\db' \in \rs{\qdist}(\db)$, thereby matching the information-theoretic limit; ($ii$) it is \emph{compact}, producing a number of subsets that scales $O(n^{2k})$ with the number of dimensions $k$ and the size $n$ of the $\db$; and ($iii$) it is \emph{universal}, meaning these guarantees hold for every database and every query distribution. 

Informally, the collection $T_{2k}$ of the proposed $\selector$ contains the following subsets: for every $j \in [2k]$, all $\binom{n}{j}$ possible intersections of $j$ records from $\db$. Listing these subsets explicitly, for $j=1$ we get the singletons $\rid_1,\ldots,\rid_n$; for $j=2$ the pairwise intersections $\rid_1\cap \rid_2,\ \rid_1\cap \rid_3,\ \ldots,\ \rid_{n-1}\cap \rid_n$; for $j=3$ the triple intersections $\rid_1\cap \rid_2\cap \rid_3,\ \ldots$; and so on up to the $2k$-fold intersections. More formally, for a set of records $\rids$ and domain $\values=[N]^k$, the collection $T_{2k}$ is defined as:
\begin{displaymath}
T_{2k}=\Big\{\rid_{i_1},\ \rid_{i_1}\cap\rid_{i_2},\ \ldots,\ \rid_{i_1}\cap\cdots\cap \rid_{i_{2k}}\ \Big|\ 1\leq i_1 < \cdots < i_{2k} \leq n \Big\}
\end{displaymath}

\noindent Notice that the number of subsets in $T_{2k}$ are $\sum_{j=1}^{2k}\binom{n}{j}=O(n^{2k})$.


\textbf{$T_{2k}$-Reconstructions Come From the Reconstruction Space.}
In the following, we show that \emph{any} $\db'$ that satisfies the constraint programming instance implied by $T_{2k}$, has the same response distribution as $\db$ under $\qdist$. 
Since the attacker in this threat model cannot prioritize over members from $\rs{\qdist}(\db)$, this means that the proposed attack is optimal, in the limit. 
As a first step, we give a lemma concerning the geometry of queries in $\values=[N]^k$.

\begin{lemma}
\label{lemma:covering}
    For any set of plaintext values $V \subseteq \values=[N]^k$, there exists a subset $V^* \subseteq V$ of size at most $2k$ such that any query covering all values in $V^*$ also covers all values in $V$.
\end{lemma}

Figure \ref{fig:minmax} illustrates Lemma~\ref{lemma:covering} in the context of record retrieval: in any response $\resp$ with $V$ associated values, there is always a set $V^*$ of $2k$ or fewer values from $V$ such that every query covering $V^*$ also covers all other records in $V$.

 \begin{figure}[h]
\centering
\includegraphics[scale=0.32]{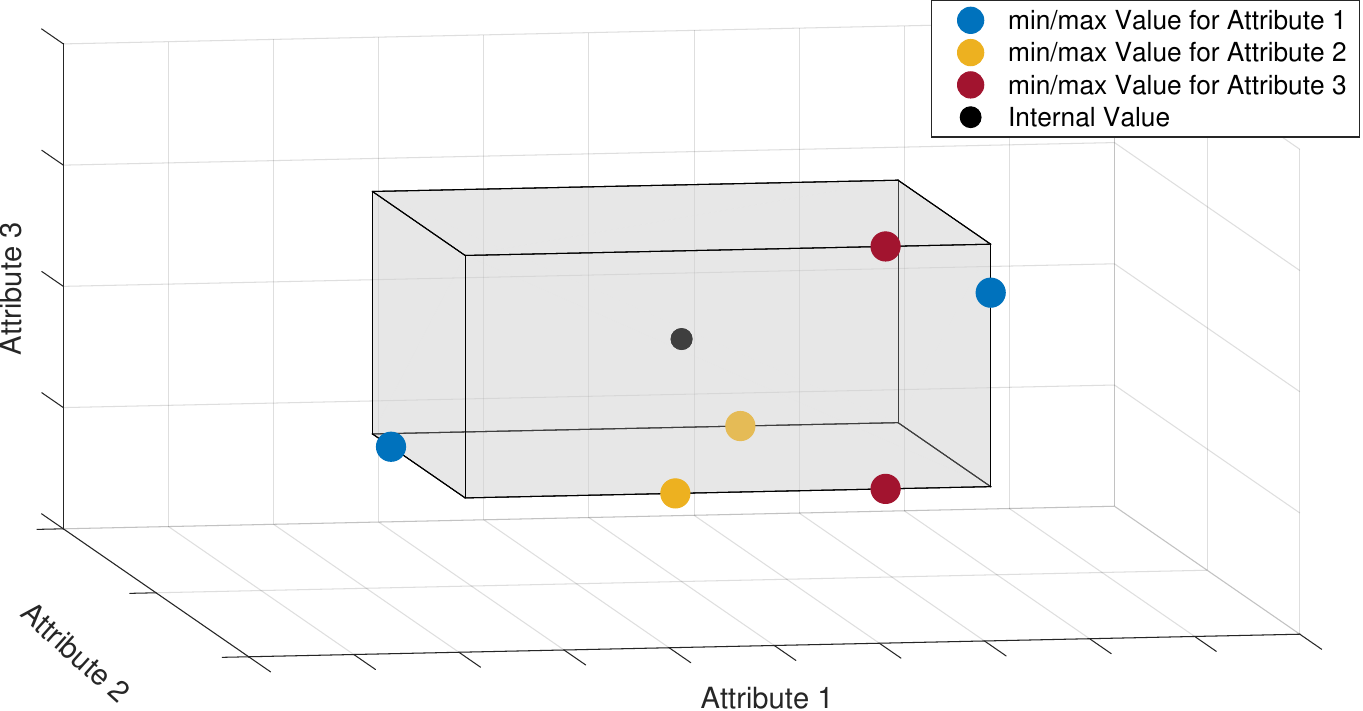}
\caption{Values of a $\db$ in $k$=$3$ dimensions.  Red/blue/yellow/ black points comprise the associated values of records that are part of $\resp$. Queries that cover the ``colored'' values from $\resp$ \emph{must} cover all records of $\resp$ with internal values. 
}
\label{fig:minmax}
\end{figure}

\noindent The above Lemma is used in the proof of the next theorem.

\begin{theorem}
Let $\db$ and $\db'$ be databases over $\rids$ and $\values=[N]^k$, and let $\qdist$ be a query distribution over both. Let $\rdist$ and $\rdist'$ be the response distributions of $\db$ and $\db'$ respectively, both induced by $\qdist$. If $\db'$ satisfies the \framework $T_{2k}$-constraint programming formula then $\rdist = \rdist'$, in the limit.  
    \label{theorem:2ktuples}
\end{theorem}

\textbf{Subsets Smaller Than $T_{2k}$ Give Suboptimal Results.}
Theorem~\ref{theorem:2ktuples} establishes $2k$ as an upper bound on the size of intersection expressions needed for \framework to output only databases in $\rs{\qdist}(\db)$. We now show that this bound is tight. Concretely, we show that the parameterization $T_{2k-1}$, which is identical to $T_{2k}$ but excludes subset expressions of size exactly $2k$, gives reconstructions outside $\rs{\qdist}(\db)$. 
We show that for any dimension $k$, there exists a database $\db' \neq \db$ that satisfies the constraint programming induced by $T_{2k-1}$ but not the constraint programming induced by $T_{2k}$; thus, confirming that expressions of size $2k$ are strictly necessary.

\begin{theorem}
\label{theorem:tightbound}
    For any dimension $k$ and $N \geq 6$, there exist a domain $\values=[N]^k$, databases $\db \neq \db'$ over $\rids$, and a query distribution $\qdist$ such that $\db'$ satisfies the $T_{2k-1}$-constraints of \framework but not the $T_{2k}$-constraints, in the limit.
\end{theorem}

Notice that the choice of $T_{2k}$ as the $\selector$ strategy is optimal in the limit \emph{without using the knowledge of $\qdist$ to choose subsets}, making the attack universal, i.e., it achieves optimal reconstruction guarantees regardless of $\qdist$ and $\db$. While a more succinct selector strategy may exist when $\qdist$ is part of the $\selector$ decision-making, such a strategy would only improve \framework's efficiency\footnote{E.g.,~\cite{DBLP:conf/ccs/KellarisKNO16} tailors frequency matching to uniform $\qdist$ (as opposed to our universal approach), reducing the $O(n^2)$ candidate pair subsets of $\selector$ to just $n$ pairs.}, not reconstruction quality, since $T_{2k}$ is already optimal in the limit.


\subsection{Query Complexity, with Guarantees}
\label{sec:querycomplexity}

The discussion so far has centered on the ideal (for the adversary) case, the so-called ``in the limit'', in which the frequency of observed subsets of encrypted records \textit{matched exactly} the true probability of that subset. 
However, in the real world, this is highly unlikely. An important question is then: Given some number of observed responses, how can we determine which true probabilities ``match''? 

A naive approach would be to simply calculate the (partially inaccurate) frequency and choose the plaintexts closest match. 
This may result in a constraint programming problem with \emph{no feasible solution} since we have no sense of how close the calculated retrieval frequency is to the true probability. 
Instead, we a statistical learning theory tool known as the \emph{Vapnik–Chervonenkis dimension} (VC-dimension) to get approximation bounds.

\begin{remark}
Given that the attacker can exactly learn $\rdist$ only in the limit, a realistic approach is to approximate $\rdist$ by the empirical frequency of subsets in a fixed sample from $\rdist$.
\end{remark}

\noindent We use VC-dimension to bound the number of responses required for the empirical approximation of $\rdist$ to be $\epsilon$-close to the true $\rdist$ with probability at least $1-\delta$, for any desired accuracy $\epsilon \in (0,1)$ and confidence $1-\delta \in (0,1)$.  

 \textbf{Approximating via Sampling.} 
Recall that $\rids$ is the universe of encrypted records and $\rdist$ is the distribution over the set $\responses = 2^{\rids}$. 
Let $\responses_{\db}\subseteq \responses$ be the set of all possible responses that can be returned by issuing queries from $\queries$ on $\db$, i.e., all possible responses for $\db$.
A multiset of observed responses $\multiset$ is a collection of \textit{independent identically distributed} (i.i.d.) samples from $\rdist$. 
We can calculate the frequency of some $\rid \in \rids$ by counting all responses in $\multiset$ that contain $\rid$, as in Equation~(\ref{eq:freq}). 
To find the frequency, we first need to compute the numerator in Equation~(\ref{eq:freq}), that is, the multiset of responses from $M$ each containing $\rid$. 
Ideally, we want to find the frequency of a subset $I=\{\rid_1, \rid_2, \ldots, \rid_n\}$ in $M$, $f(\rid_1 \cap \rid_2 \ldots \cap \rid_n)$ or more succinctly $f(I)$. 
To compute the above frequency of  $I$, we need to identify the multiset of $\multiset$, denoted as $\support_{\multiset}(I)$, where each member $\resp$  contains all $I$, more formally $\support_{\multiset}(I) = \{ \resp \in \multiset : I \subseteq \resp \}$.  
The frequency is then $f(I) = \frac{|\support_{\multiset}(I)|}{|\multiset|}$. 
Since $f(I)$ is an empirical approximation, an exact match between $f(I)$ and the true probability is unlikely. Instead, we relax the exact equality of Equation~(\ref{eq:equality}) to an \emph{approximate criterion}, that is, we identify an \emph{interval of frequencies} and seek matches across all of them.
This interval is formalized as, given $M$ and $I\in \responses$, find $\epsilon \in (0,1)$ and $\delta \in (0,1)$ such that with probability at least $1-\delta$ we have:
\begin{equation}
    \left| \left( \sum_{\resp \in \responses : I \subseteq \resp}\Pr_{\rdist}[\resp]\right) - \frac{\left|\support_{\multiset}(I)\right|}{\left|\multiset \right| } \right| \leq \epsilon.
    \label{eq:eps_bound}
\end{equation}

\textbf{Defining the Range Space.} We now define the fundamental notions required to derive VC-dimension bounds, adopting statistical learning theory terminology: 
The \emph{domain} is $\responses_{\db}$, the set of all possible responses for $\db$. 
For any subset of records $I \subseteq \rids$, we define our \emph{range set}\footnote{The term ``range'' is overloaded; in most of this work it refers to encrypted range queries, whereas in VC-dimension theory it denotes the subsets of the domain used to measure combinatorial complexity.} as the collection of all responses that contain $I$,  for every $I \subseteq \rids$ the range set contains $T_{\responses}(I)$.

\begin{definition}
Let $\responses_{\db}$ be the set of possible responses for a fixed $\db$. 
For each non-empty $I \subseteq \rids$, define $\support_{\responses_{\db}}(I) = \{\resp \in \responses_{\db} \mid I \subseteq \resp\}$ as the set of all responses of $\db$ containing $I$. The family $\texttt{R} = \{\support_{\responses_{\db}}(I) \mid I \subseteq \rids, I \neq \emptyset\}$ together with the domain $\responses_{\db}$ forms a \textbf{range space} $(\responses_{\db}, \texttt{R})$.
\label{def:range_pair}
\end{definition}

\textbf{Bounding VC-dimension. } The last remaining piece for deriving the query complexity is to define what it means to \textit{project} $\texttt{R}$ onto a collection of responses $S \subseteq \responses$. 
The projection of $\texttt{R}$ on $S$ is the set $\texttt{R}_S = \{ S \cap X \; | \; X \in \texttt{R} \}$. 
The VC-dimension $\mathrm{VC}(\texttt{R})$ is the cardinality of the largest collection $S \subseteq \responses_{\db}$ such that the projection of $\texttt{R}$ on $S$ is the powerset of $S$ (referred to as \textit{shattering}). 

Inspired by~\cite{Matteo_2014_true_itemsets}, we derive a bound on the VC-dimension of the range space $(\responses_{\db}, \texttt{R})$ by finding the optimal solution of the \textit{Set-Union Knapsack Problem}. For the definition of the Set-Union Knapsack Problem and the proof of the Theorem, see Appendix \ref{appendix:proofs}.

\begin{theorem}
    Let $\multiset$ be a sample drawn from $\rdist$.  
    Let $\ell = |\{\rid \in \rids \mid \exists  S \in \multiset \text{ s.t. } \rid \in S\}|$ be the number of encrypted records that appear in at least one response from $\multiset$. 
    Let $q$ be the optimal profit of the Set-Union Knapsack Problem over $\multiset$ with capacity equal to $\ell$. Let $b = \lfloor \log_2 q \rfloor + 1$. Then $\mathrm{VC}(\texttt{R}) \le b$.
    \label{thm:vc}
\end{theorem}

\noindent Equation~(\ref{eq:eps_bound}) defines the $\epsilon$-approximation to $(\texttt{R}, \rdist)$. Combined with Theorem~\ref{thm:vc}, this yields a rigorous sample complexity bound for the frequency estimates used in $\framework$:

\begin{theorem}[Thm. 2.12 \cite{Har-Peled2011_vc_dim_samples}]
    Let $\texttt{R}$ be a range set on $\responses_{\db}$ with $\mathrm{VC}(\texttt{R}) \le d$, and let $\rdist$ be a distribution on $\responses_{\db}$. Given $\delta \in (0, 1)$ and a positive integer $\ell$, let
\begin{equation}
\varepsilon = \sqrt{\frac{c}{\ell} \left( d + \log \frac{1}{\delta} \right)}
\end{equation}
where $c$ is an universal positive constant\footnote{\cite{Loffler2009ShapeFitting} experimentally demonstrated that c is, at most, 0.5}. Then, a multiset of $\ell$ elements of $\responses_{\db}$ sampled independently according to $\rdist$ is an $\varepsilon$-approximation to to range set $\texttt{R}$ and distribution $\rdist$ with probability at least $1 - \delta$.
\end{theorem}

\subsection{Output Quality, with Guarantees}
\label{sec:output}

Since the reconstruction space may contain multiple equally plausible plaintext assignments, a natural question arises: which candidate should the attacker commit to as their coordinate-level output? 
In this work, we propose two generic output strategies (applicable not only to \framework but also to prior attacks~\cite{even_less,remin}) for selecting \emph{a single database} from the reconstruction space.

\textbf{Diameters and Centroids.} In the following, we assume that the attacker derived $K$ plausible plaintext databases from the reconstruction space, i.e., $\{\db_1,\ldots,\db_K\}$. 
As a next step, for each encrypted record $\rid_j$, the attacker considers the $K$ candidate plaintext values assigned to it across these databases, namely $\{v^{(j)}_1,\ldots,v^{(j)}_K\}$.

For the first output strategy called \texttt{diameter}, the attacker computes the convex hull of $\{v^{(j)}_1,\ldots,v^{(j)}_K\}$ for each encrypted record $\rid_j$. 
The attacker then identifies the \emph{diameter} of the convex hull above, with length $\gamma_j$, and outputs the diametral midpoint as the reconstructed plaintext for $\rid_j$. 
If the true plaintext lies within the convex hull, the reconstruction error for $\rid_j$ is at most $\gamma_j/2$. 
The overall output quality guarantee\footnote{We note that the next $\epsilon$ concerns the approximation of the reconstruction error, while the $\epsilon$ in Equation (\ref{eq:eps_bound}) has to do with the approximation of the probability of retrieval via the empirical frequency.} across all records is then 
$$\epsilon := \max_{j\in [K]} \frac{\gamma_j}{2}.$$
The above strategy provides worst-case approximation guarantees but adopts a pessimistic view of the reconstruction.
For the second output strategy called \texttt{centroid}, the attacker computes the \emph{centroid} (or barycenter) of $\{v^{(j)}_1,\ldots,v^{(j)}_K\}$ for each encrypted record $\rid_j$. 
Geometrically, the centroid is the point that minimizes the sum of squared distances to all candidate plaintexts, and can be interpreted as the ``average'' location of the reconstruction space for $\rid_j$. 
When the true plaintext is not an outlier within the reconstruction space, the centroid tends to be closer to it than the diametral midpoint, making it a more optimistic approach.

The analysis of reconstruction space properties (such as its geometry and entropy) was introduced by Kornaropoulos et al.~\cite{DBLP:conf/ccs/KornaropoulosMP22}, establishing a line of work that has been extended by subsequent studies~\cite{DBLP:journals/popets/BoldyrevaGW24,DBLP:journals/cic/EspirituKM25}. 
It should be noted that a related diameter-based approach was proposed for encrypted $k$-NN queries~\cite{k-nn-attack}; however, in that setting, each convex hull vertex represents an entire database, whereas here the convex hull is taken over the plausible coordinate assignments of a single encrypted record.

\section{Which Query Distributions Make Reconstruction Hard?}
\label{sec:flattening}

Having shown that, in the limit, a database $\db$ with query distribution $\qdist$ can be reconstructed up to $\rs{\qdist}(\db)$, we ask: 
\begin{center}
``\emph{Is it possible to increase the reconstruction space, \rs\qdist(\db), and thus make reconstruction harder by increasing the uncertainty?}''
\end{center}
To achieve this increase, one has to change $\db$, $\qdist$, or both.
Altering the data is not ideal, as it may undermine the correctness of  \emph{R-STE}.
To find the hardest-to-reconstruct case, we pivot to identifying an ``uncertainty-increasing'' query distribution $\qdist$.

 One approach would be to tailor the query distribution to the underlying $\db$, but this strategy \emph{directly leaks} information about $\db$ since we assume that $\qdist$ is known to the attacker. 
Thus, we only study query distributions that are \emph{independent of the database $\db$ they operate on}.
This way, our analysis holds regardless of which database is queried.
In particular, we focus on \emph{expressive query distributions}, i.e., distributions where every query $\query \in \queries$ has a non-zero probability of being issued.
We avoid non-expressive distributions because forbidding queries degrades the scheme's functionality.

As an affirmative answer to the above question, we show an expressive distribution $\widehat{\qdist}$ which guarantees, for any database $\db$, a corresponding $\rs{\widehat{\qdist}}(\db)$ containing all databases with the \emph{same pairwise $L^1$ distances} as those of $\db$.  
Thus, even if the attacker successfully runs \framework against $\widehat{\qdist}$, they are met with a large $\rs{\widehat{\qdist}}(\db)$ that, for the first time in this literature, carries a \emph{purposefully designed structured reconstruction space} (unlike the uniform distribution, with a reconstruction that admits reflections across axes incidentally): the attacker can infer the pairwise $L^1$ distances of $\db$ but nothing beyond them.

\subsection{On the Impossibility of the Hardest Query Distribution}
\label{singleton_flattening}

The gold standard would be a query distribution $\qdist^*$ that implies an $\rs{\qdist^*}(\db)$ that consists of every possible database over $\rids, \V$.
In such a case, an attacker would have no advantage over a random guess from the set of all possible databases over $\rids, \V$. 
In the following, we show that such a query distribution \emph{can not exist}.

\textbf{A First Step Towards the Gold Standard $\qdist^*$.} Recall from Section~\ref{sec:its_limit} that any two databases that belong to the reconstruction space must have the same response distribution.
Thus, to identify the $\qdist^*$, we need to find a query distribution for which every possible plaintext value is queried (i.e., returned as part of a response) with the exact same probability, say $p^*_1$. 
Similarly, every possible pair of plaintext values is queried with the exact same probability, say $p^*_2$. 
Under such $\qdist^*$ \emph{every possible database} would have the same response distribution, so the attacker cannot infer anything about the underlying plaintext values.

As a first step towards this goal, we show in Algorithm \ref{algo:singleton} (in Appendix~\ref{sec:appendix_algo}
) how this can be done for $1$-tuples of records by only adjusting the probabilities of 1-tuples of values in any input distribution $\qdist$.
Our approach will impose a ``minimal'' change to the input $\qdist$ by increasing the probabilities of only the smallest queries (those that cover a single value). 
For simplicity of exposition, we assume that every query $\query$ is associated with a weight $w_{\query}$ (which is a natural number) and that the probability of this query $\query$ is given by normalizing its weight divided by the sum of all query weights. 
Algorithm~\ref{algo:singleton} simply finds the value $v \in \V$ with the highest retrieval probability $p^*_1$, then raises the weight of each singleton query $[v',v']$ until every value matches $p^*_1$, ensuring $\Pr[e_v]=\Pr[e_{v'}]$, $\forall v,v' \in \V$.

\begin{theorem}
    Let $\qdist$ be a query distribution over $\queries$.
    Let $\qdist'$ be the output of Algorithm \ref{algo:singleton} with input $\qdist$, then we have:
\begin{displaymath}
\Pr_{\qdist'}[v] = \Pr_{\qdist'}[v'] \text{ for all } v \in \V.
\end{displaymath}
\label{theorem:flat_singleton}
\end{theorem}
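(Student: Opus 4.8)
The plan is to reduce the statement to simple bookkeeping on query \emph{weights} and then exploit one geometric fact: a singleton (point) query covers exactly one value. First I would rewrite the retrieval probability in weight form. Writing $W=\sum_{\query\in\queries}w_{\query}$ for the total weight and $s_v=\sum_{\query \text{ covering } v}w_{\query}$ for the weight mass of queries covering a value $v$, relation~(\ref{eq:responses_sum}) specialized to the singleton event $e_v$ gives $\Pr_{\qdist}[e_v]=s_v/W$. Because $W$ is a common normalizer, the maximizer $v_{\textsf{mx}}$ of $\Pr[e_v]$ is precisely the maximizer of $s_v$, so $s_{\textsf{mx}}\geq s_v$ for every $v\in\V$. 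This inequality already certifies well-definedness of the algorithm: each additive term $s_{\textsf{mx}}-s_v$ is a non-negative integer, hence every updated weight remains a valid natural number.

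The key observation I would isolate next is a locality (independence) property of the update. The query $[v_i,v_i]$ covers a value $v$ iff $v_i\preceq v\preceq v_i$, i.e.\ iff $v=v_i$. Therefore increasing the weight of $[v_i,v_i]$ by any amount increases $s_{v_i}$ by exactly that amount and leaves $s_{v_j}$ unchanged for every $v_j\neq v_i$. Consequently the updates performed in distinct loop iterations do not interfere: when iteration $i$ reads $s_i$, none of the weights contributing to $s_i$ has been modified by an earlier iteration (those only altered $[v_j,v_j]$ with $j\neq i$), so $s_i$ equals the \emph{original} covering-weight of $v_i$. After the update, which adds $(s_{\textsf{mx}}-s_i)$ to $\qdist[[v_i,v_i]]$, the new covering-weight of $v_i$ becomes $s_i+(s_{\textsf{mx}}-s_i)=s_{\textsf{mx}}$.

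Putting these together, after the loop terminates every value has the identical covering-weight $s'_v=s_{\textsf{mx}}$. The new total weight is the fixed constant $W'=W+\sum_{v\in\V}(s_{\textsf{mx}}-s_v)$, which does not depend on which value we examine. Hence $\Pr_{\qdist'}[e_v]=s'_v/W'=s_{\textsf{mx}}/W'$ for every $v\in\V$, and since the right-hand side is the same for all values, $\Pr_{\qdist'}[e_v]=\Pr_{\qdist'}[e_{v'}]$ for all $v,v'\in\V$ (interpreting the theorem's quantifier as ranging over pairs), which is the claim.

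I expect the only genuinely delicate point to be the independence-across-iterations argument in the second step: one must verify that the $s_i$ read inside the loop is the original covering-weight rather than a quantity corrupted by a previous iteration, and the entire proof hinges on the fact that the point-query $[v_i,v_i]$ touches exactly one value. Everything else---the normalization bookkeeping and the non-negativity of the increments---is routine once that observation is in place.
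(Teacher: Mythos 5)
Your proposal is correct and follows essentially the same route as the paper's proof: both rest on the observation that the singleton query $[v_i,v_i]$ covers only $v_i$, so the per-iteration weight increments are independent and bring every value's covering-weight up to $s_{\textsf{mx}}$, after which a common normalizer yields equal probabilities. Your write-up is somewhat more explicit about non-interference across loop iterations and the non-negativity of the increments, but these are elaborations of the paper's argument rather than a different approach.
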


Algorithm~\ref{algo:singleton} is a partial step toward $\qdist^*$, but our findings show that flattening frequencies beyond singletons is impossible, ruling out any hope of identifying such a $\qdist^*$.
In fact, Theorem~\ref{unequal_distance} shows that no expressive query distribution can assign equal retrieval probability to value pairs of differing $L^1$ distance.

\begin{theorem}
\label{unequal_distance}
    Let $\qdist$ be any query distribution over the universe of queries $\queries$ and domain $\V = [N]^k, N>2$, such that every query $\query \in \queries$ has non-zero probability.
    For every pair of values $v,v'$ in $\V$ with $L^1$-distance $dist(v,v')=d$, there exists a pair of values $v,v''$ with $L^1$-distance $dist(v,v'') \neq d$ such that $\Pr[v \cap v'] \neq \Pr[v \cap v'']$.
\end{theorem}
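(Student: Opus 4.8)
The plan is to reduce the joint-coverage probability to a statement about the \emph{bounding box} of the two values and then exploit expressiveness to obtain a strict monotonicity. First I would observe that a query $\query=[a,b]$ covers both $v$ and $v'$ exactly when it covers every point of the smallest hyperrectangle containing them. Writing $\mathrm{lo}=(\min(v_i,v'_i))_{i\in[k]}$ and $\mathrm{hi}=(\max(v_i,v'_i))_{i\in[k]}$ for the componentwise min and max, the event $e_v\cap e_{v'}$ is precisely the event that the issued query satisfies $a\preceq\mathrm{lo}$ and $\mathrm{hi}\preceq b$. Hence $\Pr[e_v\cap e_{v'}]$ equals the total probability mass $\qdist$ places on queries covering the box $B=[\mathrm{lo},\mathrm{hi}]$, so it depends on the pair only through $B$. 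I would also note that $dist(v,v')=\sum_{i\in[k]}(\mathrm{hi}_i-\mathrm{lo}_i)$ is the sum of the side lengths of $B$.

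The key lemma I would establish is a strict monotonicity under box containment: if $B''\subsetneq B$, then every query covering $B$ also covers $B''$, so the set of covering queries can only grow as the box shrinks; moreover the minimum-bounding query $[\mathrm{lo}'',\mathrm{hi}'']$ of $B''$ covers $B''$ but \emph{not} $B$, and by the expressiveness hypothesis it carries strictly positive probability. Therefore $\Pr[\text{cover }B'']>\Pr[\text{cover }B]$ whenever $B''\subsetneq B$. This is the step where the hypothesis that every query has non-zero probability is essential: without it the ``extra'' queries could all carry zero mass and the two joint probabilities could coincide, which is exactly the flattening the theorem rules out.

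With these two facts the witness $v''$ is immediate. Assuming $v\neq v'$ (the degenerate case $v=v'$ is handled by taking any $v''\neq v$, for which $e_v\cap e_{v''}\subsetneq e_v$ strictly by the same argument), I would pick a coordinate $i$ with $v_i\neq v'_i$ and let $v''$ agree with $v'$ on every coordinate except that its $i$-th entry is moved one step \emph{toward} $v_i$. This keeps $v''$ in $[N]^k$, strictly shrinks the bounding box in coordinate $i$ so that the new box is strictly contained in $B$, and changes the distance by exactly one, giving $dist(v,v'')=d-1\neq d$. Applying the monotonicity lemma then yields $\Pr[e_v\cap e_{v''}]>\Pr[e_v\cap e_{v'}]$, which is the required strict inequality.

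I expect the only real subtlety to be the strictness in the monotonicity step, namely exhibiting a concrete query (the minimum-bounding query of the smaller box) that distinguishes the two boxes and invoking expressiveness for its positive mass; everything else is bookkeeping about bounding boxes. A secondary point worth stating carefully is that shrinking the box (rather than enlarging it) always keeps $v''$ inside the domain, so no boundary or ``room-to-grow'' assumption beyond $N>2$ is needed to produce the witness.
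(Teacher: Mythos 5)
Your proof is correct and follows essentially the same route as the paper's: both perturb $v'$ by one unit in a single coordinate so that the two bounding boxes become nested, and both invoke expressiveness (via a query, such as the minimum bounding query, that covers the smaller box but not the larger) to turn the containment into a strict probability gap. Your version is slightly tidier in always shrinking the box toward $v$ (avoiding the paper's case split on whether the perturbation stays in the domain) and in explicitly handling the degenerate pair $v=v'$, but these are presentational rather than substantive differences.
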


\begin{figure}[t]
    \centering
    \includegraphics[scale=0.41]{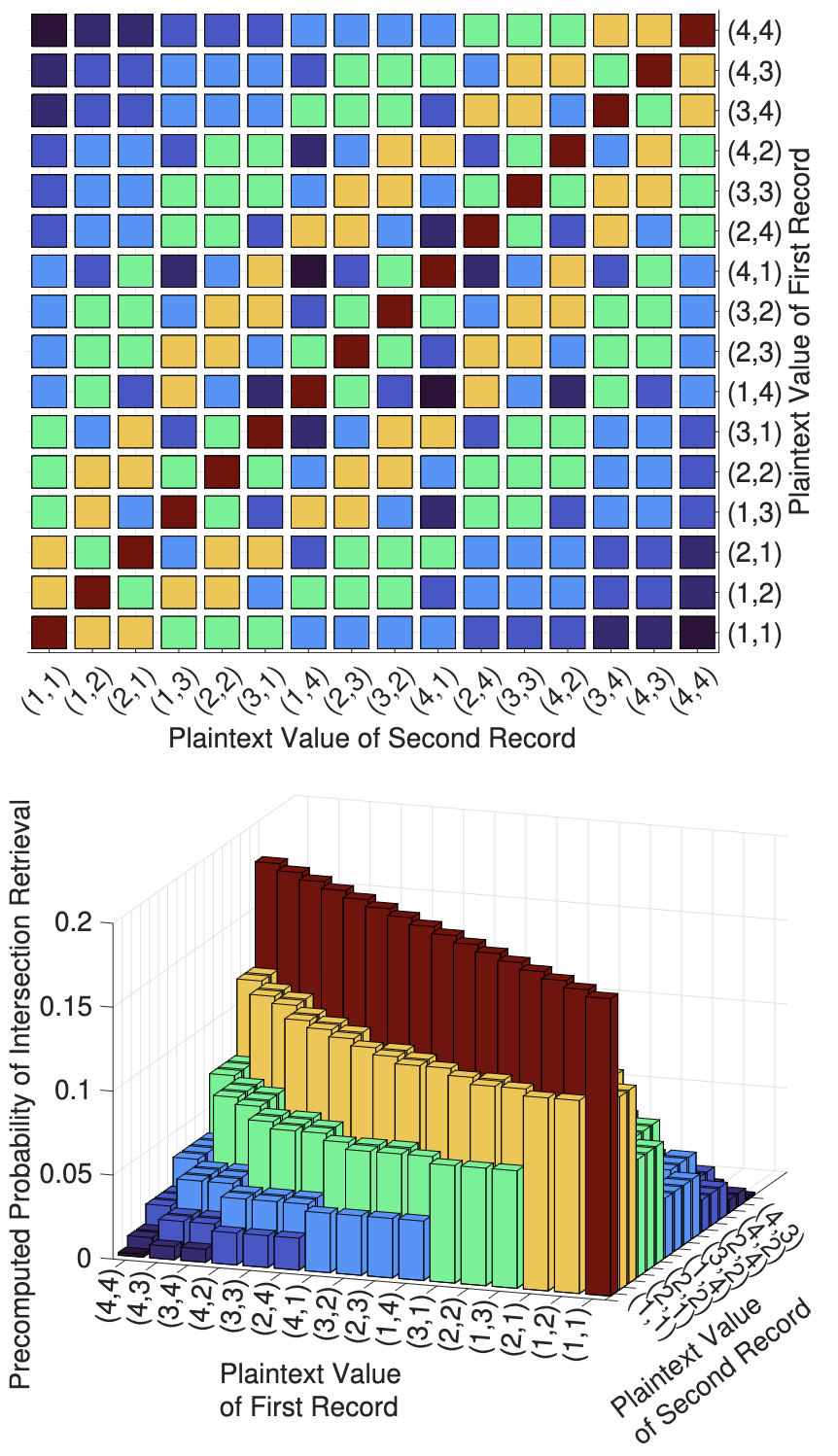}
    \caption{An illustration  of how the query distribution by Algorithm \ref{algo:equidistant} affects the frequency of retrieval of pairs (in $\values$=$[4]\times[4]$). All pairs $(v,v')$ that have a fixed $L^1$-distance, have the same probability of retrieval $\Pr_{\widehat{\qdist}}[v \cap v']$.}
    \label{fig:flat_pairs}
\end{figure}

\subsection{All Equidistant Value Pairs Can Have the Same Probability of Retrieval}
 \label{sec:flatten_tuples}

Fortunately, Theorem~\ref{unequal_distance} leaves room for a weaker but meaningful alternative, namely a reconstruction space (implied by a query distribution) under which all value pairs at the same $L^1$ distance share the same retrieval probability.
Algorithm~\ref{algo:equidistant} (in Appendix~\ref{sec:appendix_algo}
) constructs exactly this by iterating Algorithm~\ref{algo:singleton} across all distances $d=0,1,\ldots,k(N-1)$, where distance $0$ corresponds to a value paired with itself.

\begin{theorem}
    Let $\qdist$ be an expressive query distribution over $\queries$.
    The distribution $\widehat{\qdist}$ output by Algorithm~\ref{algo:equidistant} on input any $\qdist$ equalizes retrieval probabilities across equidistant pairs, that is, for all pairs $(v,v'),(v'',v''')$ with $dist(v,v')=dist(v'',v''')$:
    \begin{displaymath}
        \Pr_{\widehat{\qdist}}[v \cap v']=\Pr_{\widehat{\qdist}}[v'' \cap v'''].
    \end{displaymath}
    \label{theorem:alg2}
\end{theorem}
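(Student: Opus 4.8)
The plan is to prove that Algorithm~\ref{algo:equidistant} achieves the desired flattening by induction on the distance value $d$, processed in decreasing order from $d=k(N-1)$ down to $d=0$, exactly as the outer loop iterates. The crucial structural fact I would rely on is the monotone ``containment'' relationship between queries and pairs of values: the minimum bounding query (MBQ) of a pair $t=(v,v')$ is the smallest hyperrectangle covering both endpoints, and a query covers $t$ if and only if it dominates this MBQ in the appropriate sense. First I would establish the key non-interference lemma: increasing the weight of the MBQ of a pair $t$ at distance $d$ affects $\Pr[e_a \cap e_b]$ only for those pairs $(a,b)$ that are \emph{covered by} the MBQ of $t$, and every such pair must have distance strictly less than $d$ (since a proper sub-rectangle spans a strictly smaller $L^1$ extent, while $t$'s own endpoints sit at opposite corners realizing distance exactly $d$). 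This is the analogue of the observation in Theorem~\ref{theorem:flat_singleton} that bumping the singleton query $[v_i,v_i]$ touches only $\Pr[e_{v_i}]$.

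Given that lemma, the induction is clean. In the base step, the loop processes $d=k(N-1)$, the unique maximal distance realized only by the pair of opposite corners $\mathbf{1}$ and $(N,\ldots,N)$; there is a single such pair (up to the symmetry the paper already treats), so its probability is trivially already equal to the max, and no adjustment is needed. For the inductive step at a generic distance $d$, I would assume that after the iterations for $k(N-1),\ldots,d+1$ have completed, all pairs of each distance $d'>d$ have already been equalized within their distance class. The algorithm then finds the distance-$d$ pair $t_{\textsf{mx}}$ of currently-highest probability, records the covering-weight sum $s_{\textsf{mx}}$, and for every distance-$d$ pair $t$ adds $(s_{\textsf{mx}}-s_t)$ to the weight of $t$'s MBQ. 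Since the MBQ of a distance-$d$ pair covers \emph{that} pair (raising its covering-weight sum by exactly the added amount) but, by the non-interference lemma, only otherwise touches pairs of distance strictly less than $d$, the adjustment brings every distance-$d$ pair's covering-weight up to precisely $s_{\textsf{mx}}$ without disturbing any pair of distance $>d$ that was already flattened in earlier rounds. Hence after the $d$-iteration all distance-$d$ pairs share a common covering-weight sum.

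The final step is to convert ``equal covering-weight sums'' into ``equal probabilities.'' Because $\Pr[e_v\cap e_{v'}]$ equals the covering-weight sum for $(v,v')$ divided by the (global) total weight, and the normalizing denominator is shared across all pairs, equality of covering-weight sums within a distance class yields equality of the intersection probabilities within that class. Taking the product over all $d$ from $k(N-1)$ down to $0$, every distance class has been flattened by the time the loop terminates, and later rounds never undo earlier ones by the non-interference property, giving the theorem's conclusion $\Pr_{\widehat{\qdist}}[e_v\cap e_{v'}]=\Pr_{\widehat{\qdist}}[e_{v''}\cap e_{v'''}]$ whenever $dist(v,v')=dist(v'',v''')$.

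I expect the main obstacle to be the non-interference lemma, specifically the claim that the MBQ of a distance-$d$ pair, when used to ``top up'' weight, never perturbs any pair of distance $\geq d$ other than the target pair itself. The subtlety is that a single MBQ can cover \emph{many} pairs simultaneously, and I must argue carefully that \emph{all} of those collaterally-covered pairs have distance strictly below $d$ --- equivalently, that the only distance-$d$ (or larger) pair contained in a hyperrectangle of $L^1$-diameter $d$ is the pair of its own extreme corners. This requires a short geometric argument that any pair of points strictly inside, or on non-opposite faces of, the MBQ has strictly smaller $L^1$ separation, so that processing distances in \emph{decreasing} order is exactly what preserves the invariant. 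Care is also needed around ties (multiple pairs realizing the same distance via different corner configurations) and the degenerate $d=0$ case, where the ``pair'' $(v,v)$ has MBQ $[v,v]$ and the argument degenerates into the singleton flattening of Theorem~\ref{theorem:flat_singleton}.
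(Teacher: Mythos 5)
Your overall strategy --- processing distances in decreasing order, arguing that the weight adjustments for distance-$d$ pairs cannot disturb pairs of larger distance, and then converting equal unnormalized covering-weight sums into equal probabilities via the shared global denominator --- is the same route the paper takes, and your handling of the normalization step is actually more explicit than the paper's. However, the precise form in which you state your key non-interference lemma is false in dimension $k\ge 2$, and you yourself flag it as the load-bearing step. You claim that the only pair of distance $\ge d$ contained in a hyperrectangle of $L^1$-diameter $d$ is the pair of its two extreme (dominated/dominating) corners. Counterexample in $\V=[N]^2$: the MBQ of $t_1=\bigl((1,1),(3,3)\bigr)$ is the rectangle $[(1,1),(3,3)]$ of diameter $4$, and it also covers the pair $t_2=\bigl((1,3),(3,1)\bigr)$, which likewise has $L^1$-distance $4$. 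So topping up the MBQ of $t_1$ does perturb another pair in the same distance class, contrary to your lemma as stated.

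The gap is repairable, and the repair is short: if a pair $t_2$ with $dist(t_2)=d$ is covered by $\mathrm{MBQ}(t_1)$ where $dist(t_1)=d$, then $\mathrm{MBQ}(t_2)\subseteq\mathrm{MBQ}(t_1)$, and since a proper sub-rectangle has strictly smaller $L^1$-diameter while both MBQs have diameter exactly $d$, in fact $\mathrm{MBQ}(t_2)=\mathrm{MBQ}(t_1)$. Two pairs with the same MBQ are covered by exactly the same set of queries, so their covering-weight sums coincide at every point of the algorithm's execution; they are topped up together, and the later one receives a zero increment when its turn comes. With that observation your induction goes through: pairs of distance greater than $d$ are untouched because a query of size $d$ covers no pair of distance exceeding $d$ (the paper's second geometric fact), all distance-$d$ pairs reach the common sum $s_{\textsf{mx}}$, and same-MBQ ties stay synchronized automatically. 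Your base case is similarly slightly off --- there are $2^{k-1}$ pairs realizing the maximal distance $k(N-1)$, not one --- but they all share the full-domain MBQ, so the same fix applies.
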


Figure~\ref{fig:flat_pairs} illustrates the effect of $\widehat{\qdist}$ on a toy example over $\values=[4]\times[4]$, where axes represent the plaintext values of each record in the pair.
All pairs at the same $L^1$ distance share the same retrieval probability (indicated by a unique color), so an attacker observing co-retrieval frequencies learns the pairwise distances but gains no advantage in recovering the true values.
More generally, $\rs{\widehat{\qdist}}(\db)$ consists of all databases $\db'$ over $\rids,\V$ in which every pair of records preserves its $L^1$ distance from $\db$; the attacker learns pairwise distances, and nothing more.

\textbf{Comparison with \texttt{PANCAKE~\cite{DBLP:conf/uss/GrubbsKLBL0R20}.}} The analysis in this section presents some resemblance to PANCAKE (which also manipulates access frequencies to resist frequency-matching attacks), but differs in two fundamental ways: 
\emph{(1) Setting.} PANCAKE targets key-value stores where every query retrieves exactly one record, whereas our setting involves range queries that may retrieve anywhere from one to all $n$ records simultaneously. This makes flattening significantly harder in our setting, as co-retrieval frequencies across tuples of records must be equalized, not just individual retrieval frequencies. 
\emph{(2) Mechanism:} PANCAKE achieves flattened single-record retrievals by injecting fake accesses and replicating records at the system level; our analysis aims to identify a native \emph{query distribution} directly, without any fake queries or data replication.

 \section{Evaluation}
\label{sec:evaluation}
\sethlcolor{yellow}

\begin{figure}[hbp]
    \centering
    
    \includegraphics[width=\linewidth]{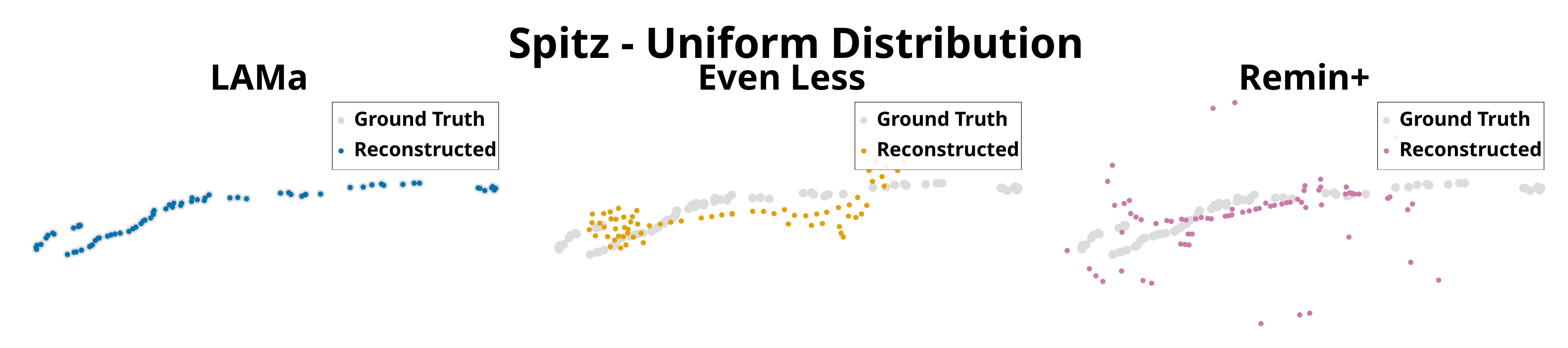}
    
    \includegraphics[width=\linewidth]{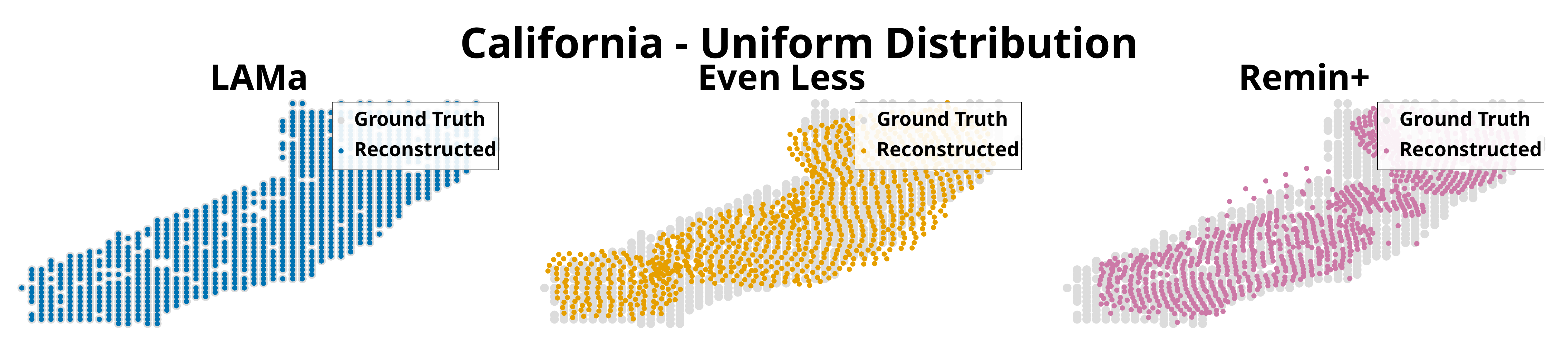}
    
    \includegraphics[width=\linewidth]{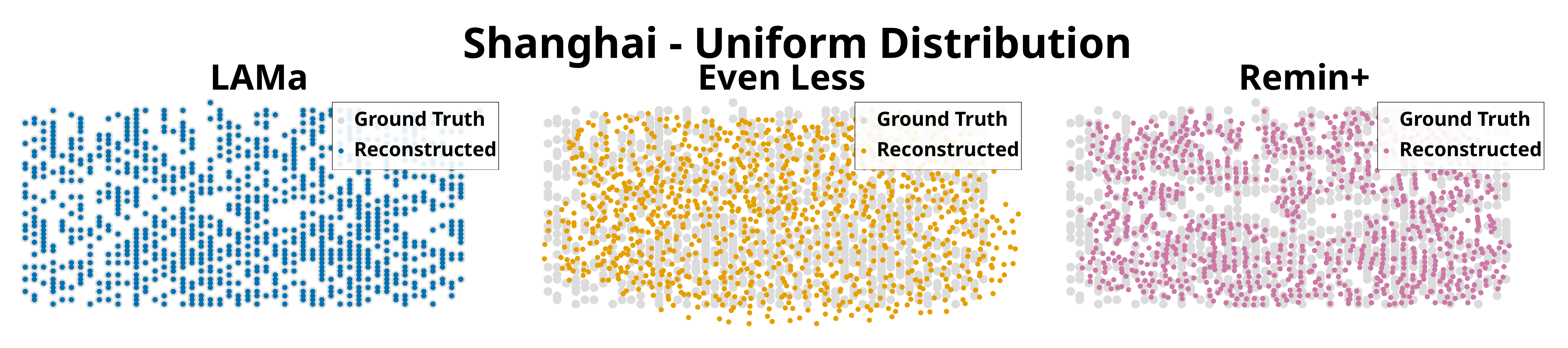}
    
    \includegraphics[width=\linewidth]{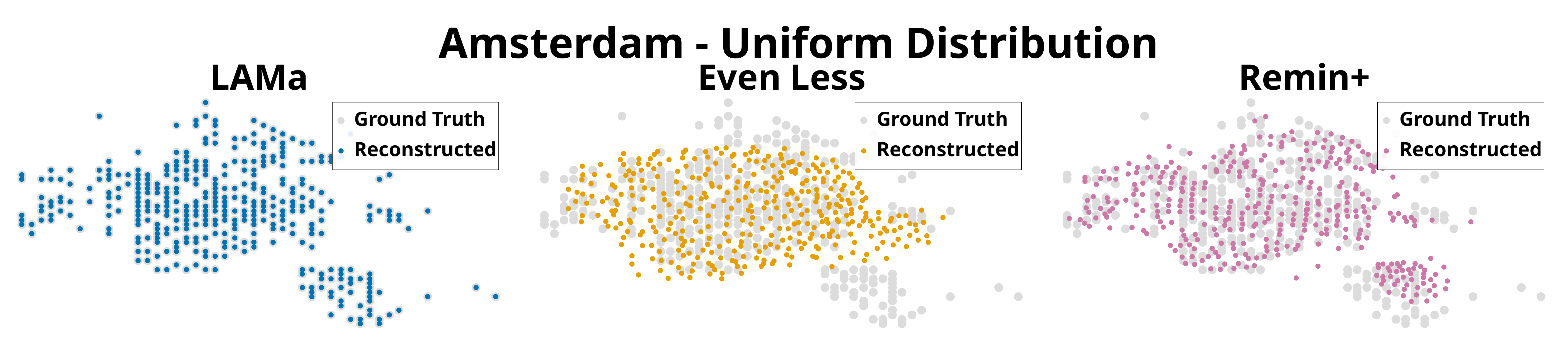}
    
 \includegraphics[width=\linewidth]{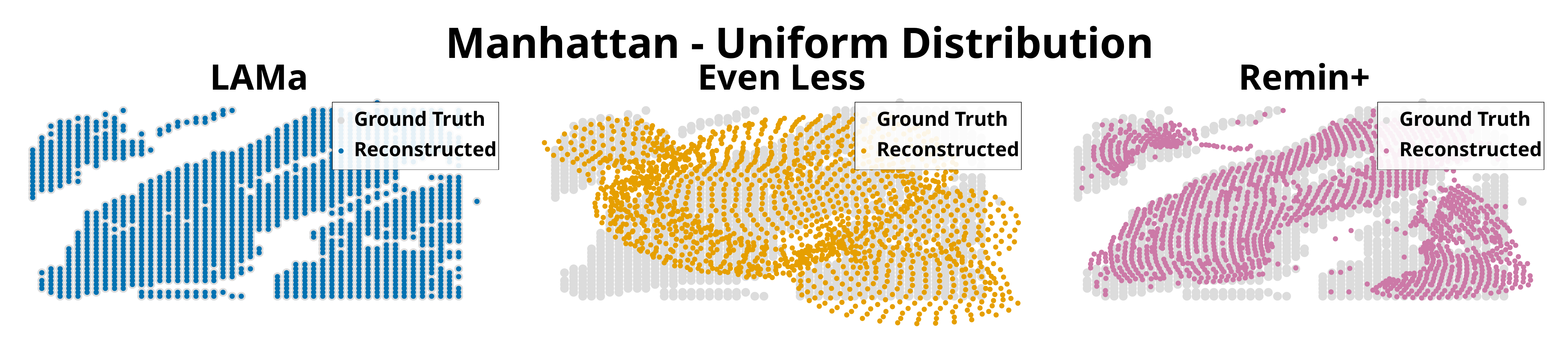}
    
    \includegraphics[width=\linewidth]{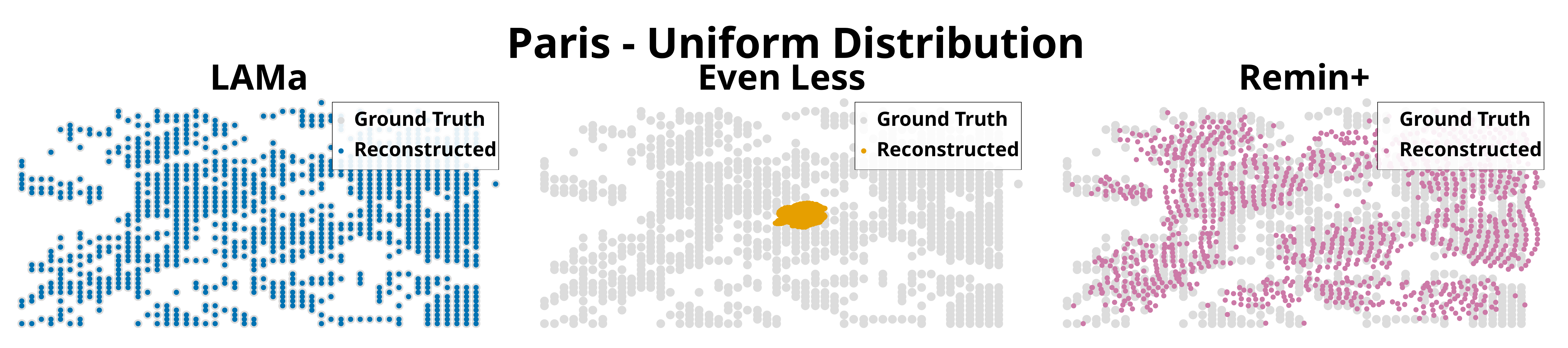}
    
    \caption{Comparison of reconstructions across multiple datasets. (Spitz, California, Shanghai, Amsterdam, Manhattan, and Paris). For each dataset, we show the best possible reconstruction.}
    \label{fig:reconstruction_comparison}
\end{figure}

\begin{figure}[htbp]
    \centering
    
    \includegraphics[width=\linewidth]{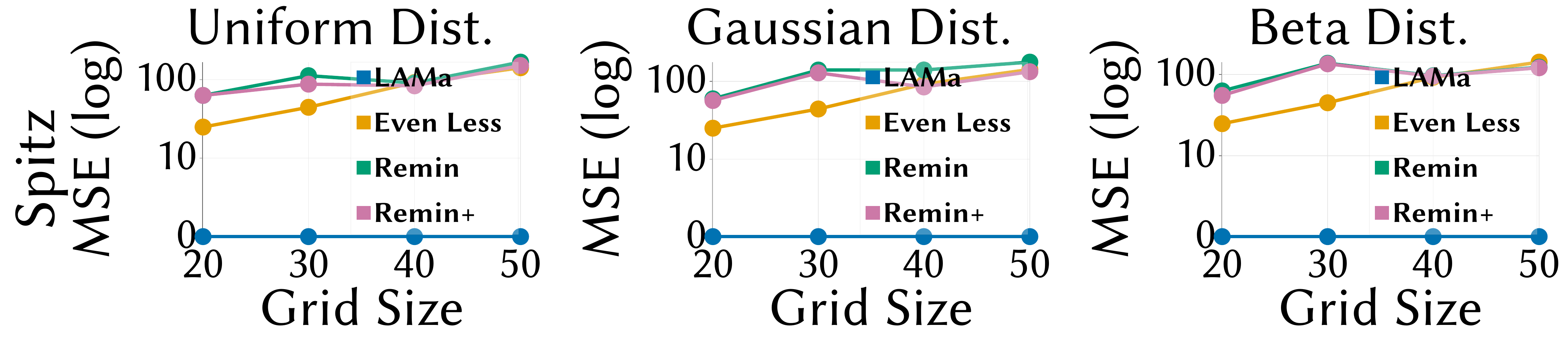}
    
    \includegraphics[width=\linewidth]{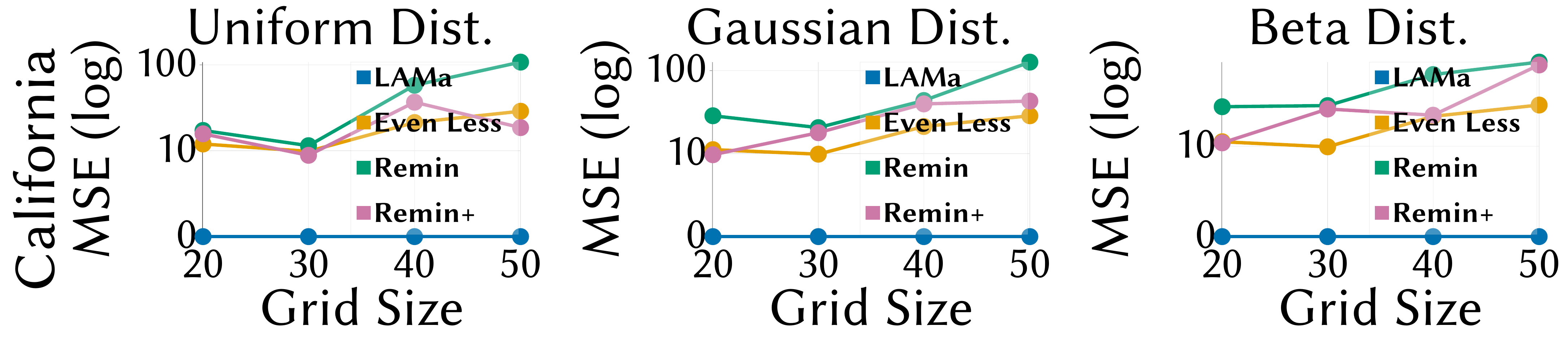}
    
    \includegraphics[width=\linewidth]{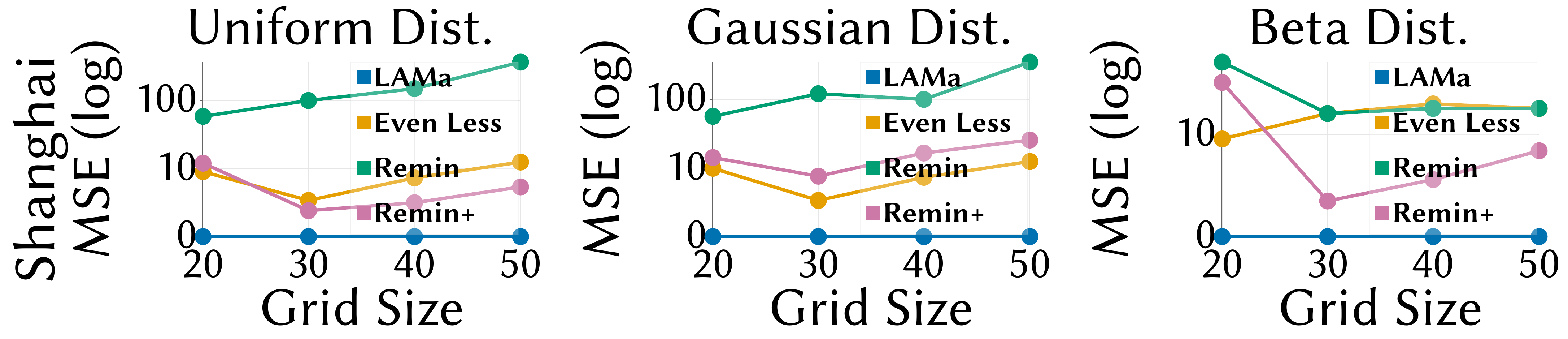}
    
    \includegraphics[width=\linewidth]{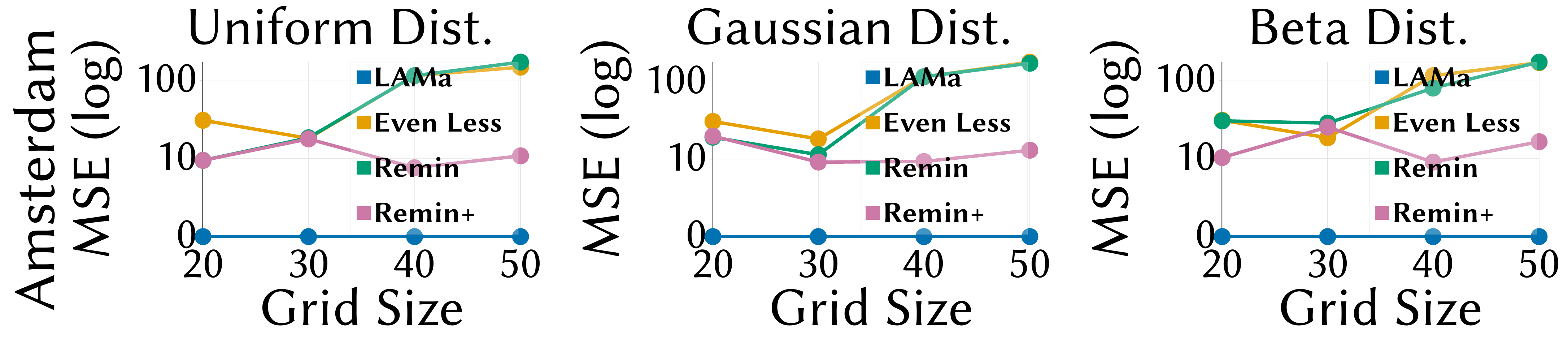}

    \includegraphics[width=\linewidth]{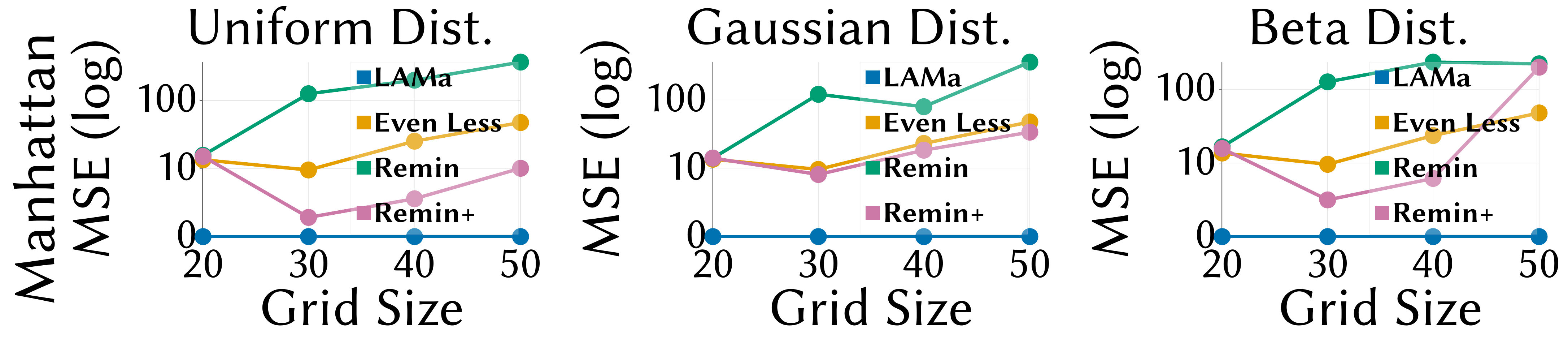}
    
    \includegraphics[width=\linewidth]{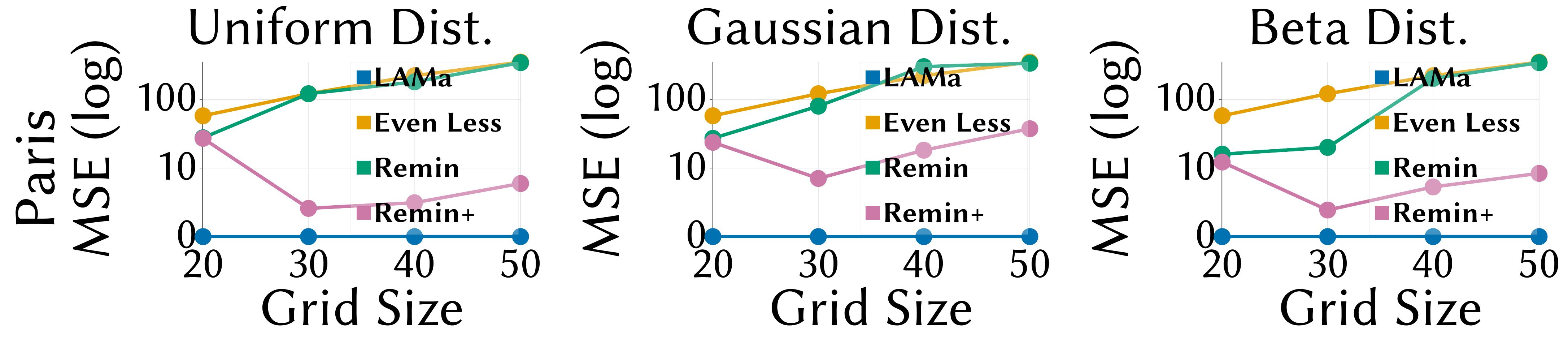}

    \caption{Mean Squared Error of best found reconstruction measured at different grid scales and query distributions.}
    \label{fig:mse_grid_size}
\end{figure}

\begin{figure}[htbp]
    \centering
    
    \includegraphics[width=0.9\linewidth]{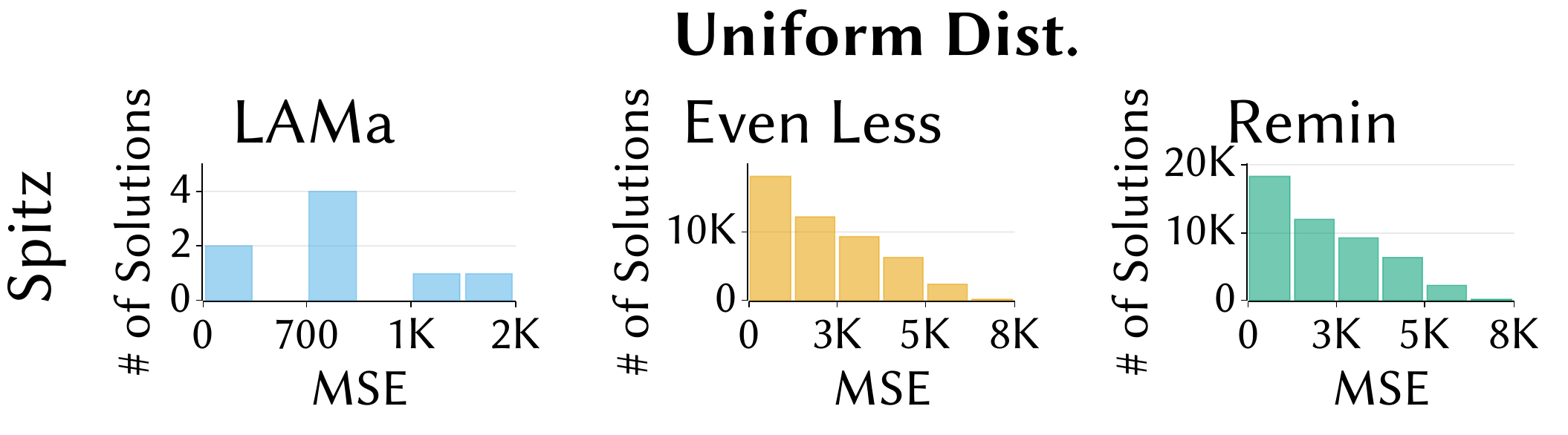}
    
    \includegraphics[width=0.9\linewidth]{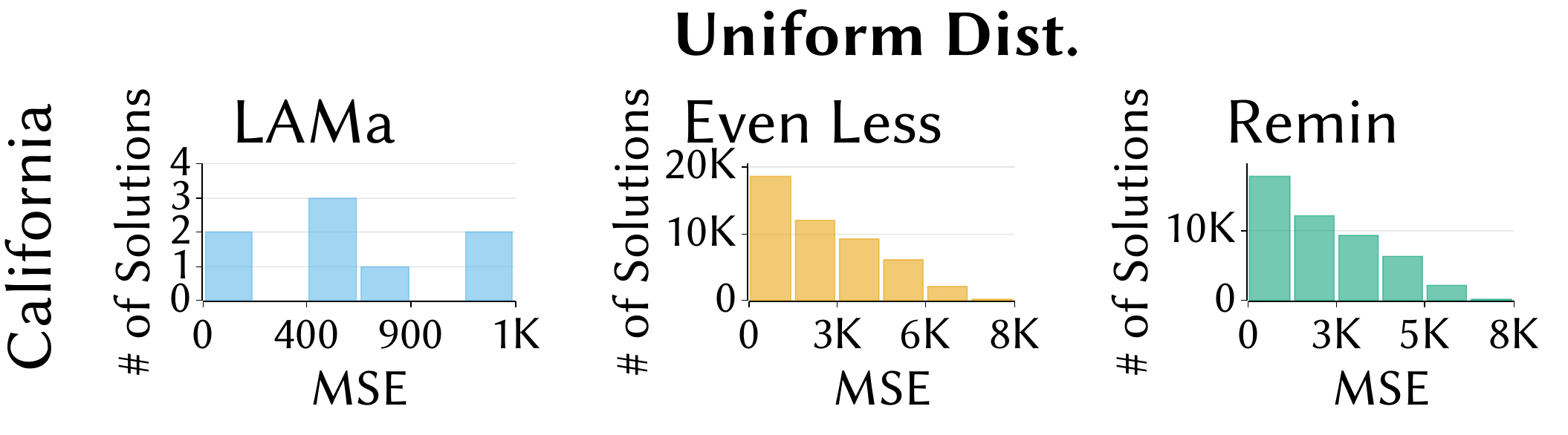}
    
    \includegraphics[width=0.9\linewidth]{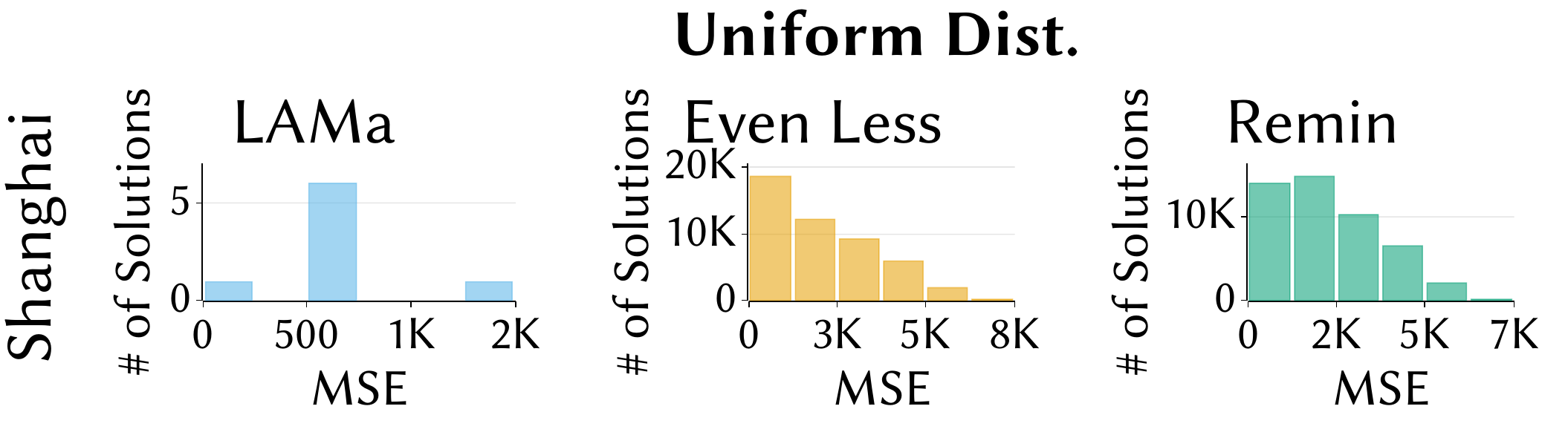}
    
    \includegraphics[width=0.9\linewidth]{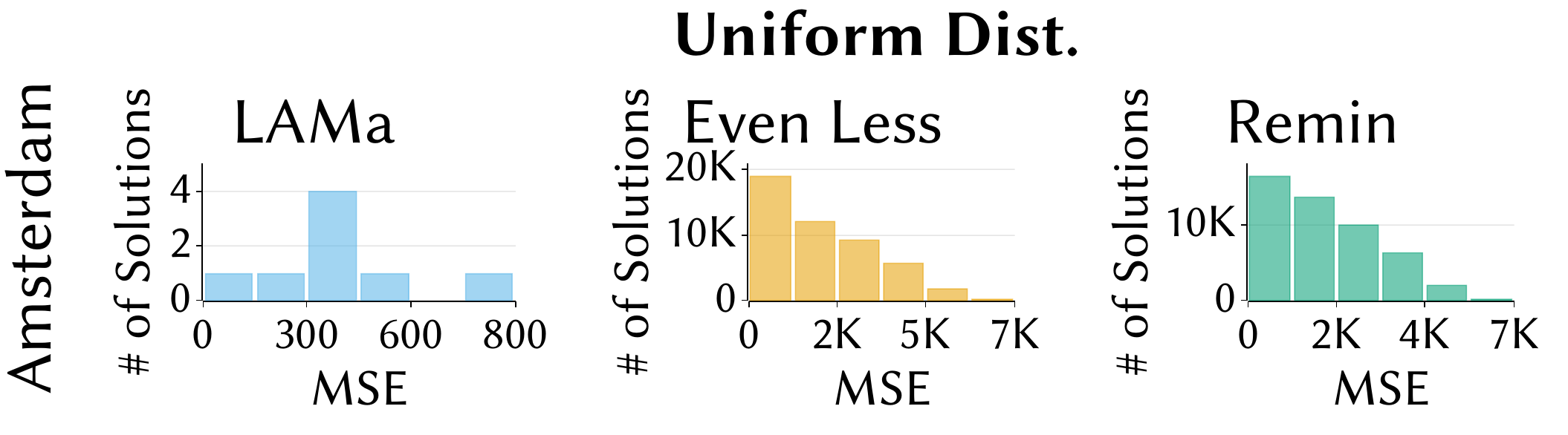}

\includegraphics[width=0.9\linewidth]{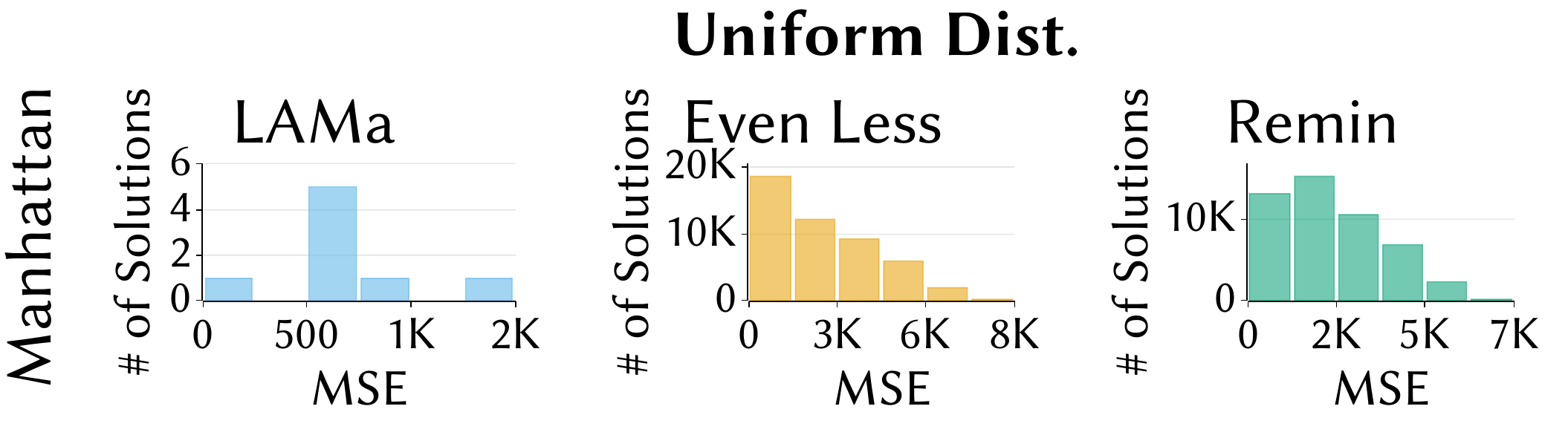}
    
    \includegraphics[width=0.9\linewidth]{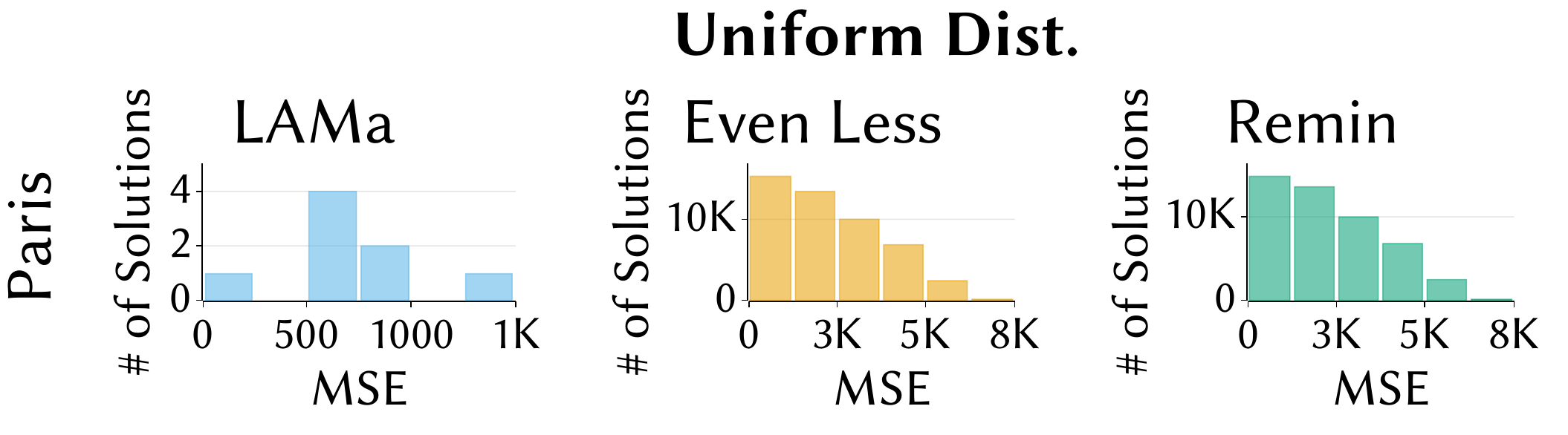}
    
    \caption{Histograms of all possible solutions for each method and their respective Mean Squared Error. Plots use a uniform query distribution.}
    \label{fig:mse_histogram}
\end{figure}

We implement \framework to: \textbf{(1)} Evaluate \framework in the ideal case for the attacker in which the frequencies are exact; for this setting, the state-of-the-art attacks \texttt{Even Less}, \texttt{Remin}, and \texttt{Remin+}~\cite{even_less,remin}, referred to as SOTA hereafter, are also allowed their ideal case, meaning, observing all possible responses, 
and injecting arbitrary plaintexts, \textbf{(2)} Evaluate the effectiveness of \framework in the ``real world'', where observed frequencies do not match exactly the probabilities of retrieval, i.e., only a percentage of queries are observed,
\textbf{(3)} Evaluate \framework's performance on the query distribution introduced in Section~\ref{sec:flattening} that increases the reconstruction space. 
The datasets, query distributions,  and grid-sizes used in our evaluation are taken \textbf{directly from prior work}~\cite{even_less,remin} on multidimensional attacks.
Omitted plots can be found in Appendix~\ref{sec:appendix_plots}.

\textbf{Testbed.} 
All experiments were compiled on a Debian 12 container with a 64-core 128-thread AMD EPYC ZEN 2 CPU. 
For every experiment on 2-dimensional data, we use 512GB of RAM, and for 3 dimensions and above, we use 3TB of RAM. 
\framework is implemented and compiled using Rustc v1.95.0 and uses Google Optimization Tools (OR-Tools) V9.15 to solve constraint satisfaction problems. 
All results for \texttt{Remin}, \texttt{Remin+} \cite{remin} and \texttt{Even Less} \cite{even_less} were generated using their respective Python codebases from the original authors.

\textbf{Datasets.} We use the same datasets as~\cite{remin} alongside the Spitz dataset~\cite{spitz}.
\emph{Spitz}: 6 months of mobile phone records of politician Malte Spitz~\cite{spitz}; only November 1--15 is used.
\emph{Grid}: A $d$-dimensional grid with user-specified sparsity and size, used as a stand-in for higher-dimensional datasets~\cite{remin}.
\emph{California}: 21,000 road intersections in California, from~\cite{cali_map}.
\emph{Amsterdam}: Water access locations in Amsterdam, collected via OpenStreetMaps by~\cite{remin}.
\emph{Manhattan}: Highway crossings across Manhattan, created by~\cite{remin}.
\emph{Paris}: Shop locations across central Paris, created by~\cite{remin}.
\emph{Shanghai}: Bus stops in downtown Shanghai, created by~\cite{remin}.
We follow the methodology of previous work~\cite{remin,even_less} and scale 2D data to a 50x50 grid unless specified otherwise.
\emph{New Hampshire}: A $16\times16\times14$ 3D terrain dataset of the White Mountains, taken from~\cite{new_hampshire}.

\subsection{Ideal-Case Evaluation for the Attacker}

In the first part of our evaluation, we consider the ideal-case scenario for the attacker; that is, \framework uses exact frequencies, i.e., equalities hold in Equation~(\ref{eq:equality}), while the SOTA attacks~\cite{even_less, remin} have observed all possible responses, and~\cite{remin} additionally allows plaintext injection into the database under attack.

\textbf{Visual Comparison.} Figure~\ref{fig:reconstruction_comparison} compares the spatial reconstruction quality of \framework, \texttt{Even Less}~\cite{even_less}, and \texttt{Remin+}~\cite{remin} across five geographic datasets under a uniform query distribution, showing the best reconstruction produced by each technique. \framework produces \emph{perfect reconstructions}, fully overlapping with the ground truth across all datasets. 
Both SOTAs achieve reasonable reconstructions, but can degrade on sparse and complex datasets. 
The sparsity of Paris causes \texttt{Even Less} to collapse its reconstructions into a small concentrated region, a consequence of their graph-based approach connecting geometrically distant points with edges and the subsequent graph drawing step pulling them falsely close together.    
\emph{Only \framework maintains consistent reconstruction as spatial sparsity and geometric complexity increase.}

\textbf{Mean Squared Error-Based Comparison.} Figure~\ref{fig:mse_grid_size} reports the MSE (logarithmic scale) of each technique across five datasets, three query distributions (Uniform, Gaussian, Beta), and grid sizes from 20 to 50, showing the best reconstruction per technique. \framework achieves zero MSE across all datasets and configurations, with its curve remaining flat. Both \texttt{Even Less} and \texttt{Remin} incur \emph{MSE up to two orders of magnitude higher} than \framework, growing monotonically with grid size, demonstrating the fundamental limitations of graph-based reconstruction.
\texttt{Remin+} outperforms the other two SOTAs yet still falls short of \framework by orders of magnitude on more complex datasets. 
The gap between \framework and the SOTAs is consistent across all three query distributions, confirming that \framework's advantage is not specific to any distribution or dataset.

\textbf{Comparing the Reconstruction Space.} Figure~\ref{fig:mse_histogram} shows the histogram of MSE (based on a single run) across all possible solutions per technique under a uniform query distribution. 
For the SOTAs~\cite{even_less,remin}, we enumerate all combinations of rotation, scaling, and translation applied to each graph embedding; note that the attacker has \emph{no way of identifying which parameterization yields the best reconstruction}\footnote{As opposed to the experiment of the previous paragraph in which we hand-picked the best reconstruction; an evaluation methodology used in prior work~\cite{even_less,remin}.}, and must therefore pick from a vast reconstruction space. \framework produces 8 tightly concentrated set of solutions with low MSE, while \texttt{Even Less} and \texttt{Remin} each generate tens of thousands of solutions with MSE many times larger than \framework's worst case.
We exclude  \texttt{Remin+} from this comparison as \texttt{Remin+} uses plaintext injection (a much more advantageous but unrealistic threat model) to find a single reconstruction with optimal alignment. 
Nevertheless, \texttt{Remin+} incurs non-zero error: $5.45$ (Shanghai), $5.96$ (Paris), $18.72$ (California), $11.01$ (Amsterdam), $10.24$ (Manhattan), and $151.02$ (Spitz).

\subsection{Realistic-Case Evaluation for the Attacker}
\label{sec:realistic_evaluation}

In the second part of our evaluation, we consider the real-case scenario; that is, the attacker observes $10\%$, $20\%$, $30\%$ of available queries (or ``query percentages'' from~\cite{remin}) to approximate the frequencies, with ($\epsilon,\delta$)-guarantees, much like Equation~(\ref{eq:eps_bound}). 
The SOTA attacks~\cite{even_less, remin} observe the same number of queries, and~\cite{remin} additionally allows plaintext injection into the database under attack. 
Due to the volume of experiments, we use a $15\times15$ grid in this section; all trends have been verified to hold on larger grid sizes.

\begin{figure}[htbp]
    \centering
    
     \includegraphics[width=\linewidth]{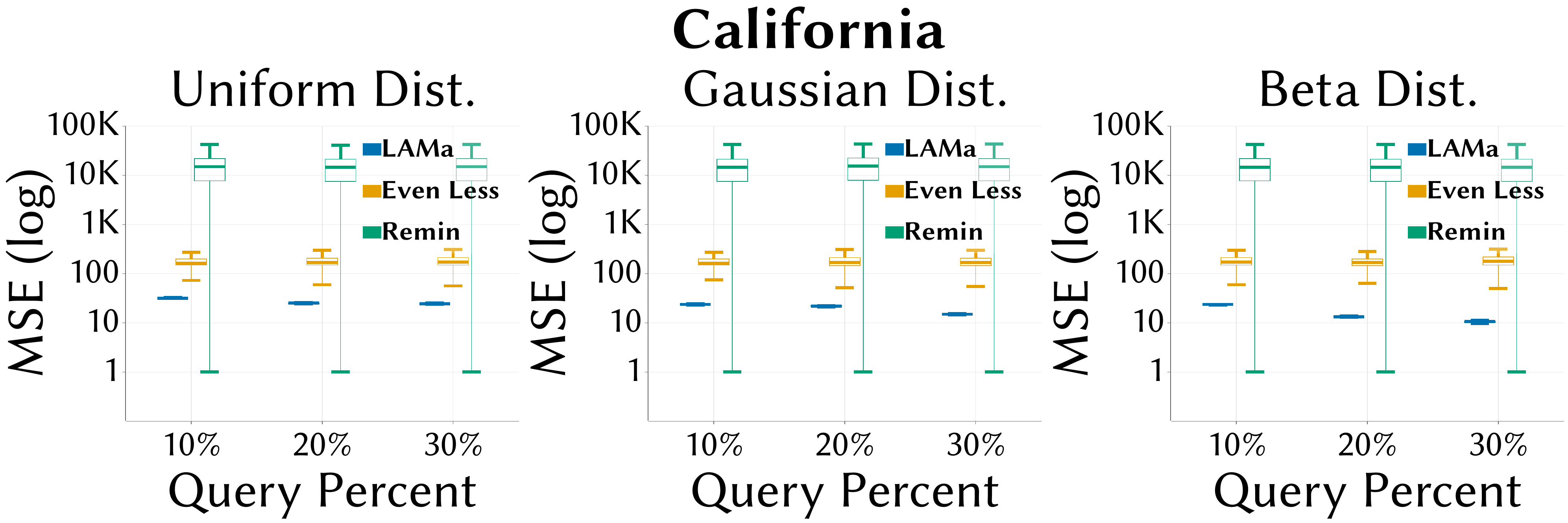}
    
    \includegraphics[width=\linewidth]{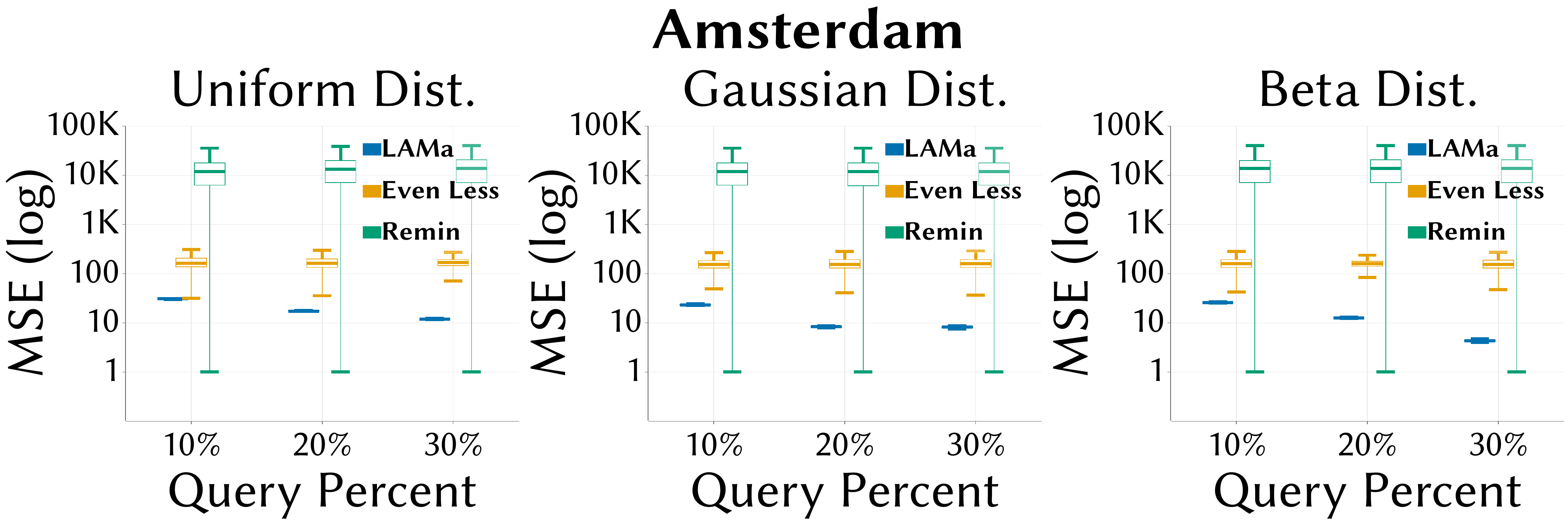}
    
     \includegraphics[width=\linewidth]{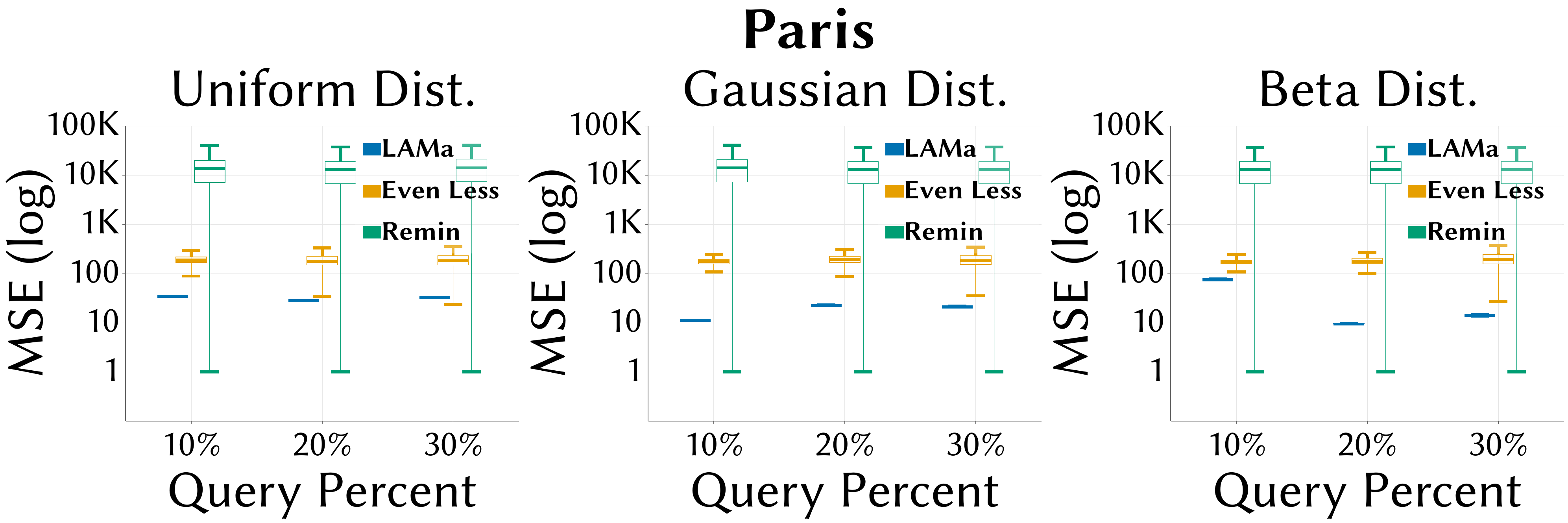}
    
    \includegraphics[width=\linewidth]{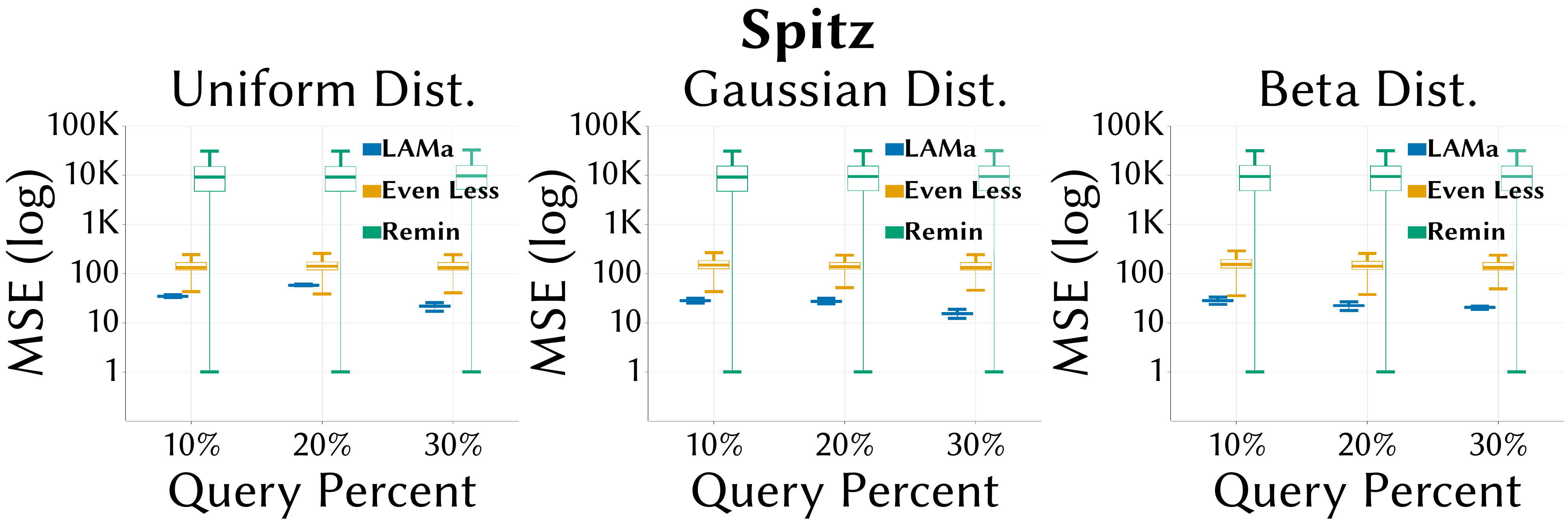}

        \caption{MSE across sampled reconstructions at different query percentages.}
    \label{fig:big_combined_figure}
\end{figure}

\textbf{MSE Across Sampled Reconstruction.} Figure~\ref{fig:big_combined_figure} reports MSE distributions via boxplots across California, Amsterdam, Paris, and Spitz under three query distributions and query percentages of 10\%, 20\%, and 30\%. 
For each technique, we generated $10$K reconstructions and measured their MSE from the ground truth. 
At higher query percentages, \framework's median MSE is up to \emph{an order of magnitude lower} than both SOTAs~\cite{even_less, remin} across all datasets and distributions, with a tightly concentrated box reflecting low variance. Crucially, \framework's median MSE decreases (in most cases) as the query percentage increases, confirming that more observed queries give a more accurate view of the reconstruction space. 
In contrast, \texttt{Even Less} and \texttt{Remin} maintain large, high-variance boxes across all query percentages, with no consistent improvement as more queries are observed, reflecting the fundamental limitations of graph-based reconstruction under realistic observation conditions.

    \begin{figure}[htbp]
    \centering

     \includegraphics[width=\linewidth]{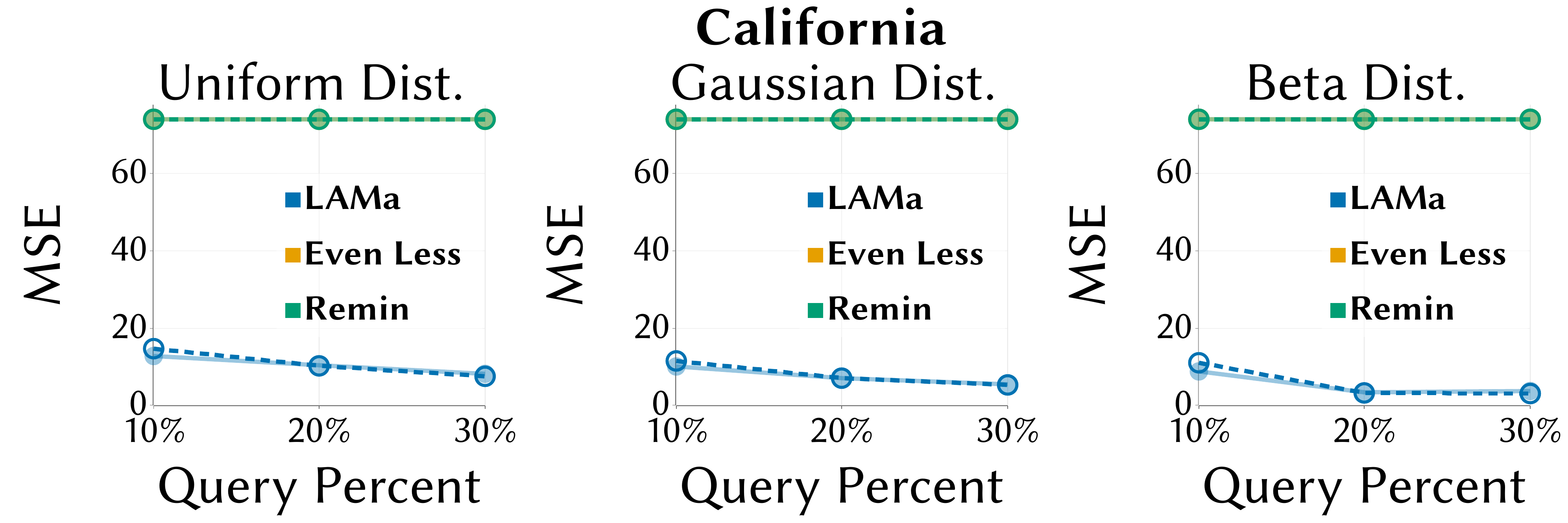}
    
    \includegraphics[width=\linewidth]{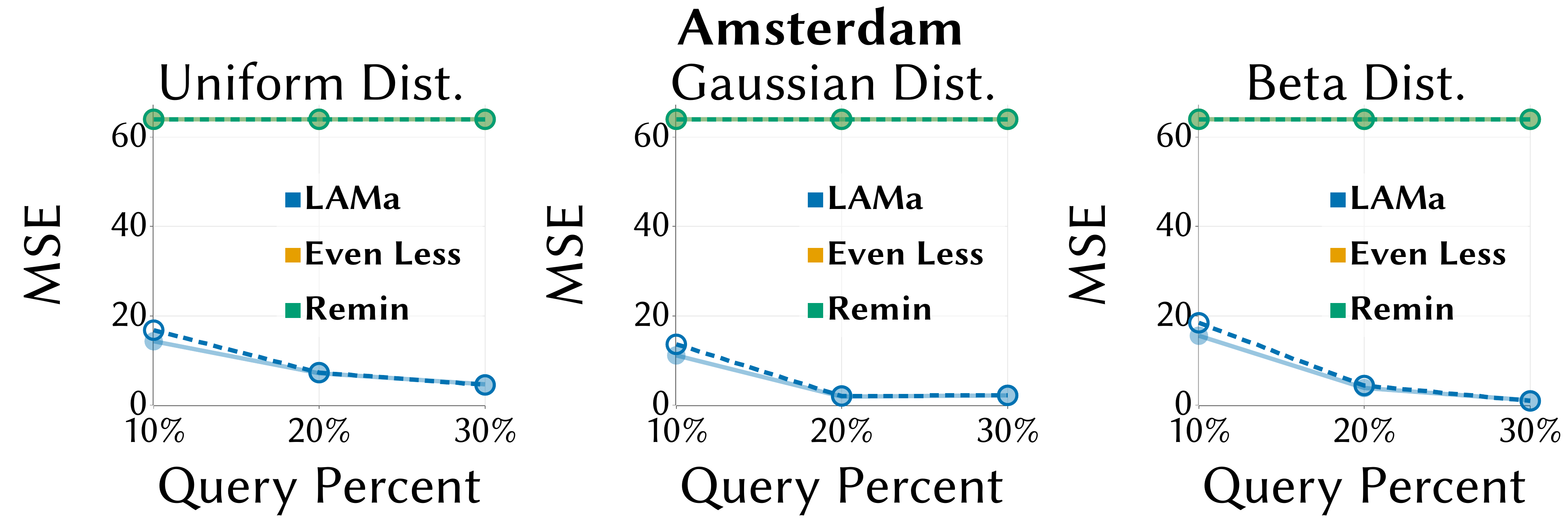}
    
     \includegraphics[width=\linewidth]{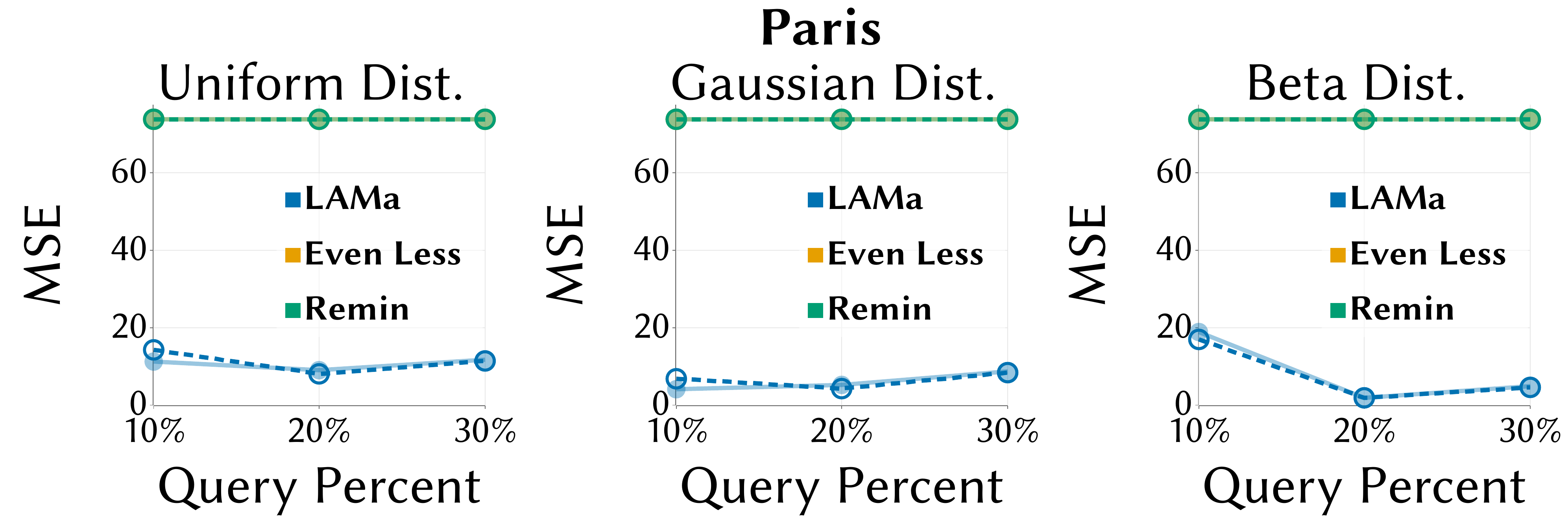}
    
    \includegraphics[width=\linewidth]{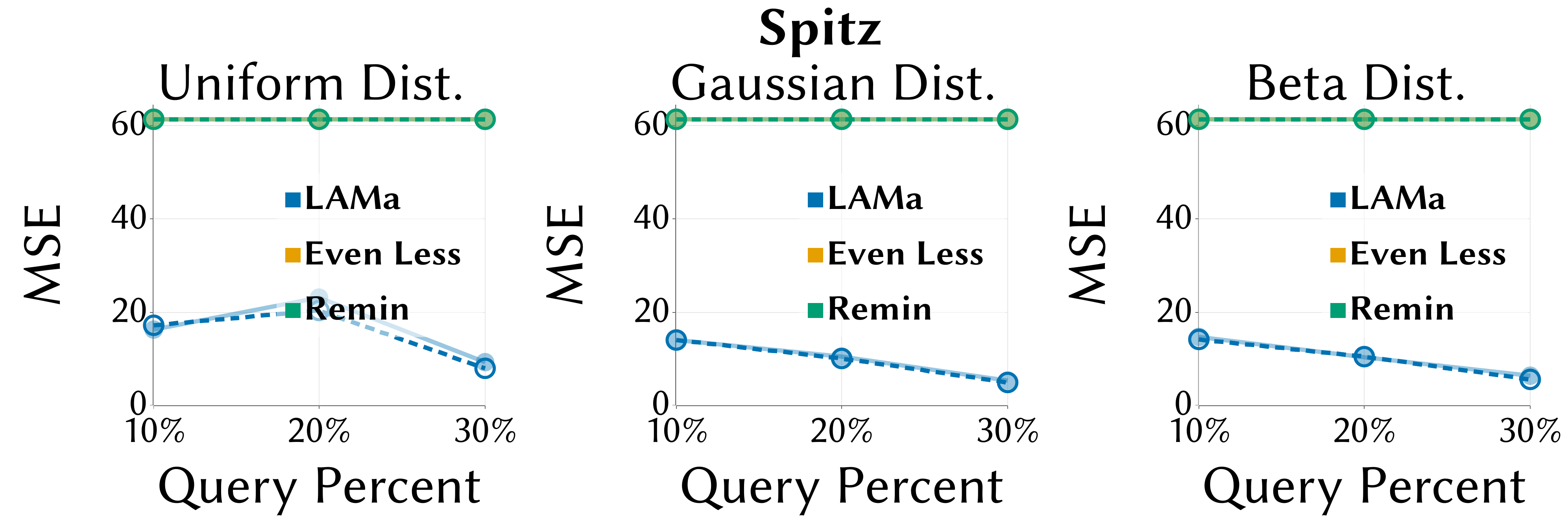}
    
\caption{MSE of output quality guarantees derived from the sampled reconstruction space. Dashed line: plaintext point placed at the center of the largest diameter across all its reconstructions. Solid line: plaintext point placed at the centroid across all its reconstructions.}
    \label{fig:worst_case}
\end{figure}

\textbf{Output Quality Guarantees.} Figure~\ref{fig:worst_case} reports the MSE of output quality guarantees (averaged across ten runs) derived from the sampled reconstruction space, across California, Amsterdam, Paris, and Spitz under three query distributions and query percentages of 10\%, 20\%, and 30\%. 
We emphasize that we apply the output quality guarantee framework introduced in Section~\ref{sec:output} to both \framework and the SOTAs~\cite{even_less,remin}, noting that \emph{these SOTAs provide no such formal guarantees in their original work}.  
For each method, we report the two output strategies from Section~\ref{sec:output}---the \texttt{diameter} approach (dashed line) and the \texttt{centroid} approach (solid line)---each committing to a single reconstructed database after deriving several initial solutions/reconstructions. 
Both SOTAs~\cite{even_less, remin} remain flat across all query percentages and datasets (due to symmetries of moving graph embeddings), showing no improvement as more queries are observed. In contrast, both placement strategies for \framework yield MSE that decreases consistently as the query percentage increases, remaining well below the SOTAs across all configurations, confirming that our theoretical output quality guarantees translate directly into empirical improvements. 
We also report the maximum $\epsilon$---the worst-case distance between a single reconstructed point and the ground truth---in Table~\ref{tab:output_quality}. \framework outperforms SOTA across all cases.

\begin{table}[h]
\centering
\setlength{\aboverulesep}{0pt}\setlength{\belowrulesep}{0pt}
\resizebox{\columnwidth}{!}{%
\begin{tabular}{l|c|c|c|c|c|c|c|c|c|c|c|c|}
\cline{2-13}
& \multicolumn{2}{c|}{\rotatebox{60}{Spitz}} 
& \multicolumn{2}{c|}{\rotatebox{60}{California}} 
& \multicolumn{2}{c|}{\rotatebox{60}{Shanghai}} 
& \multicolumn{2}{c|}{\rotatebox{60}{Amsterdam}} 
& \multicolumn{2}{c|}{\rotatebox{60}{Manhattan}} 
& \multicolumn{2}{c|}{\rotatebox{60}{Paris}} \\
\cline{2-13}
& cent & diam & cent & diam & cent & diam & cent & diam & cent & diam & cent & diam \\
\hline
\hline
\framework & 5.47 & 4.72 & 4.24 & 3.51 & 2.37 & 1.81 & 3.43 & 3.01 & 3.35 & 2.72 & 4.20 & 3.25 \\
\hline
\texttt{Even Less} & 20.14 & 20.14 & 21.88 & 21.88 & 20.15 & 20.15 & 21.75 & 21.75 & 21.93 & 21.93 & 23.61 & 23.61 \\
\hline
\texttt{Remin} & 3877.69 & 3877.69 & 65.04 & 65.04 & 140.01 & 140.01 & 61.55 & 61.55 & 132.12 & 132.12 & 134.22 & 134.22 \\
\bottomrule
\end{tabular}
}
\caption{Output quality $\epsilon$ for \texttt{centroid} and \texttt{diameter} placement at 30\% query percentage. 
}
\label{tab:output_quality}
\end{table}

\textbf{Higher-Dimensional Databases.} In Figure~\ref{fig:high-d} we demonstrate the reconstruction of \framework with 30$\%$ of queries under Gaussian query distribution. Similar to other SOTAs we evaluate \framework's performance on higher dimensions by using the ``Grid'' dataset~\cite{remin}. A direct comparison was not available because of SOTAs codebase limitations. 
In our experiments, we measured the MSE of \framework across dimensions 2 through 6 under a Gaussian query distribution, using the centroid reconstruction method. 
\framework achieves low MSE across all dimensions, with errors of $6.5$, $8.8$, $11.4$, $14.3$, and $19.4$ for dimensions $2$, $3$, $4$, $5$, and $6$ respectively.

\begin{figure}[!h]
    \centering
    
    \includegraphics[width=\linewidth]{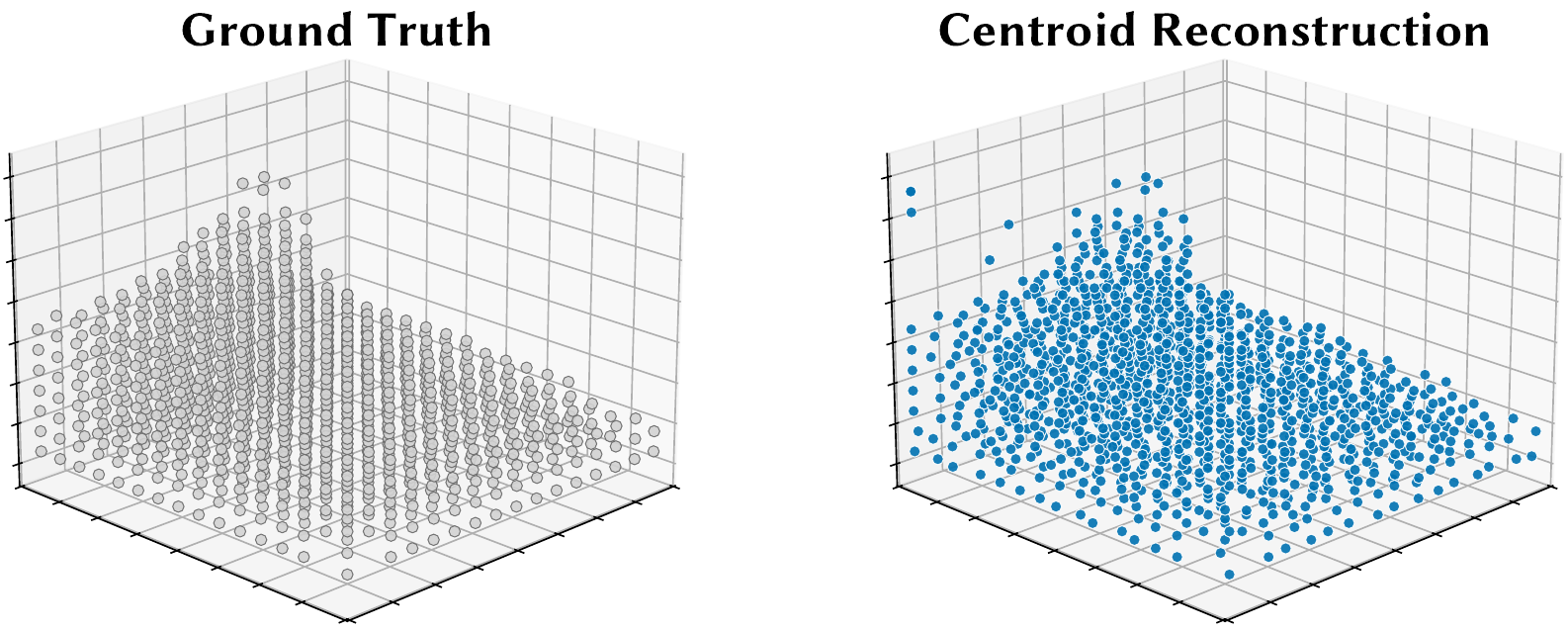}
    \caption{Reconstruction of the New Hampshire Mountains \cite{new_hampshire} using centroids over convex hulls at query percent 30 and a Gaussian distribution. MSE was 32.42}
    \label{fig:high-d}
\end{figure}

\textbf{Evaluation on Harder Query Distributions.} Figure~\ref{fig:flat} reports the cardinality of the reconstruction space of \framework under Gaussian, Uniform, and the Flattened query distribution introduced in Section~\ref{sec:flatten_tuples}, across Paris, Shanghai, and Manhattan. Under Gaussian, \framework produces a single solution with zero MSE, confirming perfect reconstruction. Under Uniform, the reconstruction space grows slightly but remains small and concentrated at low MSE. Under the Flattened distribution, the reconstruction space expands \emph{by orders of magnitude}, since this distribution has the inherent property that all $L^1$-equidistant plaintext pairs have the same probability of being retrieved.
This empirically confirms that even an information-theoretically optimal attack, i.e., \framework, can be made to stumble when the query distribution is deliberately designed to maximize reconstruction uncertainty.

\begin{figure}[htbp]
    \centering
    
    \includegraphics[width=0.82\linewidth]{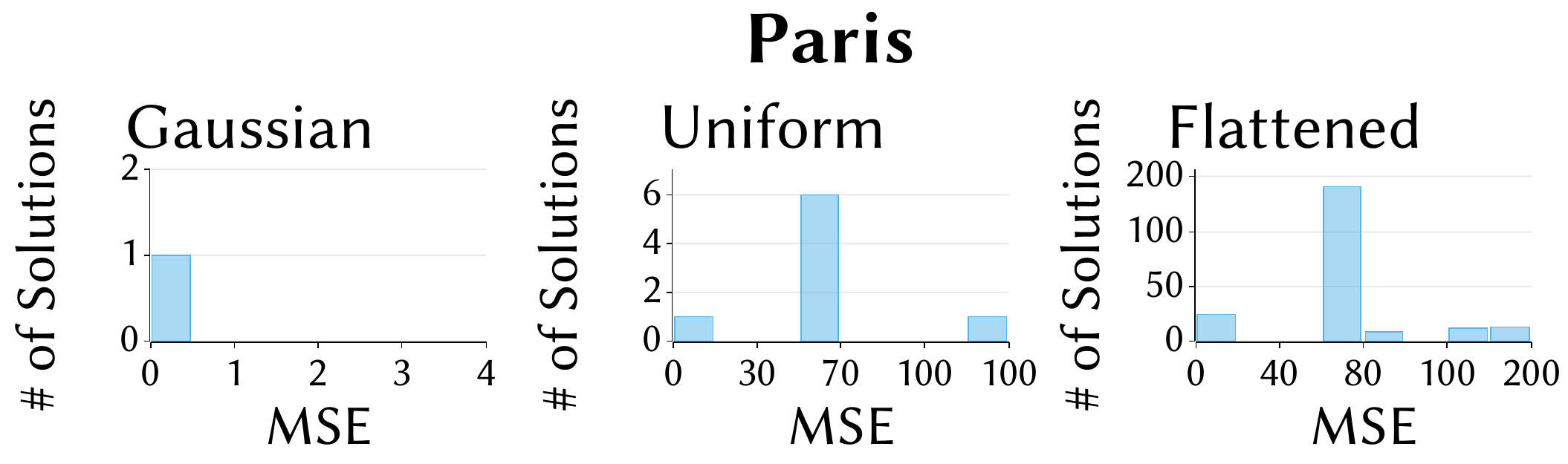}
      \includegraphics[width=0.82\linewidth]{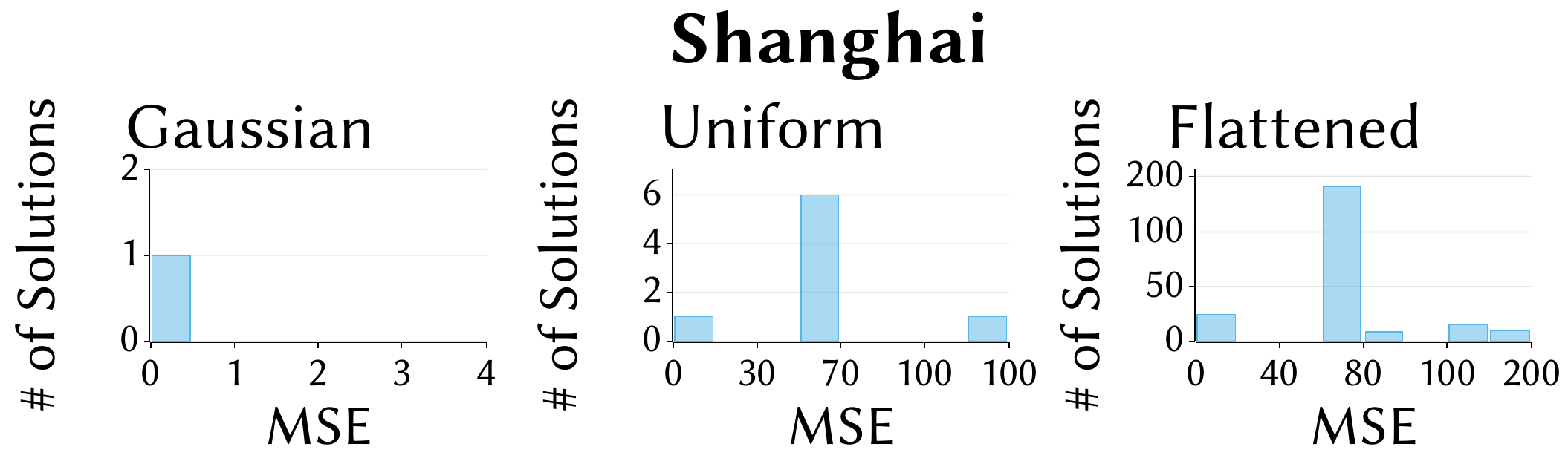}
   \includegraphics[width=0.82\linewidth]{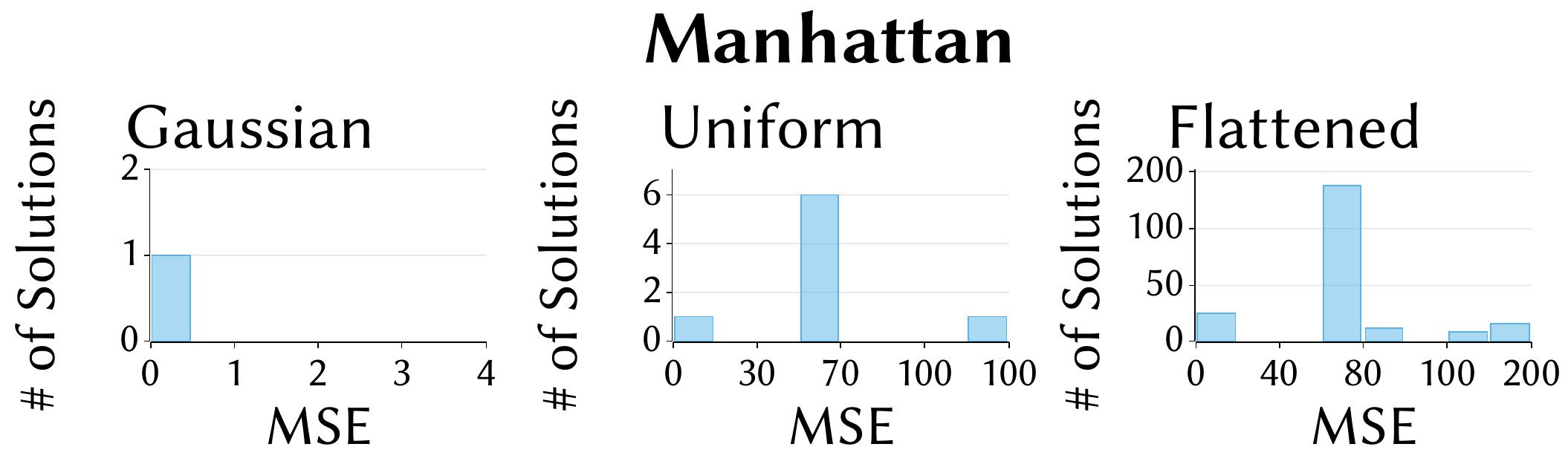}

    \caption{Evaluation of \framework under Gaussian, Uniform, and (the newly introduced) ``Flattened'' query distribution from Section~\ref{sec:flatten_tuples} that increases the reconstruction space.}
    \label{fig:flat}
\end{figure}

 \section{Conclusion}

We presented \framework, a frequency-matching attack that reconstructs plaintext coordinates of multidimensional encrypted range query databases with provable guarantees, \emph{the first of its kind}. 
By revisiting and formalizing the classical idea of frequency matching, \framework achieves coordinate-level reconstruction in arbitrary dimensions without post-hoc transformations or data injection, operating under a strictly weaker threat model than prior work~\cite{remin}. 
We complemented \framework with rigorous guarantees on query complexity, optimal parameterization via the $T_{2k}$ Selector, and worst-case output quality, and showed that these guarantees translate directly into empirical improvements across all evaluated datasets and query distributions. 
We also characterized the fundamental limits of reconstruction under query-distribution uncertainty, constructing the first formally characterized reconstruction space in this setting. Together, these results establish frequency matching as the principled foundation for multidimensional leakage cryptanalysis.

\bibliographystyle{ACM-Reference-Format} 
\bibliography{biblio}
\appendix

\section{Generative AI Usage}

We used Claude and Grammarly during the preparation
of this work to assist with phrasing and grammar edits in the main text. 
All AI-suggested text was edited and reviewed by the
authors before inclusion. 
The authors take full
responsibility for the correctness of all claims, results, and
references in this paper.

\section{Omitted Algorithms}
\label{sec:appendix_algo}

\newpage

\begin{algorithm}[t]
        \small
	 	\caption{\label{algo:singleton} Flatten Probability of Single Values }
        \KwData{Input $\qdist$ is seen as a dictionary that maps queries $\query \in \queries$ to weights $\qdist[\query]=w_{\query}$}.
        Define $v_{\textsf{mx}}$ 
        as $v_{\textsf{mx}}=\argmax_{v\in\V}\Pr[v]$ and call $\Pr[v_{\textsf{mx}}]$ as $p_1^*$\;
        Find the sum $s_{\textsf{mx}}$ of the weights of queries covering $v_{\textsf{mx}}$\;
        \For{every value $v_i$ in $\V$}
            {
            Find the sum of weights $s_{i}$ of queries covering $v_i$\;
            $\qdist[[v_i,v_i]] = \qdist[[v_i,v_i]] + (s_{\textsf{mx}} - s_{i})$\tcp*{$\Pr[v_i]= p_1^*$}
            }
        \Return $\qdist$
\end{algorithm}

\begin{algorithm}[t]
        \small
	 	\caption{\label{algo:equidistant} Flatten Frequency of Equidistant Pairs }
        \KwData{Input $\qdist$ is a dictionary that maps queries to natural number weights $\qdist[\query]=w_{\query}$}.
        \For{$d = k(N-1),k(N-1)-1,\ldots,0$}
        {
            Find the pair of distance $d$ values $t_{\textsf{mx}}=(v,v')$ with the highest probability $\Pr[v \cap v']$ among distance $d$ pairs\; 
            Find the sum $s_{\textsf{mx}}$ of weights of queries covering $t_{\textsf{mx}}$\;
            \For{every pair of values $t = (v,v')$ with distance $d$}
                {
                Find the sum of weights $s_t$ of all queries covering $t$\;
                Find the query $\query=[a,b]$ that covers $t$ and $a$ is as large as possible while $b$ is as small as possible, i.e., the minimum bounding query\;

                $\qdist[\query] = \qdist[\query] + (s_{\textsf{mx}} - s_t)$\;
                }}
        \Return{\qdist}
\end{algorithm}

\section{Omitted Proofs}
\label{appendix:proofs}

\subsection*{Proof of Lemma~\ref{lemma:covering}}

\begin{proof}
For any $V \subseteq [N]^k$, each dimension $i \in [k]$ admits a minimum and a maximum among the projections of $V$ onto that dimension. Define $V^*$ by including, for each dimension $i$, the point in $V$ attaining the minimum and the point attaining the maximum along dimension $i$; this yields $|V^*| \leq 2k$ distinct points. 
Geometrically, these $2k$ points lie on the faces of the axis-aligned hyper-rectangle they define, with all remaining points of $V$ contained in its interior or on its boundary. 
Thus, any query covering all points in $V^*$ must contain this hyper-rectangle entirely and cover all of $V$.
\end{proof}


\subsection*{Proof of Theorem~\ref{theorem:2ktuples}}

\begin{proof}
    Assume, for the sake of contradiction, that $\rdist \neq \rdist'$.
    Since the response distributions differ, then there must be at least one response $\resp \in \responses$ whose probability in $\rdist$ is different from its probability in $\rdist'$, i.e., $\Pr_{\rdist}[\resp]\neq\Pr_{\rdist'}[\resp]$.
    
    Let $\resp_{\textsf{lrg}}$ be the response with the largest number of records whose probability differs in $\rdist$ and $\rdist'$.
    Notice that $\resp_{\textsf{lrg}}$ cannot be the empty response, since if $\rdist$ and $\rdist'$ differ on the empty $\resp_{\textsf{lrg}}$, then they must differ on at least one other response $\resp'$, which will be non-empty. 
    In that case, $|\resp'|>|\resp_{\textsf{lrg}}|$ which contradicts the fact that $\resp_{\textsf{lrg}}$ has the largest number of retrieved records.
    
    The set of values in $\values$ associated with the records in $\resp_{\textsf{lrg}}$ are referred to as $V_{\textsf{lrg}}$. 
    Assume w.l.o.g. that the probability of $\resp_{\textsf{lrg}}$ in $\rdist$ is greater than its probability in $\rdist'$, that is:   
    \begin{equation}
    \Pr_{\rdist}[\resp_{\textsf{lrg}}]>\Pr_{\rdist'}[\resp_{\textsf{lrg}}].
    \label{eq:assumption_in_proof}
    \end{equation}
    By Lemma \ref{lemma:covering}, there exists a subset of records $\resp^*_{\textsf{lrg}}\subseteq \resp_{\textsf{lrg}}$ whose values comprise a set $V^*_{\textsf{lrg}}$ such that $V^{*}_{\textsf{lrg}} \subseteq V_{\textsf{lrg}}$ and $|V^{*}_{\textsf{lrg}}| \leq 2k$. 
    Specifically, set $V_{\textsf{lrg}}^*$ contains the minimal and maximal values (among the values associated with $\resp_{\textsf{lrg}}$) for each dimension.
    
    Given that, $\db'$ satisfies the constraint programming induced by $T_{2k}$, we can infer that the records associated with the values in $V^*_{\textsf{lrg}}$ form 
   a subset  $\bigcap_{\rid \in \resp^*_{\textsf{lrg}}}\rid$ which is \emph{explicitly} encoded in \framework via $T_{2k}$. 
   Thus, since both $\db$ and $\db'$ satisfy the constraints induced by subsets $T_{2k}$ and since $\bigcap_{\rid \in \resp^*_{\textsf{lrg}}}\rid$ is one of them, it must be that the frequency of this  simultaneous retrieval is equal in $\rdist$ and $\rdist'$, that is:
 \begin{equation}
\label{eq:frequencies_equal}
f\!_{\db}\left(\bigcap_{\rid \in \resp^*_{\textsf{lrg}}} \rid\right) = f\!_{\db'}\left(\bigcap_{\rid \in \resp^*_{\textsf{lrg}}} \rid\right),
\end{equation}

where $f\!_{\db}$ inidcates that the frequency is calculated over the $\db$ scenario while $f\!_{\db'}$ inidcates that the frequency is calculated over the $\db'$ scenario. 
Since we are operating in the limit, each frequency of Equation (\ref{eq:frequencies_equal}) can be written as a sum of the probability of all responses that contain all records from $\resp^*_{\textsf{lrg}}$.

    Notice that the response $\resp_{\textsf{lrg}}$ will contribute to both the expansion of the left and right frequency terms of (\ref{eq:frequencies_equal}). 
But given the inequality $\Pr_{\rdist}[\resp_{\textsf{lrg}}]>\Pr_{\rdist'}[\resp_{\textsf{lrg}}]$, the contribution of term $\Pr_{\rdist}[\resp_{\textsf{lrg}}]$ is larger  towards $f\!_{\db}\left(\bigcap_{\rid \in \resp^*_{\textsf{lrg}}} \rid\right)$ than its counterpart $\Pr_{\rdist'}[\resp_{\textsf{lrg}}]$  towards $f\!_{\db'}\left(\bigcap_{\rid \in \resp^*_{\textsf{lrg}}} \rid\right)$. 
But since the two frequencies must be equal according to (\ref{eq:frequencies_equal}), there must be a different response, call it $\resp''$, containing all the records in $\resp^*_{\textsf{lrg}}$, such that $\resp''$ has a higher probability in $\rdist'$ than $\rdist$, i.e.,     $\Pr_{\rdist}[\resp'']<\Pr_{\rdist'}[\resp'']$ to counterbalance  $\Pr_{\rdist}[\resp_{\textsf{lrg}}]>\Pr_{\rdist'}[\resp_{\textsf{lrg}}]$ so that the summation~(\ref{eq:frequencies_equal}) is the same. 
We proceed with case analysis:

$\bullet$ \emph{Case where $|\resp''| > |\resp_{\textsf{lrg}}|$.} This contradicts the assumption that $\resp_{\textsf{lrg}}$ is the largest response whose probability differs in $\rdist$ and $\rdist'$.

$\bullet$ \emph{Case where $|\resp''| < |\resp_{\textsf{lrg}}|$.} Recall that for $\resp''$ to contribute towards $f\!_{\db'}\left(\bigcap_{\rid \in \resp^*_{\textsf{lrg}}} \rid\right)$, it has to contain all records from $\resp^*_{\textsf{lrg}}$. 
From Lemma~\ref{lemma:covering}, all responses that contain $\resp^*_{\textsf{lrg}}$ must also contain the internal values of $\resp_{\textsf{lrg}}$, thus, there can not be a response containing $\resp^*_{\textsf{lrg}}$ with smaller size than $\resp_{\textsf{lrg}}$. We can dismiss this case.

$\bullet$ \emph{Case where $|\resp''| = |\resp_{\textsf{lrg}}|$.} Recall that $\resp^*_{\textsf{lrg}}$ is the subset of records in $\resp_{\textsf{lrg}}$ which are maximal/minimal among $\resp_{\textsf{lrg}}$. 
If $\resp''$ contains these records, it must also contain the other records of $\resp_{\textsf{lrg}}$, since they are internal, which means that $\resp'' = \resp_{\textsf{lrg}}$.
If $\resp''=\resp_{\textsf{lrg}}$, then recall that we already assumed\footnote{We note here that if one changes the inequality of (\ref{eq:assumption_in_proof}) to go the opposite direction, this last case of the case analysis will again reach a contradiction; this time because we can not find a $\resp''$ such that $\Pr_{\rdist}[\resp''_{\textsf{lrg}}]>\Pr_{\rdist'}[\resp''_{\textsf{lrg}}]$.} from (\ref{eq:assumption_in_proof}) that $\Pr_{\rdist}[\resp_{\textsf{lrg}}]>\Pr_{\rdist'}[\resp_{\textsf{lrg}}]$ which means that it is not possible to have $\Pr_{\rdist}[\resp'']<\Pr_{\rdist'}[\resp'']$ which is what needed to balance out the sums and get (\ref{eq:frequencies_equal}) to hold.

Thus, no response containing $\resp^*_{\textsf{lrg}}$ exists which has greater probability in $\rdist'$ than $\rdist$, i.e., $\rdist = \rdist'$.
\end{proof}

\subsection*{Proof of Theorem \ref{theorem:tightbound}}
\begin{proof}
Let $k$ and $N$ be positive integers where $N \geq 6$.
Let $\db$ and $\db'$ be databases over the set of records $\rids = \rid_1,\ldots,\rid_{2k}$ and over the domain $\V=N^k$.
Let $\qdist$ be (initially) the uniform distribution over $\queries$, the query universe for domain $\V$.
For ease of exposition, we represent the query distribution as an assignment of weights to queries: each query $\query_i$ has a positive integer weight $w_i$, and the probability of query $\query_i$ is given by its weight divided by the sum of all query weights, i.e. $\frac{w_i}{\Sigma_{j \in [|\queries|]}w_j}$.
We initialize the distribution to be uniform by requiring that $w_i=\alpha$ for all $i \in [|\queries|]$, where $\alpha$ is a positive integer.

Next, we define two disjoint sets of values, $S=\{s_1,\ldots,s_{2k}\} \subset \V$ and $S'=\{s'_1,\ldots,s'_{2k}\} \subset \V$.
Intuitively, $S$ will be the values assigned to the records in database $\db$, and $S'$ will be the values assigned to the records in database $\db'$, so that $\db(\rid_i)=s_i$ and $\db'(\rid_i)=s'_i$.

Now we define the values in $S$ and $S'$, thereby defining the values of records in each database.
For $i \in [n]$, let $s_i$ be the $k$-vector with 1 on all dimensions, except for dimension $\ceil{\frac{i}{2}}$, on which it takes value 1 if $i$ is odd and 3 if $i$ is even.
For $i \in [n]$, let $s'_i$ be the $k$-vector with $N-1$ on all dimensions, except for dimension $\ceil{\frac{i}{2}}$, on which it takes value $N$ if $i$ is odd and $N-2$ if $i$ is even.
For every value $s_i \in S$, we say that it has a corresponding value $s'_i \in S'$, and that every subset $\{s_a,s_b,\ldots,s_c\} \subset S$ has a corresponding subset $\{s'_a,s'_b,\ldots,s'_c\} \subset S'$.
Note that both databases have the following interesting property: for $i \in [2k]$, each set of $i$ records in $\db$ (resp. $\db'$) has a minimum-bounding query (MBQ) that covers no other records in $\db$ (resp. $\db'$).

Next, we adjust the query distribution by altering the weights of queries, according to the following procedure: For $i \in [2k]$, if $i$ is odd, \textbf{decrease} the weight of the MBQ of every $i$-sized subset of $S$ by $\delta$, and if $i$ is even, \textbf{increase} the weight of the MBQ of every $i$-sized subset of $S$ by $\delta$.

At the end of this procedure, the following holds:
\begin{enumerate}
    \item The probability of simultaneously querying all values in $S$, is $\delta$ greater than the probability of simultaneously querying all values in $S'$, i.e. $\Pr[S] > \Pr[S']$.
    \item The probability of simultaneously querying any strict subset of $S$ is equal to the probability of simultaneously querying its corresponding subset in $S'$, i.e. $\Pr[S_*] = \Pr[S'_*]$ for all strict non-empty subsets $S_* \subset S$.
\end{enumerate}

To see that the first statement is true, notice that the set of queries that cover $S$ and the set of queries that cover $S'$ begin with a uniform weighting, and that the only such query altered by the procedure is the MBQ of $S$.
Crucially, this query does not cover $S'$ since, by construction, every value in $S'$ dominates every value in $S$, which means that their MBQ's do not cover any of the same values.

To see that the second statement is true, consider a strict, non-empty subset $S_* \subset S$.
For each subset of $S$ containing $S_*$, the corresponding MBQ is increased by $\delta$ if the subset's size is even, and decreased by $\delta$ if its size is odd.
For any $i$ between $|S_*|$ and $|S|$, there are $\binom{|S|-|S_*|}{i}$ subsets of size $i$ that contain $S$.
Thus, the total weight added to queries covering $S_*$ is given by
\[\delta\sum_{i=0}^{|S|-|S_*|} \binom{|S|-|S_*|}{i}(-1)^{i}\]
which equals zero:
\[0 = \delta (1-1)^{|S|-|S_*|}\]
\[= \delta \sum_{i=0}^{|S|-|S_*|} \binom{|S|-|S_*|}{i} 1^{|S|-|S_*|-i}(-1)^i\]
\[=\delta \sum_{i=0}^{|S|-|S_*|} \binom{|S|-|S_*|}{i}(-1)^{i}.\] 
Thus, for all strict subsets $S_* \subset S$, it holds that $\Pr[S_*] = \Pr[S'_*]$.

To complete the proof, we assume that $\db$ is the real database, over which the leakage occurs.
The database $\db'$ \textbf{will not satisfy} the constraints induced by $T_{2k}$, since $\db$ and $\db'$ differ on $f(\rid_1 \cap \rid_2 \cap \ldots \cap \rid_{2k})$, which will be checked by $T_{2k}$'s $2k$-sized expression.
However, $\db'$ \textbf{will} satisfy the constraints induced by $T_{2k-1}$, since every set containing strictly fewer than $2k$ records will have the same frequency of being simultaneously queried in $\db$ and $\db'$, and $T_{2k-1}$ has no expressions larger than $2k-1$.

\end{proof}


\subsection*{Proof of Theorem \ref{thm:vc}}

Before proceeding to the formal proof of Theorem \ref{thm:vc}, we build intuition regarding what is required to shatter a set of responses and (in the process of building intuition) prove a Lemma used for the main proof.
Consider an example universe of five encrypted records, $\rids=\{\rid_1,\rid_2,\rid_3,\rid_4,\rid_5\}$. 
Suppose our observable responses are $\responses_\db$, and we want to see if a subset of two responses, $S=\{\resp_1,\resp_2\}\subseteq \responses_\db$, can be shattered. Shattering requires us to produce all possible subsets of $S: \emptyset, \{\resp_1\}, \{\resp_2\}, \text{and} \{\resp_1,\resp_2\}$.

Consider $\resp_1=\{\rid_1,\rid_2,\rid_3\}$ and $\resp_2=\{\rid_2,\rid_3,\rid_4\}$. 
We shatter $S$ as follows:

\begin{itemize}
    \item To isolate $\emptyset$, we need an $I$ present in neither. For example, $I=\{\rid_5\}$. $T(\{\rid_5\}) \cap S = \emptyset$.
    \item To isolate $\{\resp_1\}$, we need an $I$ in $\resp_1$ but not $\resp_2$. For example, $I=\{\rid_1\}$. $T(\{\rid_1\}) \cap S = \{\resp_1\}$.
    \item To isolate $\{\resp_2\}$, we need an $I$ in $\resp_2$ but not $\resp_1$. For example, $I=\{\rid_4\}$. $T(\{\rid_4\}) \cap S = \{\resp_2\}$.
    \item To isolate $\{\resp_1, \resp_2\}$, we need an $I$ in both. For example, $I=\{\rid_2\}$. $T(\{\rid_2\}) \cap S = \{\resp_1, \resp_2\}$.
\end{itemize}

In this case, $S$ is successfully shattered. 
An important note (that we will generalize in the upcoming Lemma) is that for each response in $\resp_i \in S$, there existed a \textit{unique} subset of encrypted records such that $T(I) \cap S = \resp_j$. 
To consider what an ``unshatterable'' set would be, consider the responses $S'=\{\resp_1, \resp_3\}$, where $\resp_1=\{\rid_1, \rid_2, \rid_3\}$ and $\resp_3=\{\rid_1, \rid_2\}$.
Notice that $\resp_3\subset \resp_1$. 
If we try to isolate the subset $\resp_3$, we must find a subset of encrypted records $I \subseteq \rids$ that is contained in $\resp_3$ but not contained in $\resp_1$. 
However, because $\resp_3$ is entirely contained in $\resp_1$, every possible subset $I \subseteq \resp_3$ is inherently also a subset of $\resp_1$. 
Therefore, any range $T(I)$ that captures $\resp_3$ will unavoidably capture $\resp_1$ as well. 
Hence, we can never produce the subset $\{\resp_3\}$ alone, and $S$ cannot be shattered.

We generalize the previous two points into the following Lemma:

\begin{lemma}
\label{lemma:shattering}
Let $S \subseteq \responses_\db$ be a set of $d$ responses shattered by a range pair $(\texttt{R}, \responses_\db)$ (Definition \ref{def:range_pair}). Then the following two properties must hold:
\begin{enumerate}
    \item For any pair of distinct responses $\resp_i, \resp_j \in S$, neither is a subset of the other ($\resp_i \not\subseteq \resp_j$ and $\resp_j \not\subseteq \resp_i$).
    \item Every individual response $\resp \in S$ must contain at least $2^{d-1}$ distinct subsets of encrypted records.
\end{enumerate}
\end{lemma}

\begin{proof}
We prove the first property first: Assume for the sake of contradiction that there exist $\resp_i, \resp_j \in S$ such that $\resp_i \subseteq \resp_j$. 
To shatter $S$, we must be able to isolate the subset $\{\resp_i\}$ without including $\resp_j$. 
This requires finding a subset of encrypted records $I$ such that $\resp_i \in T_{\responses_\db}(I)$ and $\resp_j \notin T_{\responses_\db}(I)$. 
By the definition of our range space, $\resp \in T_{\responses_\db}(I) \iff I \subseteq \resp$. 
Therefore, we need $I \subseteq \resp_i$ and $I \not\subseteq \resp_j$. 
However, since $\resp_i \subseteq \resp_j$, any subset of encrypted records $I$ contained in $\resp_i$ is necessarily also contained in $\resp_j$. 
Thus, it is impossible to create $\{\resp_i\}$, contradicting the premise that $S$ is shattered.

To prove the second property, we focus on a single response $\resp \in S$. 
Since $|S| = d$ and $S$ is shattered, the projection of $\resp$ on $S$ yields all $2^d$ possible subsets of $S$. Exactly half of these subsets, or $2^{d-1}$, must include the response $\resp$. 
Let $K \subseteq S$ be one of these $2^{d-1}$ subsets where $\resp \in K$. 
To successfully isolate $K$, there must exist a unique subset of encrypted records $I$ such that $T_{\responses_\db}(I) \cap S = K$. 
Since $\resp \in K$, it follows that $\resp \in T_{\responses_\db}(I)$, which implies $I \subseteq \resp$. Because there are $2^{d-1}$ distinct subsets $K$ that include $\resp$, each requiring its own distinct $I$, the response $\resp$ itself must contain at least $2^{d-1}$ distinct subsets of encrypted records.
\end{proof}

The previous Lemma will help us in the next few paragraphs, where we show how a problem known as the \textit{Set Union Knapsack Problem} (SUKP) (defined below) can produce a bound on the VC-dimension of our data. On a high level, the SUKP will be helpful as it will be used to optimally pack the required 'distinct subsets' from Lemma \ref{lemma:shattering}. 
Once the SUKP finds an optimal solution, we will have a bound on the largest subset that can possibly contain that many distinct identifiers. Now we formally define the SUKP:

\begin{definition}[SUKP]
    Let $U = \{a_1, \dots, a_{\ell}\}$ be a set of elements and let $\mathcal{S} = \{A_1, \dots, A_k\}$ be a set of subsets of $U$, i.e. $A_i \subseteq U$ for $1 \le i \le k$. Each subset $A_i$, $1 \le i \le k$, has an associated non-negative profit $\rho(A_i) \in \mathbb{R}^{+}$, and each element $a_j$, $1 \le j \le \ell$ has an associated non-negative weight $w(a_j) \in \mathbb{R}^{+}$. Given a subset $\mathcal{S}' \subseteq \mathcal{S}$, we define the profit of $\mathcal{S}'$ as $P(\mathcal{S}') = \sum_{A_i \in \mathcal{S}'} \rho(A_i)$. Let $U_{\mathcal{S}'} = \bigcup_{A_i \in \mathcal{S}'} A_i$. We define the weight of $\mathcal{S}'$ as $W(\mathcal{S}') = \sum_{a_j \in U_{\mathcal{S}'}} w(a_j)$. Given a non-negative parameter $c$ that we call \textit{capacity}, the \textit{Set-Union Knapsack Problem (SUKP)} requires to find the set $\mathcal{S}^* \subseteq \mathcal{S}$ which maximizes $P(\mathcal{S}')$ over all sets $\mathcal{S}'$ such that $W(\mathcal{S}') \le c$.
\end{definition} 

We explain how our terminology maps to the SUKP:

\begin{itemize}
    \item \textbf{The set of subsets $\mathcal{S}$} consists of all the unique subsets of encrypted records that appear within the responses in and observed sample $\multiset$. Every subset has a unit profit of $p(A_i) = 1$.
    \item \textbf{The universe of elements $U$} consists of the unique encrypted records that make up the responses in $\multiset$. Every record has a unit weight of $w(u) = 1$.
    \item \textbf{The capacity $c$} is set to $\ell$, where $\ell = |\{a \in \rids \mid \exists A \in M \text{ s.t. } a \in A\}|$. Simply put, $\ell$ is the total number of unique records present across our entire multiset of responses.
\end{itemize}

We see that the SUKP asks: What is the maximum number of these subsets we can select such that the union of their records contains no more than $\ell$ unique records? The optimal profit $q$ is the total count of those subsets containing unique records. This establishes $q$ as the absolute limit on how many distinct subsets of records any single response bounded by length $\ell$ could theoretically hold (and hence we use the second point from Lemma \ref{lemma:shattering} to achieve our bound).

We provide the formal proof:

\begin{proof}
Let $d = \mathrm{VC}(\texttt{R})$. By definition, this implies there exists a set of observed responses $S \subseteq M$ of cardinality $d$ that is shattered by $\texttt{R}$.

By Property 2 of Lemma \ref{lemma:shattering}, an arbitrary response $\resp \in S$ must physically contain at least $2^{d-1}$ distinct subsets of encrypted records to successfully isolate the necessary shattered subsets.

Let $\ell$ be the total number of unique encrypted records present across $\multiset$. Since the shattered response $\resp$ is drawn from $\multiset$, the total number of unique records comprising $\resp$ cannot exceed $\ell$. 

Let $q$ be the optimal profit of the SUKP associated with $\multiset$ and capacity $\ell$. As established, the optimal profit $q$ represents the absolute maximum number of distinct subsets of encrypted records that any single response of length at most $\ell$ can possibly contain. 

Because the combinatorially required number of subsets ($2^{d-1}$) cannot exceed the maximum possible physical capacity ($q$), we obtain the inequality:
\[ 2^{d-1} \le q \]

Taking the base-2 logarithm of both sides yields:
\[ d - 1 \le \log_2 q \]
\[ d \le \log_2 q + 1 \]

Since the VC-dimension $d$ represents the cardinality of a set, it must be an integer. We can therefore tighten the bound using the floor function:
\[ d \le \lfloor \log_2 q \rfloor + 1 \]

Letting $b = \lfloor \log_2 q \rfloor + 1$, we conclude that $\mathrm{VC}(\texttt{R}) \le b$.
\end{proof}

\subsection*{Proof of Theorem~\ref{theorem:flat_singleton}}

\begin{proof}
    First, Algorithm \ref{algo:singleton} identifies a value $v_{\textsf{mx}}$ with the highest probability, and computes the total weight $s_{\textsf{mx}}$ of queries covering it.
    In lines 2 through 5, it increases, for each value $v_i$, the weight of query $[v_i,v_i]$, until the total weight of queries covering $v_i$ is equal to $s_{\textsf{mx}}$.
    Since the weight $w_{q_i}$ of query $q=[v_i,v_i]$ contributes only to the probability of $v_i$, increasing $w_{q_i}$ does not affect the probabilities of other values.
    At the end of the process, although the total sum of query weights has increased, the total weights of queries covering any value $v$ is $s_{\textsf{mx}}$.
    Thus, after normalizing, we have that $\Pr_{\qdist}[e_v] = \Pr_{\qdist}[e_{v'}]$ for all $v,v' \in \V$.
\end{proof}

\subsection*{Proof of Theorem~\ref{theorem:alg2}}

\begin{proof}
    We first establish a few facts about the geometry of queries over $\V=[N]^k$.
    Recall that we define the size of a query $\query=[a,b]$ as the $L_1$ distance between $a$ and $b$, $dist(a,b)$.
    \begin{enumerate}
        \item For every pair of values $v,v'$, there is a smallest query $\query$ covering $v,v'$, called the \emph{minimum bounding query}, and the size of $q$ is equal to $dist(v,v')$.
        \item A query $q=[a,b]$ such that $dist(a,b)=d$ does not cover any pairs of values of distance greater than $d$.
    \end{enumerate}

    Algorithm \ref{algo:equidistant} iterates over all pairs of values in order of decreasing distance $d$.
    For each $d$, it finds the pair $t_{\textsf{mx}}$ of distance $d$ with the greatest probability.
    It then increases, for each pair of distance $d$, the weight of its minimum bounding query. 
    At the end of each iteration $d$, the pairs of distance $d$ are guaranteed to have the same probability.
    Furthermore, in each iteration $d$, Algorithm \ref{algo:equidistant} only alters the weights of queries with endpoints with distance $d$.
    Thus, it only alters the probabilities of pairs of values with distance $d$ or less.
    Since every iteration $d$ ends with $d$-distance pairs having the same probability, and future iterations $d-1,\ldots$ will not affect the probability of distance-$d$ pairs, it holds that Algorithm \ref{algo:equidistant} terminates by outputting a distribution $\qdist$ in which all distance-$d$ pairs have the same probability, for $d \in \{0,1,\ldots,k(N-1)\}$.
\end{proof}

\clearpage 

\section{Supplementary Plots}
\label{sec:appendix_plots}

\begin{figure}[H]
    \centering

    \includegraphics[width=0.9\linewidth]{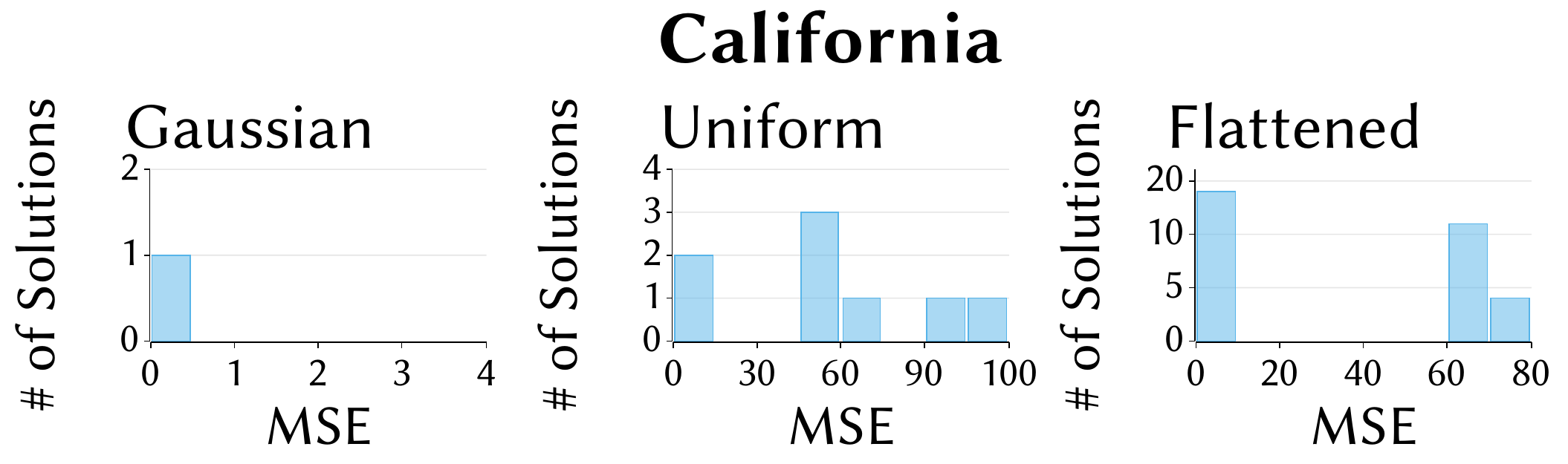}

    \includegraphics[width=0.9\linewidth]{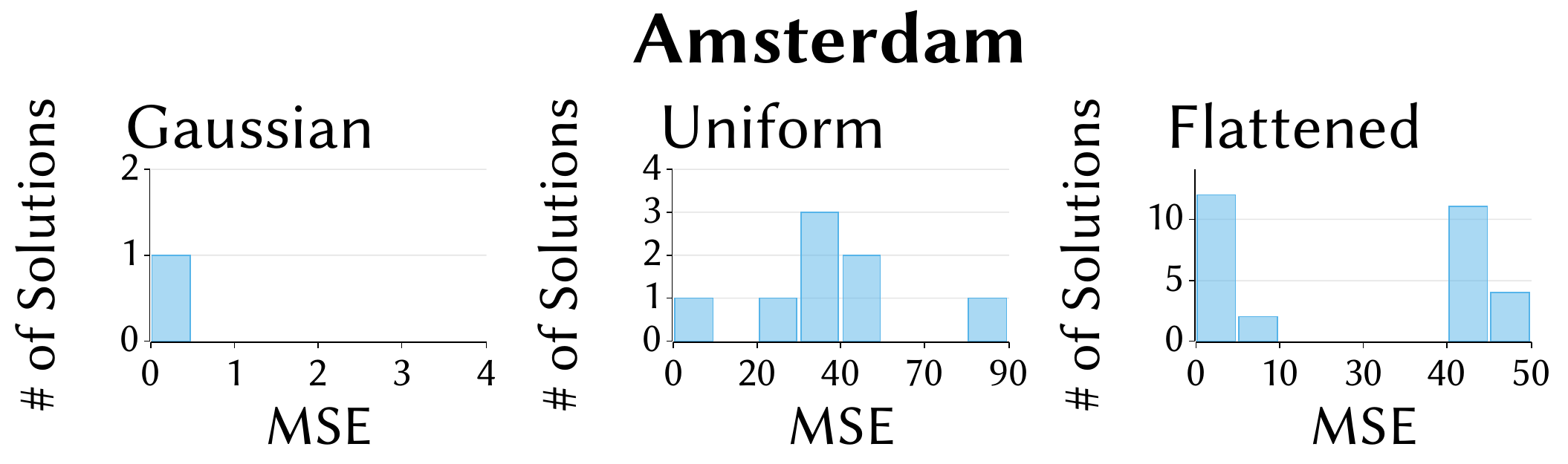}
    
    \caption{Evaluation of \framework under Gaussian, Uniform, and (the newly introduced) ``Flattened'' query distribution from Section~\ref{sec:flatten_tuples} that increases the reconstruction space.}
    \label{fig:appenidx_flat}
\end{figure}

\begin{figure}[H]
    \centering
    
  \includegraphics[width=\linewidth]{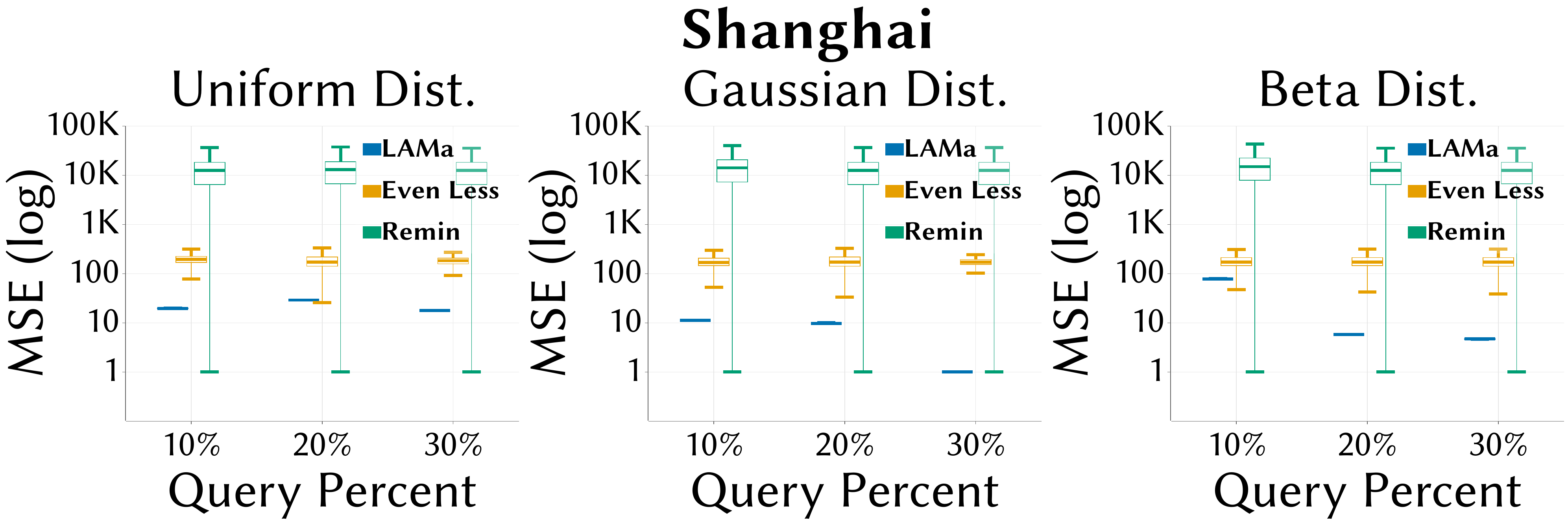}
    
     \includegraphics[width=\linewidth]{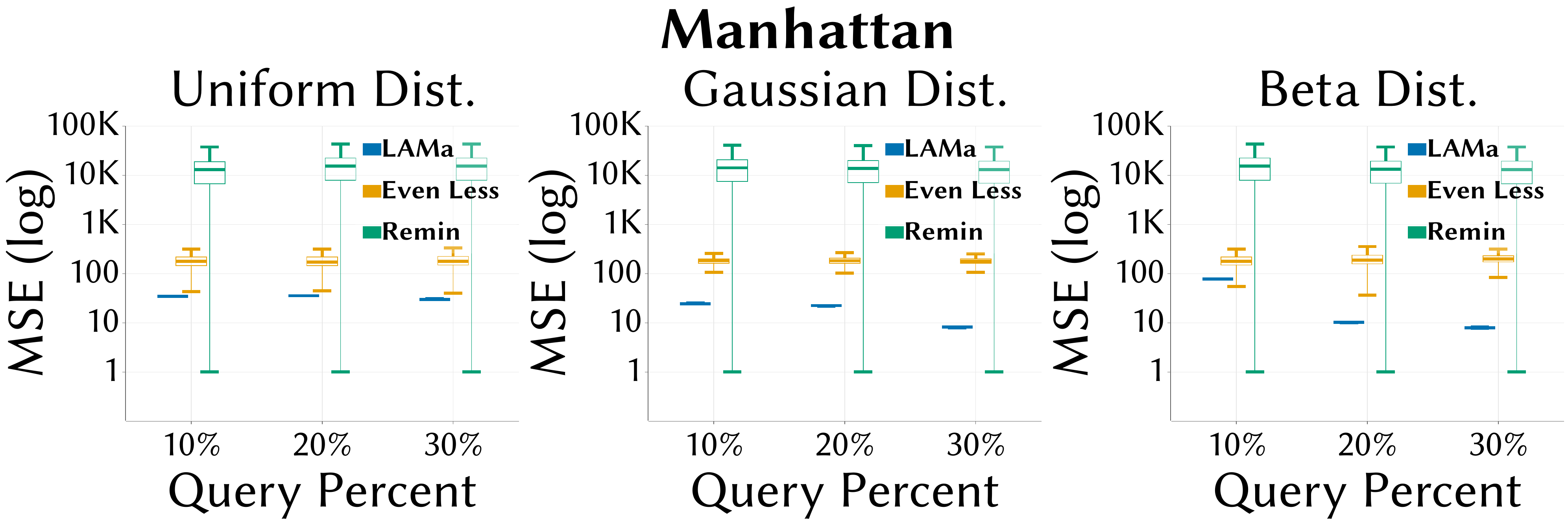}

        \caption{Solutions sampled from the total reconstruction space at each query percentage.}
    \label{fig:appendix_big_combined_figure}
\end{figure}

\begin{figure}[H]

    \centering
    
     \includegraphics[width=\linewidth]{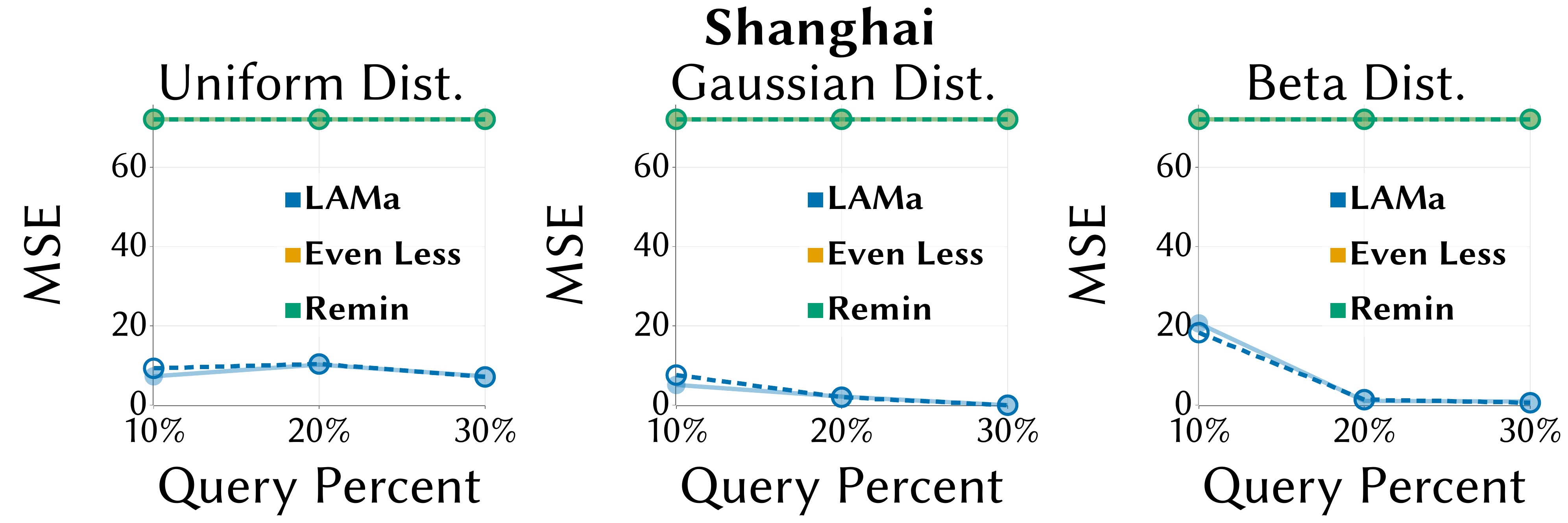}
    
     \includegraphics[width=\linewidth]{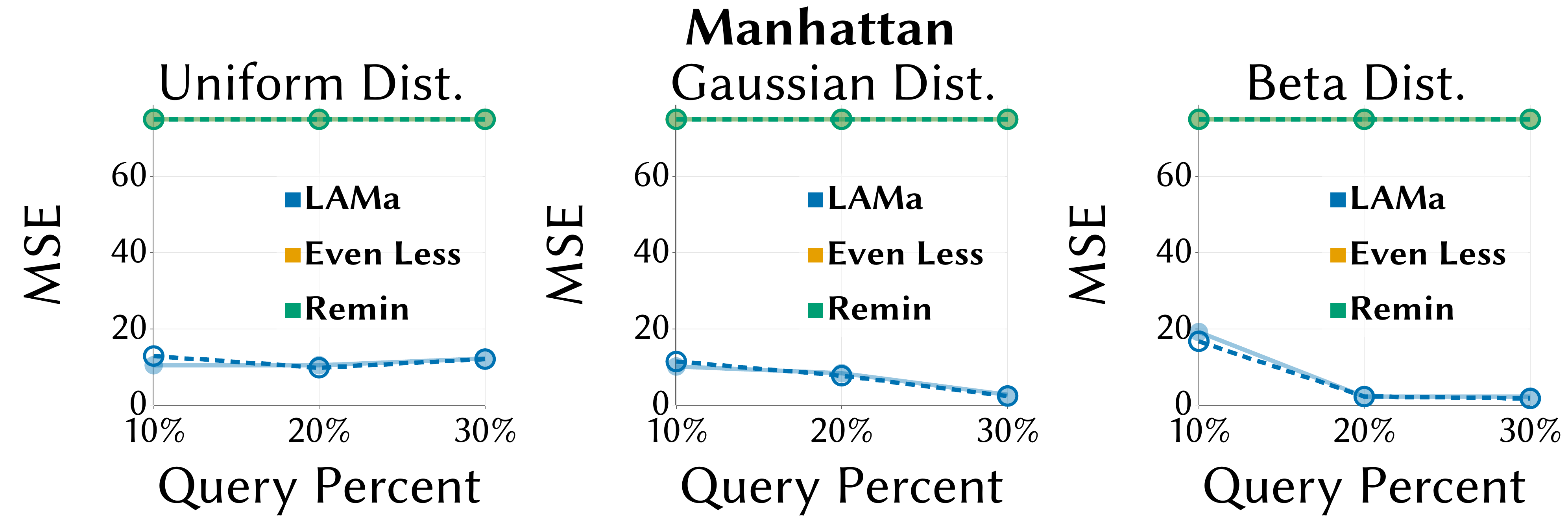}

    \caption{MSE of "worst-case" points as defined by a convex hull over the reconstruction space. Dashed line represents points placed in the middle of the largest diameter, regular line represents points placed in the barycenter.}
    \label{fig:appendix_big_combined_figure}
\end{figure}

\end{document}